
\documentclass[11pt]{article}
\usepackage[utf8]{inputenc}
\usepackage[T1]{fontenc}
\usepackage[colorlinks=true]{hyperref}
\hypersetup{urlcolor=blue, citecolor=red, linkcolor=blue}

\usepackage{amsmath,amssymb}
\usepackage[active]{srcltx}
\usepackage[all]{xy}
\input diagxy
\usepackage{xcolor}
 \textwidth= 16,1cm \textheight= 24,5cm
 \hoffset=-1,5cm \voffset=-2,5cm

\usepackage{enumerate}

\usepackage{mathrsfs}

\font\fr=eufm10 scaled \magstep 1 
\def\beq{\begin{equation}}
\def\eeq{\end{equation}}
\def\bea{\begin{eqnarray}}
\def\eea{\end{eqnarray}}
\def\beann{\begin{eqnarray*}}
\def\eeann{\end{eqnarray*}}
\def\beasn{\begin{sneqnarray}}
\def\eeasn{\end{sneqnarray}}
\def\ben{\begin{enumerate}}
\def\een{\end{enumerate}}
\def\bit{\begin{itemize}}
\def\eit{\end{itemize}}

\def\derpar#1#2{\frac{\partial{#1}}{\partial{#2}}}

\def\bm#1{\hbox{\boldmath$#1$}}

\def\moment#1#2#3{{#1}_{#2}, \ldots, {#1}_{#3}}
\def\qed{\ifvmode\Realemovelastskip\fi
{\unskip\nobreak\hfil\penalty50\hbox{}\nobreak\hfil \hbox{\vrule
height1.2ex width1.2ex}\parfillskip=0pt \finalhyphendemerits=0
\par\smallskip}}

\def\vf{\text{\fr X}}
\def\df{{\mit\Omega}}
\def\Lag{{\cal L}}
\def\L{{\cal L}}

\def\d{{\rm d}}

\def\Nat{\mathbb{N}}

\def\Real{\mathbb{R}}
\DeclareMathOperator{\rk}{rank}
\newcommand{\Cinfty}{\mathscr{C}^\infty}
\def\Tan{{\rm T}}
\def\inn{\mathop{i}\nolimits}

\newcommand*{\dd}{\mathrm{d}}


\makeatletter
\DeclareOldFontCommand{\rm}{\normalfont\rmfamily}{\mathrm}
\DeclareOldFontCommand{\sf}{\normalfont\sffamily}{\mathsf}
\DeclareOldFontCommand{\tt}{\normalfont\ttfamily}{\mathtt}
\DeclareOldFontCommand{\bf}{\normalfont\bfseries}{\mathbf}
\DeclareOldFontCommand{\it}{\normalfont\itshape}{\mathit}
\DeclareOldFontCommand{\sl}{\normalfont\slshape}{\@nomath\sl}
\DeclareOldFontCommand{\sc}{\normalfont\scshape}{\@nomath\sc}
\makeatother

\usepackage[english]{babel}
\usepackage[utf8]{inputenc}
\usepackage[T1]{fontenc}
\usepackage{eucal}
\usepackage{microtype}
\usepackage{csquotes}
\usepackage{amsthm}
\usepackage{mathtools}
\mathtoolsset{showonlyrefs}
\usepackage{amssymb,amsfonts,stmaryrd}
\usepackage{tikz}
\usetikzlibrary{cd}
\usepackage{xparse}

\usepackage{todo}

\RequirePackage{hyperref}

\theoremstyle{plain}
\newtheorem{theorem}{Theorem}[section]
\newtheorem*{theorem*}{Theorem}
\newtheorem{lemma}[theorem]{Lemma}
\newtheorem*{lemma*}{Lemma}
\newtheorem{proposition}[theorem]{Proposition}
\newtheorem*{proposition*}{Proposition}
\newtheorem{corollary}[theorem]{Corollary}
\newtheorem*{corollary*}{Corollary}
\newtheorem{definition}[theorem]{Definition}
\newtheorem*{definition*}{Definition}

\newtheorem*{example*}{Example}
\newtheorem{remark}[theorem]{Remark}
\newtheorem*{remark*}{Remark}

\newtheorem*{conjecture*}{Conjecture}

\newtheorem*{problem*}{Problem}


\newcommand*{\R}{\mathbb{R}}




\renewcommand{\L}{\mathcal{L}}
\newcommand{\bfX}{\mathbf{X}}

\newcommand{\X}{\mathfrak{X}}
\renewcommand{\d}{\mathrm{d}}
\newcommand*{\bd}{\overline{\mathrm{d}}}
\newcommand{\T}{\mathrm{T}}
\newcommand{\cT}{\mathrm{T}^\ast}
\newcommand{\Lie}{\mathscr{L}}

\newcommand{\parder}[2]{\frac{\partial #1}{\partial #2}}
\newcommand{\dparder}[2]{\dfrac{\partial #1}{\partial #2}}
\newcommand{\tparder}[2]{\partial #1/\partial #2}
\newcommand{\parderr}[3]{\frac{\partial^2 #1}{\partial #2\partial #3}}

\DeclareMathOperator{\Ima}{Im}


\pagestyle{myheadings}

\parskip=5pt

\title{MULTICONTACT FORMULATION \\
FOR NON-CONSERVATIVE FIELD THEORIES}
\author{\sffamily 
\sc $^a$Manuel de Le\'on
\thanks{mdeleon@icmat.es\quad ORCID: 0000-0002-8028-2348}\, ,
$^b$Jordi Gaset
\thanks{jordi.gaset@unir.net\quad ORCID: 0000-0001-8796-3149}\, ,
\sc $^c$Miguel C. Mu\~noz-Lecanda
\thanks{miguel.carlos.munoz@upc.edu\quad ORCID: 0000-0002-7037-0248} \, , \\
\sc $^b$Xavier Rivas
\thanks{xavier.rivas@unir.net\quad ORCID: 0000-0002-4175-5157}\, ,
\sc $^c$Narciso Rom\'an-Roy
\thanks{narciso.roman@upc.edu\quad ORCID: 0000-0003-3663-9861}\,.
\\[1ex]
\normalsize\itshape\sffamily 
$^a$Instituto de Ciencias Matem\'aticas,
Consejo Superior de Investigaciones Cient\'ificas\\
\normalsize\itshape\sffamily 
and Real Academia de Ciencias, Madrid, Spain.
\\[1ex]
\normalsize\itshape\sffamily 
$^b$Escuela Superior de Ingeniería y Tecnología,
Universidad Internacional de La Rioja,
Logroño, Spain.
\\[1ex]
\normalsize\itshape\sffamily 
$^c$Department of Mathematics,
Universitat Polit\`ecnica de Catalunya,
Barcelona, Spain.
}


\begin{document}

\maketitle

\begin{abstract}
A new geometric structure inspired by multisymplectic and contact geometries, which we call {\sl multicontact structure}, is developed to describe non-conservative classical field theories. Using the differential forms that define this multicontact structure as well as other geometric elements that are derived from them while assuming certain conditions, we can introduce, on the multicontact manifolds, the variational field equations  
which are stated using 
sections, multivector fields, and Ehresmann connections on the adequate fiber bundles. Furthermore, it is shown how this multicontact framework can be adapted to the jet bundle description of classical field theories; the field equations are stated in the Lagrangian and the Hamiltonian formalisms both in the regular and the singular cases.
\end{abstract}

\noindent\textbf{Keywords:}
Classical field theory, Lagrangian and Hamiltonian formalism, 
non-conservative system, multisymplectic structure, contact structure.

\noindent\textbf{MSC\,2020 codes:}
{\sl Primary:}
70S05, 
70S10,
53D10
35R01. \\
\indent\indent\indent\indent\indent
{\sl Secondary:}
35Q99, 53C15, 53Z05, 58A10, 70G4.

\pagestyle{myheadings}
\markright{\small\itshape\sffamily 
{\rm M. de Le\'on} {\it et al} ---
Multicontact formulation
for non-conservative field theories.}

\newpage

{\setcounter{tocdepth}{2}
\def\baselinestretch{1}
\small
\def\addvspace#1{\vskip 1pt}
\parskip 0pt plus 0.1mm
\tableofcontents
}

\section{Introduction}

Classical field theories are a recurring theme in mathematical physics. 
The field equations are obtained in a rather intuitive way by making use of the calculus of variations, 
and that is the approach developed by T. de Donder \cite{dD-1935} 
who extended the Hamiltonian formulation for mechanics due to E. Cartan \cite{Ca-22}. 
This theory was discussed later by H. Weyl \cite{We-35} so that the theory was known as the De Donder--Weyl theory.
Later, T. Lepage \cite{Le-42} gave a more general approach.
The introduction of the notions of fiber bundles and connections by C. Ehresmann \cite{Eh1,Eh2,Eh3} 
provided the additional tools for developing the geometrical arena for a further step in the study of classical field theories.

One of the major objectives of the researchers was the possibility of extending the symplectic formalism, so successful in Hamiltonian mechanics, to the case of field theories in a covariant way. 
Let us remember that the introduction of symplectic geometry in the study of mechanics has meant a spectacular advance in the knowledge of its dynamics \cite{AM-78,Ar,LM-87}. 
It suffices to mention the study of the momentum map and the symplectic reduction theorem, 
the notions of Lagrangian submanifolds allowing a geometric interpretation of the dynamics, 
the coisotropic reduction theorem, the use of Lie groups, the introduction of geometric integrators, 
or more recently, the use of Lie groupoids, which clarify the discrete formulations of mechanics 
(both in their Hamiltonian and Lagrangian settings).

The first geometric approach is the so-called {\sl multisymplectic geometry}. 
This approximation is due to three independent groups, and is a natural extension of symplectic geometry. 
On the one hand, the seminar coordinated by W.M. Tulczyjew in Warsaw \cite{Kij1973,Kij1979},
the developments in USA by H. Goldschmidt and S. Sternberg \cite{GS-73}; and those of Pedro L. Garc\'ia-P\'erez in Spain \cite{Gc-73}.  
The multisymplectic extension is really natural: if the canonical symplectic form on the cotangent fibration 
$\Tan^*M$ of an arbitrary manifold $M$ is the differential of the canonical $1$-form (the {\sl Liouville form\/}), 
it suffices to think that the sections of the cotangent fibration are the differential $1$-forms on $M$, 
and that the vector bundle of $1$-forms $\Lambda^1\Tan^*M$ is nothing but $\Tan^*M$. 
Therefore, it suffices to extend the definition to the vector bundle of $k$-differential forms, $\Lambda^k\Tan^*M$, 
and the differential of the canonical form here is precisely a multisymplectic form.

In the case of classical field theories, one starts with a fibration 
$\pi\colon E \longrightarrow M$, where $M$ represents spacetime, and the local sections are the fields. 
In the Lagrangian formalism, one considers the fiber bundle $J^1\pi$ of local sections, where the Lagrangian density is given 
(if the field theory is of higher order, we need higher order jets bundles \cite{LR-85,book:Saunders89}).
The Hamiltonian formulation considers the $\Lambda^m\Tan^*E$ fibration of $m$-differential forms on $E$ 
(where $m=\dim M$), and two special vector fiber subbundles: $\Lambda_1^m\Tan^*E$ and $\Lambda_2^m\Tan^*E$, 
with forms that vanish when one or two arguments are vertical with respect to the $\pi$ fibration, respectively. 
The space of generalized momenta is then $J^{1*}\pi=\Lambda_2^m\Tan^*E/\Lambda_1^m\Tan^*E$. 
The Hamiltonian density determines a section of the epimorphism of vector bundles:
$h\colon J^{1*}\pi\longrightarrow \Lambda_2^m\Tan^*E$.

Since this is the natural framework for obtaining the field equations, 
the geometric model was available but the geometric structure itself had not been studied. 
M.J. Gotay proposed an abstract definition of multisymplectic form \cite{Go-91}, 
which was later studied in detail by 
several authors \cite{CCI-91,Ca96a,CIL99,IEMR-2012}, and in the paper by 
G. Martin \cite{Ma-88} a characterization of when a multisymplectic manifold is locally diffeomorphic to the model is given 
(see also \cite{LMS-2003}). 
In recent years, the research on these topics has continued in order to extend the results already known for symplectic manifolds.

Another geometric model to describe field theories is also a natural extension of symplectic geometry. 
It consists in considering the Whitney sum of $k$ copies of the cotangent bundle of a manifold $M$, namely $\oplus^k \T^*M$, 
and the family of $k$ canonical forms obtained by lifting the natural symplectic form of the cotangent bundle to the sum; 
this defines what is called  a {\sl $k$-symplectic structure}. 
Given a Hamiltonian function in that space (that is, which does not depend on the spacetime variables), this geometric framework allows us to set the Hamiltonian field equations.
The Lagrangian formalism can also be constructed on the manifold
$\oplus^k \Tan M$.
Finally, when the Lagrangian or the Hamiltonian functions depend explicitly on the spacetime variables,
the Lagrangian and the Hamiltonian formalisms are built on the manifolds
$\Real^m\times(\oplus^k \T M)$
and $\Real^m\times(\oplus^k \T^*M)$,
with $m=\dim M$, respectively, and this is the so-called
{\sl $k$-cosymplectic formulation} of first-order field theories.
(See \cite{LeSaVi2016} and references therein, for details on all these constructions). Finally, we have to mention the so-called {\sl polysymplectic formulations} of classical field theories (see, for instance, \cite{GMS-97,Ka98}).

However, there are Lagrangian and Hamiltonian densities that depend not only on the spacetime variables, 
the fields and their derivatives with respect to the spacetime coordinates (the `multivelocities', or the corresponding `multimomenta'), 
but also on certain dissipation parameters. 
These parameters are in fact a version of the action provided by the Lagrangian, 
and the corresponding equations are of non-conservative nature. 
Indeed, these field theories provide non-conservative instead of conservative quantities as in the usual case.

After agreeing on the field equations by means of a variational principle (the so-called Herglotz principle \cite{He-1930,Her-1985}), 
the aim was to identify the underlying geometrical structures. In mechanics, the framework is the {\sl contact geometry}
\cite{Banyaga2016,Geiges2008,Go-69,Kholodenko2013,LM-87} 
whose canonical model is the cotangent fibration $\Tan^*Q \times \mathbb{R}$ 
which possesses such a canonical structure, $\eta_Q = \d s - \theta_Q$, where $\theta_Q$ is the Liouville form on $\T^*Q$. 
The Hamiltonian systems on this scenario are called {\sl contact Hamiltonian systems}, and provide indeed non-conservative dynamical equations
\cite{Bravetti2017,BCT-2017,BLMP-2020,CG-2019,LGLMR-2021,LGMMR-2020,GGMRR-2019b,GG-2022,GG-2022b,Lainz2018,LIU2018,RiTo-2022}. 
The same happens in the corresponding Lagrangian picture of mechanics
\cite{CCM-2018,LGLMR-2021,LGMMR-2020,DeLeon2019,GGMRR-2019b,GG-2022b}. It is worth pointing out that contact geometry allows to study more systems than just dissipative ones \cite{LR-2022}.

So, for the case of non-conservative field theories, 
it was natural to consider a construction similar to the $k$-symplectic or the $k$-cosymplectic formulations. 
This was developed by some of the authors in \cite{GGMRR-2019,GGMRR-2020,GRR-2022,Ri-2022}.
Other less-general approaches to these kinds of geometric structures were also introduced in \cite{Bo-96,Almeida-2018,Fi2022,Mo-2008,TV-2008}.
(See also \cite{ACGL-2018,BH-2005} for other generalizations of contact geometry
related with polysymplectic geometry and other geometric structures).

Nevertheless, the analogous case to the multisymplectic one is less obvious. 
So the working strategy is twofold. On the one hand, time-dependent contact Hamiltonian systems were studied since time-dependent systems can be considered as a particular case of a field theory whose base manifold is $\mathbb{R}$, the time. 
This gives rise to a new geometric framework, the so-called {\sl cocontact structure}, constituted by a pair of 1-forms, grouping in a single structure the contact and the cosymplectic geometries
\cite{LGGMR-2022}.
On the other hand, given a fibration $\pi\colon E \longrightarrow M$,
we could begin exploring how to combine the volume form $\omega$ on $M$ (assumed always to be an oriented manifold) with
another $m$-form constructed from the canonical $m$-form defined on $\Lambda^m_2\Tan^*E$. 
Then, by studying the properties that a pair of $m$-forms should verify in order to get a model that would provide the non-conservative field equations which are obtained
from the extended Herglotz variational principle for field theories
\cite{GLMR-2022}, we would arrive to the notion of what we call {\sl multicontact structures}. 
Note that we will generalize cocontact geometry instead of contact geometry.

The procedure has been somewhat an inverse problem, since we knew the equations that had to be satisfied but not the geometry that provided them. 
Thus, our aim in this paper is to define
and study the properties of this structure in the most general situation,
and then go to the particular case we are interested in,
to set the field equations, and then to develop
the Lagrangian and the Hamiltonian formalisms
for first-order non-conservative field theories,
including their variational formulation.
At the end, in the case at hand, the structure in question would be defined in the spaces $J^1\pi\times_M\Lambda^{m-1}M$
and $J^{1*}\pi\times_M\Lambda^{m-1}M$
for the Lagrangian and the Hamiltonian  cases, respectively.
It is important to point out that, although there is a previous attempt to define a generalization of the contact structure based on the multisymplectic geometry
\cite{Vi-2015},
our approach is more accurate according to the objective previously stated, which is to extend the Lagrangian and Hamiltonian field equations for non-conservative field theories.

The paper is structured as follows:
Section \ref{1} is devoted to review basic knowledge; namely,
some fundamental geometrical tools, cocontact geometry and the standard multisymplectic formulation of classical field theory.
In Section \ref{2} we introduce the {\sl multicontact structure},
we study its properties, and we restrict to the particular situation which we will be interested in later: the {\sl variational} case.
It is in this case where we state the field equations in several equivalent geometrical ways.
In Sections \ref{mlf} and \ref{mhf} we adapt the above results in order to develop
the Lagrangian and the Hamiltonian formalisms
for non-conservative field theories.
The variational formulation is discussed in Section \ref{vaf}.
Finally, Section \ref{ex} is devoted to analyse some examples; 
in particular, the time-dependent contact dynamical systems
as a particular case of our model, the vibrating string with time-dependent dissipation,
and Maxwell's equations (with charges and currents) with dissipation terms.

All the manifolds are real, second countable and of class $\Cinfty$, and the mappings are assumed to be smooth.
Sum over crossed repeated indices is understood.

\section{Preliminary concepts}
\label{1}

\subsection{Multivector fields and Ehresmann connections}
\label{ap:multivector}

For more information about multivector fields see, for instance, \cite{Ca96a,art:Echeverria_Munoz_Roman98,IEMR-2012}.
For the relation between multivector fields and Ehresmann connections see, for instance, \cite{art:Echeverria_Munoz_Roman98}. 

Let ${\cal M}$ be a manifold with $\dim{\cal M}=n$.
Sections of the {\sl multitangent bundle} $\Lambda^m\Tan{\cal M}\to{\cal M}$ ($m\leq n$) are called 
\textbf{$m$-multivector fields} or \textbf{multivector fields of degree $m$} on ${\cal M}$. 
They are just the contravariant skew-symmetric tensor fields of degree $m$ in ${\cal M}$
(if $m>n$ they are all zero). 
The set of $m$-multivector fields in ${\cal M}$ is denoted $\vf^m({\cal M})$.

For every $m$-multivector field $\textbf{X}\in \mathfrak{X}^m({\cal M})$ and ${\rm p}\in{\cal M}$, there exists an open neighborhood $U_{\rm p}\subset{\cal M}$ and $X_1,\dotsc,X_r \in \vf(U_{\rm p})$ such that
$$
\left. \textbf{X}\right\vert_{U_{\rm p}}= \sum_{1\leq i_1 <\dotsb< i_m \leq r} f^{i_1\dotsb i_m}X_{i_1}\wedge \dotsb \wedge X_{i_m},
$$
where $f^{i_1\dotsb i_m} \in \Cinfty(U_{\rm p})$ and $m \leq r \leq \dim{\cal M}$.
In particular, a multivector field $\textbf{X}\in \vf^m({\cal M})$ is said to be \textbf{locally decomposable} if, for every point ${\rm p}\in{\cal M}$, there exists an open neighborhood $U_{\rm p} \subset{\cal M}$ such that $$
\left.\textbf{X}\right\vert_{U_{\rm p}}=X_1\wedge\dotsb\wedge X_m \, ,\qquad \text{for some }\ X_1,\dotsc,X_m \in\vf(U_{\rm p}) \,.
$$ 
The {\sl contraction} of a multivector field $\textbf{X}\in \vf^m({\cal M})$ and a differentiable form $\Omega\in\df^k({\cal M})$
is the natural contraction between tensor fields and
is given by 
$$
i(\textbf{X})\left.\Omega \right \vert_{U_{\rm p}}= \sum_{1\leq i_1 <\dotsb< i_m \leq r} f^{i_1\dotsb i_m}\inn(X_{i_1}\wedge\dotsb \wedge X_{i_m})\,\Omega =
\sum_{1\leq i_1 <\dotsb< i_m \leq r} f^{i_1\dotsb i_m}\inn(X_{i_m})\ldots\inn(X_{i_1})\Omega \,,
$$
if $k\geq m$, and $i(\textbf{X})\left.\Omega \right \vert_{U_{\rm p}}=0$, if $k < m$.

A locally decomposable multivector field $\textbf{X}\in \vf^m({\cal M})$ 
is \textsl{locally associated}
to an $m$-dimensional distribution $\mathcal{D}$ in ${\cal M}$ 
if there exists a connected open set $U\subseteq{\cal M}$ such that $\left.\textbf{X}\right\vert_U$ is a section of $\left.\Lambda^m\mathcal{D}\right\vert_U$.
Then, locally decomposable $m$-multivector fields are locally associated to $m$-dimensional distributions. 
Multivector fields associated with the same distribution form an equivalence class $\{ \textbf{X} \}$ in $\mathfrak{X}^m({\cal M})$. 
If $\textbf{X}_1, \ \textbf{X}_2 \in \vf^m({\cal M})$ are two multivector fields locally associated with the same distribution on the open set $U \subseteq E$, then there exists 
$f \in \Cinfty(U)$ such that $\textbf{X}_1=f\,\textbf{X}_2$ (on $U$).

Given a distribution $\mathcal{D}\subseteq\Tan{\cal M}$
and ${\rm p}\in{\cal M}$, a submanifold $S\subseteq{\cal M}$
is an \textsl{integral submanifold of} $\mathcal{D}$ at ${\rm p}$ if $\Tan_{\rm p}S=\mathcal{D}_{\rm p}$.
Then, a distribution $\mathcal{D}\subseteq\Tan{\cal M}$ is an \textsl{integrable distribution} if, for every ${\rm p}\in{\cal M}$, there exists an integral submanifold of $\mathcal{D}$ containing ${\rm p}$.
Therefore, a multivector field is \textsl{integrable} if its locally associated distribution is integrable.
An $m$-dimensional submanifold $S\hookrightarrow{\cal M}$ is an \textsl{integral manifold of} $\textbf{X} \in \mathfrak{X}^m({\cal M})$ if, and only if, for every ${\rm p}\in S$, the multivector $\textbf{X}_{\rm p}$ spans $\Lambda^m\Tan_{\rm p}S$.

Now, let $\kappa\colon {\cal M}\rightarrow M$ be a fiber bundle
with $\dim{M}=m$ and $\dim{{\cal M}}=N+m$.
 A multivector field $\textbf{X} \in \vf^m({\cal M})$  is $\kappa$-\textsl{transverse} if, at every ${\rm p} \in{\cal M}$, 
we have that $\left.i(\textbf{X})(\kappa^*\beta)\right\vert_{\rm p} \neq 0$,
for every non-vanishing $\beta \in \df^m(M)$.
If $x \in M$ and ${\rm p}\in{\cal M}$, given a local section of $\kappa$, $\psi\colon U_x\subset M \rightarrow{\cal M}$, with  ${\rm p}=\psi(x)$; 
we say that $\psi$ is an \textsl{integral section} of $\textbf{X}$ at ${\rm p}$
 if $\psi(U_x)$ is the integral manifold of the multivector field $\textbf{X} \in \vf^m({\cal M})$ at ${\rm p}$.
 If a multivector field $\textbf{X} \in \mathfrak{X}^m({\cal M})$ is integrable, then it is $\kappa$-transverse if, and only if, its integral manifolds are local sections of $\kappa$.
 If $M$ is an orientable manifold
and $\omega\in\df^m(M)$
is the volume form in $M$, then
the condition that a multivector field $\textbf{X} \in \mathfrak{X}^m({\cal M})$ is $\kappa$-transverse can be written as
$\inn(\textbf{X})(\kappa^*\omega)\neq 0$; furthermore,
it is possible to take a representative $\textbf{X}$
in the class of $\kappa$-transverse multivector fields 
such that
$\inn(\textbf{X})(\kappa^*\omega)=1$.

Furthermore, the \textsl{canonical prolongation} of a section 
$\psi\colon U\subset M\to \mathcal{M}$ to 
$\Lambda^m\Tan{\cal M}$ is
the section $\psi^{(m)}\colon U\subset M\to\Lambda^m\Tan\mathcal{M}$ 
defined as $\psi^{(m)}:=\Lambda^m\Tan\psi\circ{\bf Y}_\omega$; where 
$\Lambda^m\Tan\psi\colon\Lambda^m\Tan M\to\Lambda^m\Tan\mathcal{M}$ is the natural extension of $\psi$ 
to the corresponding multitangent bundles,
and ${\bf Y}_\omega\in\vf^m(M)$ is the unique $m$-multivector field on $M$
such that $\inn({\bf Y}_\omega)\omega=1$. Then,
$\psi$ is an integral section of ${\bf X}\in\vf^m({\cal M})$ if, and only if, ${\bf X}\circ\psi=\psi^{(m)}$.

In the ambient of fiber bundles, an \textbf{Ehresmann connection} on the bundle $P\to M$
is a $\kappa$-semibasic $1$-form $\nabla$ on $P$
with values in $\Tan P$, that is, a $(1,1)$-tensor field on $P$,
such that $\inn(\nabla)\alpha =\alpha$, for every
$\kappa$-semibasic form $\alpha\in\df^1(P)$
(here $\inn(\nabla)\alpha$ denotes the usual tensorial contraction).
An Ehresmann connection splits $\Tan P$ into the {\sl vertical} and a {\sl horizontal distribution} and, in this way, $\nabla$ represents also the horizontal projector.
The connection is said to be \textsl{integrable} if its associated horizontal distribution is integrable
(the necessary and sufficient condition is that the {\sl curvature tensor} associated to $\nabla$ is zero; that is, the connection is {\sl flat}).
Then, classes of locally decomposable and
$\kappa$-transverse multivector fields $\{{\bf X}\} \subseteq \vf^m (P)$
are in one-to-one correspondence with orientable Ehresmann connection
forms $\nabla$ on $P\to M$. This correspondence is
characterized by the fact that the horizontal distribution associated
with $\nabla$ is the distribution associated with $\{ {\bf X}\}$.
In this correspondence,
classes of integrable (locally decomposable) and $\kappa$-transverse
$m$-multivector fields correspond to flat orientable Ehresmann
connections.

Finally, let ${\cal M}=J^1\pi$ be the first-order jet bundle of the bundle 
$\pi\colon E\rightarrow M$,
with natural projections $\pi^1\colon J^1\pi\rightarrow E$ and $\bar\pi^1\colon J^1\pi\rightarrow M$.
A multivector field $\textbf{X} \in \mathfrak{X}^m(J^1\pi)$ 
 is a \textbf{holonomic multivector field} (or also a \textbf{\textsc{sopde}}) if 
 (i) $\textbf{X}$ is integrable,
(ii) $\textbf{X}$ is $\bar{\pi}^1-$transverse, and
(iii) the integral sections of $\textbf{X}$ are {\sl holonomic sections} of $\bar{\pi}^1$;
that is, they are canonical lifts to $J^1\pi$ of sections of the projection $\pi$.
Note that, as a consequence, $\textbf{X}$ is locally decomposable.
If $(x^\mu,y^i,y_\mu^i)$ are coordinates in $J^1\pi$ adapted to the bundle structure, 
the general local expression for a locally decomposable  $\bar{\pi}^1-$transverse multivector field $\textbf{X}\in\vf^m(J^1\pi)$ is 
\begin{equation}
\label{Lmvf}
\textbf{X}= \bigwedge^{m}_{\mu=1} \textbf{X}_\mu = f\bigwedge^{m}_{\mu=1} \left( \frac{\partial}{\partial x^\mu} + D_\mu^i\frac{\partial}{\partial y^i} + H_{\mu\nu}^i\frac{\partial}{\partial y_\nu^i}\right)\,, \quad \text{where }f\in \Cinfty(J^1\pi) \,.
\end{equation}
This local expression
defines equivalence classes of multivector fields on $J^1\pi$.
Integral sections
$(x^\mu,y^i(x),y_\mu^i(x))$ of $\textbf{X}$ verify that $\displaystyle\derpar{y^i}{x^\mu}= D_\mu^i$ and $\displaystyle\derpar{y^i_\nu}{x^\mu}=H^i_{\mu\nu}$.
If $\textbf{X}$ is holonomic, then $D_\mu^i=y_\mu^i$ and
$\displaystyle H^i_{\mu\nu}=\frac{\partial^2y^i}{\partial x^\mu\partial x^\nu}$.
Sometimes, $\bar{\pi}^1-$transverse $m$-multivector fields which are not integrable, but whose local expressions are like \eqref{Lmvf}
with $D_\mu^i=y_\mu^i$, are called
\textbf{semi-holonomic} multivector fields.

Finally, if $\nabla$ is the Ehresmann connection in $J^1\pi$
associated with the class of $\bar\pi^1$-transverse 
and locally decomposable multivector fields represented by \eqref{Lmvf},
its local expression is
$$
\nabla= \d x^\mu\otimes\left( \frac{\partial}{\partial x^\mu} + D_\mu^i\frac{\partial}{\partial y^i} + H_{\mu\nu}^i\frac{\partial}{\partial y_\nu^i}\right) \,,
$$
and, as above, we say that $\nabla$ is \textbf{semi-holonomic} when
$D_\mu^i=y_\mu^i$, and \textbf{holonomic} or \textbf{\textsc{sopde}}
if its horizontal distribution is integrable 
and its integral submanifolds are holonomic sections.

\subsection{Notions on cocontact geometry and cocontact Hamiltonian systems}

The multicontact structure we will introduce is a generalization of {\sl cocontact geometry}.
Thus, let us provide a brief introduction to cocontact structures and cocontact Hamiltonian systems \cite{LGGMR-2022,RiTo-2022}.

\begin{definition}
Let $M$ be a manifold of dimension $2n+2$. A \textbf{cocontact structure} on $M$ is a pair $(\tau,\eta)$ of 1-forms on $M$ such that $\tau$ is closed and such that $\tau\wedge\eta\wedge(\d\eta)^{\wedge n}$ is a volume form on $M$. Under these hypotheses, $(M,\tau,\eta)$ is called a \textbf{cocontact manifold}.
\end{definition}

In a cocontact manifold $(M,\tau,\eta)$ there are two vector fields $R_t,R_s\in\X(M)$ which satisfy the conditions
	$$ \begin{dcases}
		i(R_t)\tau = 1\,,\\
		i(R_t)\eta = 0\,,\\
		i(R_t)\d\eta = 0\,,
	\end{dcases}\qquad\begin{dcases}
		i(R_s)\tau = 0\,,\\
		i(R_s)\eta = 1\,,\\
		i(R_s)\d\eta = 0\,.
	\end{dcases} $$
The vector fields $R_t$ and $R_s$ are called the \textsl{time Reeb vector field} and the \textsl{contact Reeb vector field}, respectively.

If $(M,\tau,\eta)$ is a cocontact manifold and $H\in\Cinfty(M)$,
then the tuple $(M,\tau,\eta,H)$ is a \textbf{cocontact Hamiltonian system}. The equations 
\beq
\inn(X_H)\tau=1  \,, \qquad \inn(X_H)\eta=-H  \,, \qquad
\inn(X_H)\d\eta = \d H - (\Lie_{R_s}H)\eta - (\Lie_{R_t}H)\tau
\label{hamilton-cocontact-eqs}
\eeq
are the \textsl{cocontact Hamiltonian equations for vector fields}
and a solution $X_H$ to them is said to be a
\textsl{cocontact Hamiltonian vector field} associated with $H$. The vector field $X_H\in\vf (M)$ is unique.

If ${\bf c}\colon I\subset\R\to M$ is a curve and ${\bf c}'\colon I\subset\R\to \Tan M$ is the canonical lift of 
${\bf c}$ to the tangent bundle $\Tan M$, then the equations
$$
\inn({\bf c}')\tau=1
\,, \qquad \inn({\bf c}')\eta = -H\circ{\bf c}
\,, \qquad
\inn({\bf c}')\d\eta=\left(\d H-(\Lie_{R_s}H)\eta-(\Lie_{R_t}H)\tau\right)\circ{\bf c}
$$
are the \textsl{cocontact Hamiltonian equations for curves}
and their solutions are the integral curves of the contact Hamiltonian vector fields solution to 
\eqref{hamilton-cocontact-eqs}.

In Section \ref{cocontactsection}, it is shown how the cocontact formulation is recovered from the multicontact setting introduced in
Section \ref{2}.

\subsection{Review on multisymplectic field theory}
\label{nmft}

For more details on the Lagrangian formalism of classical field theories see, for instance, 
\cite{Ald1980,LMS-2004,EMR-96,art:Echeverria_Munoz_Roman98,Gc-73,GS-73,
art:Roman09,book:Saunders89},
and for the Hamiltonian formalism 
\cite{CCI-91,LMM-96,EMR-99b,
EMR-00b,GMM-2022,HK-04,Kru2002,MS-98,Pau2002,art:Roman09}.

A first-order classical field theory
is described by the following elements:
First, let $\pi\colon E\to M$ be the configuration bundle,
with $\dim M=m$ and $\dim E=n+m$,
where $M$ is an orientable manifold with volume form
$\omega_{_M}\in\df^m(M)$, and the first-order jet bundle
$\pi^1\colon J^1\pi\to E$, which is also a bundle over $M$
with projection
$\bar\pi^1=\pi\circ\pi^1\colon J^1\pi\longrightarrow M$,
and $\dim\,J^1\pi=nm+n+m$.
We denote by $(x^\mu,y^i,y^i_\mu)$
($\mu = 1,\ldots,m$; $i=1,\ldots,n$)
natural coordinates in $J^1\pi$ adapted to the bundle structure
and such that
$\omega:=\bar\pi_1^*\omega_{_M}=\d x^1\wedge\cdots\wedge\d x^m\equiv\d^mx$.
Second, 
the {\sl Lagrangian density} is a $\bar\pi^1$-semibasic $m$-form on
$J^1\pi$ that can be expressed as $\Lag =L\omega$,
where $L\in\Cinfty(J^1\pi)$
is the {\sl Lagrangian function} associated with $\Lag$ and $\omega$.
The bundle $J^1\pi$ is endowed with a canonical structure, the {\sl canonical endomorphism}
${\rm J}$, which is a $(1,2)$-tensor field whose local expression is
 $\displaystyle {\rm J}=\left(\d y^i-y^i_\mu\d x^\mu\right)\otimes
\derpar{}{y^i_\nu}\otimes\derpar{}{x^\nu}$. Then
the {\bf Poincar\'e--Cartan $m$} and {\bf $(m+1)$-forms}
associated with~$\Lag$ are defined as
$\Theta_{\Lag}:=\inn({\rm J})\d\Lag+\Lag\in\df^{m}(J^1\pi)$ and
$\Omega_{\Lag}:= -\d\Theta_{\Lag}\in\df^{m+1}(J^1\pi)$,
and they have the following local expressions:
\begin{align}
\Theta_{\mathcal{L}} &= \frac{\partial L}{\partial y^i_\mu}\d y^i\wedge\d^{m-1}x_\mu -\left(\frac{\partial L}{\partial y^i_\mu}y^i_\mu-L\right)\d^m x \,,
\label{ThetaL}
\\
\Omega_{\Lag} &=
-\frac{\partial^2L}{\partial y^j_\nu\partial y^i_\mu}
\,\d y^j_\nu\wedge\d y^i\wedge\d^{m-1}x_\mu
-\frac{\partial^2L}{\partial y^j\partial y^i_\mu}\,\d y^j\wedge
\d y^i\wedge\d^{m-1}x_\mu
\nonumber  \\ & \quad
+\frac{\partial^2L}{\partial y^j_\nu\partial y^i_\mu}\,y^i_\mu\,
\d y^j_\nu\wedge\d^mx  +
\left(\frac{\partial^2L}{\partial y^j\partial y^i_\mu}y^i_\mu
 -\derpar{L}{y^j}+\frac{\partial^2L}{\partial x^\mu\partial y^j_\mu}
\right)\d y^j\wedge\d^mx \,,
\nonumber 
\end{align}
where $\displaystyle\d^{m-1}x_\mu\equiv\inn\left(\derpar{}{x^\mu}\right)\d^mx$.
The pair $(J^1\pi,\Omega_\Lag)$ is called a {\sl Lagrangian system}.
The Lagrangian function and the corresponding Lagrangian system are {\sl regular} if $\Omega_{\Lag}$ is a {\sl multisymplectic
form} (i.e., $1$-non\-degenerate); elsewhere they are
 {\sl singular} or {\sl non-regular}.
This regularity condition is locally equivalent to demand that the Hessian matrix
$\displaystyle\left(\frac{\partial^2L}
{\partial y^i_\mu\partial y^j_\nu}\right)$
is regular everywhere.

The Lagrangian field equations can be derived from the so-called {\sl Hamilton variational principle} which states that,
if $\Gamma(\pi)$ denotes the set of sections of $\pi$,
the {\sl variational problem} for a Lagrangian system $(J^1\pi,\Omega_\Lag)$
is the search of the critical (or
stationary) sections of the functional 
\begin{align*}
 {\bf L} \colon \ \Gamma(\pi) &\longrightarrow \Real
 \\ 
 \phi &\longmapsto \int_M(j^1\phi)^*\Theta_\Lag  \,,
\end{align*}
 with respect to the variations of $\phi$ given
 by $\phi_t =\sigma_t\circ\phi$, where $\{\sigma_t\}$ is a
 local $1$-parameter group of any compact-supported $\pi$-vertical vector field
 $Z\in\vf(E)$; that is,
$\displaystyle \frac{\d}{\d t}\Big\vert_{t=0}\int_M\big(j^1\phi_t\big)^*\Theta_\Lag = 0$.
Then, for a Lagrangian system  $(J^1\pi,\Omega_\Lag)$,
the Lagrangian field equations derived from this variational principle can be stated geometrically in several alternative ways.
In particular,
for a section $\phi\in\Gamma(\pi)$,
these equations are equivalently stated as:
 \begin{enumerate}[{\rm (1)}]\itemsep=0pt
 \item
 $(j^1\phi)^*\inn (X)\Omega_\Lag= 0$, 
 for every $X\in\vf (J^1\pi)$.
  \item
$\inn((j^1\phi)^m)(\Omega_\Lag\circ j^1\phi)=0$.
\item
 $j^1\phi$ is an integral section of a class of holonomic multivector fields
 $\{ {\bf X}_\Lag\}\subset\vf^m(J^1\pi)$ satisfying
$$
 \inn ({\bf X}_\Lag)\Omega_\Lag=0 \, , \quad
 \text{\rm for every ${\bf X}_\Lag\in\{ \bfX_\Lag\}$}   \,.
$$
\item
 $j^1\phi$ is an integral section of a holonomic connection
$\nabla_\Lag$ in $J^1\pi$ satisfying 
$$
\inn(\nabla_\Lag)\Omega_\Lag=(m-1)\Omega_\Lag  \,.
$$
 \end{enumerate}
In a natural system of coordinates in $J^1\pi$, if $\phi=(x^\mu ,y^i(x^\nu))$,
then
$\displaystyle j^1\phi=\Big(x^\mu ,y^i(x^\nu),\derpar{y^i}{x^\mu}(x^\nu)\Big)$
and the above equations lead to
the Euler--Lagrange equations
$$
 \derpar{L}{y^i}\circ j^1\phi-
\derpar{}{x^\mu}\left(\derpar{L}{y_\mu^i}\circ j^1\phi\right)= 0 \,.
$$

For the Hamiltonian formalism of classical field theories,
consider the {\sl extended multimomentum bundle} 
${\cal M}\pi\equiv\Lambda_2^m\Tan^*E$, that is the bundle of $m$-forms on
$E$ vanishing by contraction with two $\pi$-vertical vector fields
(or equivalently, the set of affine maps from
$J^1\pi$ to $\pi^*\Lambda^m\Tan^*M\simeq\Real$),
which is endowed with natural coordinates $(x^\nu,y^i,p^\nu_i,p)$
adapted to the bundle $\pi\colon E\to M$,   and such that
$\omega=\d^mx$; so
$\dim\, {\cal M}\pi=nm+n+m+1$.
We define the {\sl restricted multimomentum bundle}
$J^{1*}\pi\equiv{\cal M}\pi/\Lambda^m_1\Tan^*E$
(where $\Lambda^m_1\Tan^*E$ is the bundle of $\pi$-semibasic $m$-forms on
$E$);
whose natural coordinates are $(x^\mu,y^i,p_i^\mu)$, and so 
$\dim\, J^{1*}\pi=nm+n+m$.
We have the natural projections
$$
\bar\kappa\colon J^{1*}\pi\to M \,, \quad
\kappa\colon J^{1*}\pi\to E \,, \quad
\mathfrak{p}\colon {\cal M}\pi\to J^{1*}\pi \,.
$$

Associated with a Lagrangian function $L$ there are the {\sl extended Legendre map},
$\widetilde{FL}\colon J^1\pi\to {\cal M}\pi$,
and the {\sl restricted Legendre map}, $FL:=\mathfrak{p}\circ\widetilde{FL}\colon J^1\pi\to J^{1*}\pi$, 
which are locally given by the expressions
 \begin{equation}
    \begin{aligned}
 &\widetilde{FL}^{\,*}x^\nu = x^\nu\,, \qquad
 \widetilde{FL}^{\,*}y^i = y^i\,,\qquad
 \widetilde{FL}^{\,*}p_i^\nu=\displaystyle\derpar{L}{y^i_\nu}
 \label{FL1}
 \,,\qquad
 \widetilde{FL}^{\,*}p =\displaystyle L-y^i_\nu\derpar{L}{y^i_\nu} \,,
 \\
 &{FL}^*x^\nu = x^\nu \,,\qquad
 {FL}^*y^i = y^i \,,\qquad
 {FL}^*p_i^\nu =\displaystyle\derpar{L}{y^i_\nu} \,.
\end{aligned}
 \end{equation}
 The Lagrangian $L$ is regular if, and only if,
 $FL$ is a local diffeomorphism,
and $L$ is said to be {\sl hyperregular} when $FL$ is a global diffeomorphism.

As ${\cal M}\pi$ is a subbundle of $\Lambda^m\Tan^*E$,
it is endowed with a canonical form $\widetilde\Theta\in\df^m({\cal M}\pi)$
(the ``tautological form'') which is defined as follows:
if $(y,\xi)\in\Lambda_2^m\Tan^*E $, with $y\in E$, $x=\pi(y)$, and
$\xi\in\Lambda_2^m\Tan_x^*E$; then,
for every $X_1,\ldots,X_m\in\Tan_{(y,\xi)}({\cal M}\pi)$,
\[
\widetilde\Theta ((y,\xi);\moment{X}{1}{m}):=
\xi(y;\Tan_{(y,\xi)}(\kappa\circ\mathfrak{p})(X_1),\dotsc ,\Tan_{(y,\xi)}(\kappa\circ\mathfrak{p})(X_m)) \,.
\]
Then we also have the multisymplectic form
 $\widetilde\Omega:=-\d\widetilde\Theta\in\df^{m+1}({\cal M}\pi)$.
They are known as the {\sl multimomentum Liouville $m$ and $(m+1)$-forms},
whose local expressions are
\beq
\widetilde\Theta = p_i^\mu\d y^i\wedge\d^{m-1}x_\mu+p\,\d^mx
\, , \qquad
\widetilde\Omega = -\d p_i^\mu\wedge\d y^i\wedge\d^{m-1}x_\mu-\d p\wedge\d^mx \,.
\label{Liform}
\eeq
  
Now, when $L$ is a regular or hyperregular Lagrangian,
we can construct the map  
${\bf h}:=\widetilde{FL}\circ FL^{-1}$
(which is a section of the projection $\mathfrak{p}\colon{\cal M}\pi\to J^{1*}\pi$,
and is called a {\sl Hamiltonian section\/}).
\beq
 \begin{array}{ccccc}
\begin{picture}(15,52)(0,0)
\put(0,0){\text{$J^1\pi$}}
\end{picture}
&
\begin{picture}(65,52)(0,0)
 \put(20,27){\text{$\widetilde{FL}$}}
 \put(25,6){\text{$FL$}}
 \put(0,7){\vector(2,1){65}}
 \put(0,4){\vector(1,0){65}}
\end{picture}
&
\begin{picture}(15,52)(0,0)
 \put(0,0){\text{$J^{1*}\pi$}}
 \put(0,41){\text{${\cal M}\pi$}}
 \put(5,38){\vector(0,-1){25}}
 \put(-5,22){\text{$\mathfrak{p}$}}
 \put(10,13){\vector(0,1){25}}
 \put(15,22){\text{${\bf h}$}}
\end{picture}
\end{array} 
\label{1stdiag}
\eeq
Then,
 associated with ${\bf h}$, we can define the so-called {\bf Hamilton--Cartan $m$-form} of $J^{1*}\pi$ as
$\Theta_{\bf h}:={\bf h}^*\widetilde\Theta\in\df^m(J^{1*}\pi)$,
and the {\bf Hamilton--Cartan $(m+1)$-form}
$\Omega_{\bf h}:=-\d\Theta_{\bf h}=
{\bf h}^*\widetilde\Omega\in\df^{m+1}(J^{1*}\pi)$ which is a multisymplectic form in $J^{1*}\pi$.
Since, locally, the Hamiltonian section is specified by
${\bf h}(x^\mu,y^i,p^\mu_i)=(x^\mu,y^i,p^\mu_i,p=-H(x^\nu,y^i,p_j^\nu))$, the local expressions of these differential forms are
\beq
 \Theta_{\bf h}= p_i^\nu\d y^i\wedge\d^{m-1}x_\mu -H\,\d^mx 
\,, \qquad
\Omega_{\bf h}= -\d p_i^\mu\wedge\d y^i\wedge\d^{m-1}x_\mu +
\d H\wedge\d^mx \,.
\label{HCform}
\eeq
Note that
 $\widetilde{FL}^*\widetilde\Theta={FL}^*\Theta_{\bf h}=\Theta_{\Lag}$
 and $\widetilde{FL}^*\widetilde\Omega={FL}^*\Omega_{\bf h}=\Omega_{\Lag}$.
 The pair $(J^{1*}\pi,\Omega_{\bf h})$
 is called the {\sl Hamiltonian system} associated with the (hyper)regular Lagrangian system $(J^1\pi,\Omega_\Lag)$.

The above construction can be done independently of having a starting Lagrangian system, 
simply giving any Hamiltonian section of the projection $\mathfrak{p}\colon{\cal M}\pi\to J^{1*}\pi$. 
Then, as in the Lagrangian formalism, in order to obtain the Hamiltonian field equations
we can state the so-called {\sl Hamilton--Jacobi variational principle}
as follows: 
the variational problem for the Hamiltonian system $(J^{1*}\pi,\Omega_{\bf h})$
is the search of the critical sections of the functional 
\begin{align*}
 {\bf H} \colon \ \Gamma(\bar\kappa) &\longrightarrow \Real
 \\ 
 \psi &\longmapsto \int_M\psi^*\Theta_{\bf h} \,,
\end{align*}
 with respect to the variations of $\psi$ given
 by $\psi_t =\sigma_t\circ\psi$, where $\{\sigma_t\}$ is a
 local one-parameter group of any compact-supported $\bar\kappa$-vertical vector field
 $Z\in\vf(J^{1*}\pi)$; that is,
$\displaystyle \frac{\d}{\d t}\Big\vert_{t=0}\int_M\psi_t^*\Theta_{\bf h}=0$.
Then, the Hamiltonian field equations derived from this variational principle,
which are called {\sl Hamilton--de Donder--Weyl equations},
can be set geometrically in several alternative ways.
In particular,
for a section $\psi\in\Gamma(\bar\kappa)$,
these equations are equivalently stated as:
 \begin{enumerate}[{\rm (1)}]\itemsep=0pt
 \item
 $\psi^*\inn (X)\Omega_{\bf h}= 0$, 
 for every $X\in\vf (J^{1*}\pi)$.
  \item
$\inn(\psi^{(m)})(\Omega_{\bf h}\circ\psi)=0$.
\item
 $\psi$ is an integral section of a class of integrable and
 $\bar\kappa$-transverse multivector fields
 $\{ {\bf X}_{\bf h}\}\subset\vf^m(J^{1*}\pi)$ satisfying that
$$
 \inn ({\bf X}_{\bf h})\Omega_{\bf h}=0 \, , \qquad
 \text{\rm for every ${\bf X}_{\bf h}\in\{ {\bf X}_{\bf h}$\}} \,.
$$
\item
$\psi$ is an integral section of an integrable connection
$\nabla_{\bf h}$ in $J^{1*}\pi$ satisfying the equation
$$
\inn(\nabla_{\bf h})\Omega_{\bf h}=(m-1)\Omega_{\bf h}\,.
$$
 \end{enumerate}
In a natural system of coordinates in $J^{1*}\pi$, if $\psi=(x^\mu ,y^i(x^\nu),p^\mu_i(x^\nu))$,
the above equations lead to
$$
 \derpar{(y^i\circ\psi)}{x^\mu}=
 \derpar{H}{p^\mu_i}\circ\psi
\, ,\qquad
 \derpar{(p_i^\mu\circ\psi)}{x^\mu}=
 - \derpar{H}{y^i}\circ\psi \,.
$$

If $(J^{1*}\pi,\Omega_{\bf h})$ is the Hamiltonian system associated with a
(hyper)regular Lagrangian system $(J^1\pi,\Omega_\Lag)$ then,
since the restricted Legendre map $FL$ is a diffeomorphism, it is evident that the solutions 
to the Lagrangian field equations for the system $(J^1\pi,\Omega_\Lag)$
are in one-to-one correspondence with the solutions 
to the Hamilton--de Donder--Weyl equations for the Hamiltonian system $(J^{1*}\pi,\Omega_{\bf h})$.

In the case of singular Lagrangians, in order to ensure the existence of a Hamiltonian description 
we need to impose some minimal regularity conditions.
In particular, we consider the case of 
{\sl almost-regular Lagrangians} $L\in\Cinfty(J^1\pi)$, which are those such that:
\begin{enumerate}[{\rm (1)}]\itemsep=0pt
\item
 $P_0:=FL (J^1\pi)$ is a closed submanifold of $J^{1*}\pi$,
\item
 $FL$ is a submersion onto its image,
\item
for every $\bar y\in J^1\pi$, the fibers $FL^{-1}(FL(\bar y))$, 
 are connected submanifolds of $J^1\pi$.
\end{enumerate}
Therefore
we have that the submanifolds $P_0:= FL(J^1\pi)$
and $\widetilde P_0:=\widetilde{FL}(J^1\pi)$
are fiber bundles over $E$ and $M$,
and $\widetilde P_0$ is transverse to the projection
$\mathfrak{p}\colon{\cal M}\pi\to J^{1*}\pi$ and diffeomorphic to $P_0$.
We denote by
$\jmath_0\colon P_0\hookrightarrow J^{1*}\pi$ and
$\widetilde\jmath_0\colon\widetilde P_0\hookrightarrow{\cal M}\pi$
the corresponding embeddings.
 This diffeomorphism is denoted by
 $\widetilde{\mathfrak{p}}\colon\widetilde P_0\to P_0$,
 and is just the restriction of $\mathfrak{p}$ 
 to $\widetilde P_0$. Thus, we have the diagram
 \beq
 \begin{array}{cccc}
\begin{picture}(20,52)(0,0)
\put(0,0){\text{$J^1\pi$}}
\end{picture}
&
\begin{picture}(65,52)(0,0)
 \put(7,28){\text{$\widetilde{FL}_0$}}
 \put(24,7){\text{$FL_0$}}
 \put(0,7){\vector(2,1){65}}
 \put(0,4){\vector(1,0){65}}
\end{picture}
&
\begin{picture}(90,52)(0,0)
 \put(5,0){\text{$P_0$}}
 \put(5,42){\text{$\widetilde P_0$}}
 \put(5,13){\vector(0,1){25}}
 \put(10,38){\vector(0,-1){25}}
 \put(-10,22){\text{$\widetilde{\bf h}$}}
 \put(12,22){\text{$\widetilde{\mathfrak{p}}$}}
 \put(30,45){\vector(1,0){55}}
 \put(30,4){\vector(1,0){55}}
 \put(48,12){\text{$\jmath_0$}}
 \put(48,33){\text{$\widetilde\jmath_0$}}
 \end{picture}
&
\begin{picture}(15,52)(0,0)
 \put(0,0){\text{$J^{1*}\pi$}}
 \put(0,41){\text{${\cal M}\pi$}}
 \put(10,38){\vector(0,-1){25}}
 \put(0,22){\text{${\mathfrak{p}}$}}
\end{picture}
\\
& &
\begin{picture}(90,35)(0,0)
 \put(10,35){\vector(1,-1){35}}
 \put(5,11){\text{$\bar\kappa_0$}}
 \put(90,11){\text{$\bar\kappa$}}
 \put(100,35){\vector(-1,-1){35}}
\end{picture}
 &
\\
& & \qquad M &
 \end{array}
 \label{lastdiag}
 \eeq
where $FL_0$ is the restriction map of
$FL$ onto $P_0$, defined by
$FL=\jmath_0\circ FL_0$,
and the same for $\widetilde{FL}_0$.
Then, taking $\widetilde{\bf h}:=\widetilde{\mathfrak{p}}^{-1}$,
we define the Hamilton--Cartan forms
\[
\Theta^0_{\bf h}=(\widetilde\jmath_0\circ\widetilde{\bf h})^*\widetilde\Theta\in\df^m(P_0) \,,\qquad
\Omega^0_{\bf h}=-\d\Theta^0_{\bf h}=
(\widetilde\jmath_0\circ\widetilde{\bf h})^*\widetilde\Omega\in\df^{m+1}(P_0) \,,
 \]
which verify that $\Theta_\mathcal{L}=FL_0^{\ *}\,\Theta^0_{\bf h}$ and
$\Omega_\mathcal{L}=FL_0^{\ *}\,\Omega^0_{\bf h}$.
Furthermore, there exists a Hamiltonian
function $H_0\in\Cinfty(P_0)$
such that
$E_\mathcal{L}=FL_0^{\ *}\,H_0$,
and therefore the coordinate expression of these forms are
\beq
\Theta^0_{\bf h}=\jmath_0^{\,*}(p_i^\mu\d y^i\wedge\d^{m-1}x_\mu)-H_0\,\d^mx
\, , \qquad
\Omega^0_{\bf h}=\jmath_0^{\,*}(-\d p_i^\mu\wedge\d y^i\wedge\d^{m-1}x_\mu)+\d H_0\wedge\d^mx \,.
\label{Thetacero}
\eeq
In general, $\Omega_{\bf h}^0$ is a premultisymplectic form.
The pair $(P_0,\Omega^0_{\bf h})$ is the {\sl Hamiltonian system}
associated with the almost-regular Lagrangian system $(J^1\pi,\Omega_\Lag)$.

In this framework,
the Hamilton--de Donder--Weyl equations
for the Hamiltonian system $(P_0,\Omega^0_{\bf h})$
are stated like in the regular case.

\section{Multicontact geometry}
\label{2}

\subsection{Motivational guidelines}

In this section, we present a new geometric structure
which extends the notion of {\sl contact structure}
used in contact mechanics,
in order to describe non-conservative classical field theories. 
This new definition must be a generalization of the so-called {\sl co-contact structure}
first introduced in \cite{LGGMR-2022} to describe time-dependent dissipative mechanical systems. 
It should also be consistent with the notions of {\sl $k$-contact} and {\sl $k$-cocontact manifolds} for non-conservative field theories \cite{GGMRR-2019,GGMRR-2020,GRR-2022,Ri-2022}
in the sense that these models must lead to the same equations when applied to the same situations. 
This new geometric structure should lead to a system of partial differential equations describing the field theory. 
In the Lagrangian case, these equations should be compatible with the ones derived in \cite{GLMR-2022,Geor-2003,LPAF2018} from variational principles. These equations are the {\sl Herglotz--Euler--Lagrange equations} for field theories
\cite{GLMR-2022}, which read
$$
\frac{\partial}{\partial x^\mu}
\left(\frac{\displaystyle\partial L}{\partial
y^i_\mu}\right)=
\frac{\partial L}{\partial y^i}+
\displaystyle\frac{\partial L}{\partial s^\mu}\displaystyle\frac{\partial L}{\partial y^i_\mu} \,.
$$

In order to find this new structure, we first consider the usual geometry to describe first-order Lagrangian field theories: the fiber bundle $J^1\pi$ of $\pi:E\rightarrow M$. We also consider $\Lambda^{m-1}(\Tan^*M)$ which, based on the Herglotz's variational principle for fields, is the natural structure to define the new variables $s^\mu$ that represent the dependence of the Lagrangian on the action. In these fiber bundles we have several natural forms: the Poincar\'e--Cartan $m$-form associated with a Lagrangian function $L$ in $J^1\pi$, the tautological form associated to $\Lambda^{m-1}(\cT M)$, and a volume form on $M$ (which is relevant in the cocontact formalism). 
Based on these objects, we want to obtain a new form defined in an appropriate extension of the jet bundle,
whose coordinate expression reads
\begin{equation}\label{thetacoor1}
    \Theta_{\mathcal{L}} =-\frac{\partial L}{\partial y^i_\mu}\d y^i\wedge\d^{m-1}x_\mu +\left(\frac{\partial L}{\partial y^i_\mu}y^i_\mu-L\right)\d^m x+\d s^\mu\wedge \d^{m-1}x_\mu\,,
\end{equation}
for a Lagrangian function $L$ defined in that jet bundle extension. 
The new variables $s^\mu$ must give account for the ``non-conservation''
and they are the ``multicontact variables''.
This form will be used to characterize the field equations so that we
reach the {\sl Herglotz--Euler--Lagrange equations}.

Then, we proceed to generalize the structure as much as possible,
as long as it allows us to derive the field equations. 
This is important because some of the applications naturally require manifolds which do not have the canonical structure. This is the case for multisymplectic systems and it is expected that it will also be the case for multicontact systems.

It is important to note that the structure we will define is not the same as that introduced in \cite{Vi-2015}
which is too general for the objectives previously mentioned.

{\it Notation\/}:
In this section we work with distributions.
So, the vector fields associated with a distribution ${\cal D}$ will be denoted
$\Gamma({\cal D})$, the sections of ${\cal D}$.
In particular, for a differential form $\omega$, $\ker\omega$ is a distribution and $\Gamma(\ker\omega)$ is the set of the associated vector fields.

\subsection{Multicontact and premulticontact structures}

Let $P$ be a manifold with $\dim{P}=m+N$ and $N\geq m\geq 1$, 
and two forms $\Theta,\omega\in\df^m(P)$ with constant rank.
These forms play different roles:
one of them, $\omega$, is a ``reference form'', 
while the other, $\Theta$, is the one that gives the structure that we want to introduce, properly said.

First, given a regular distribution ${\cal D}\subset\Tan P$, consider $\Gamma({\cal D})$, the set of sections of ${\cal D}$. For every $k\in\Nat$, let
$$
{\cal A}^k({\cal D}):=\{ \alpha\in\df^k(P)\mid
\inn(Z)\alpha=0\,;\, \forall Z\in\Gamma({\cal D})\} \,;
$$
that is, the set of differential $k$-forms on $P$
vanishing by the vector fields of $\Gamma({\cal D})$. At a point ${\rm p}\in P$, the point-wise version is
$\displaystyle
{\cal A}_{\rm p}^k({\cal D}):=\big\{ \alpha\in\Lambda^k\Tan^*_{\rm p}P\mid
\inn(v)\alpha=0\,;\,  \forall v\in{\cal D}_{\rm p}\big\}$.

\begin{lemma} 
\label{involutive}
If $\mathcal{D}$ is an involutive distribution and $\alpha\in \mathcal{A}^k(\mathcal{D})$, we have
$$
i(X)i(Y)\d\alpha=0\,,
$$
for every $X,Y\in\Gamma(\mathcal{D})$.
\end{lemma}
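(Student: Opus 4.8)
The plan is to reduce everything to Cartan's magic formula together with the graded-commutator identity relating the Lie derivative and the interior product, invoking involutivity of $\mathcal{D}$ exactly once, at the very end.

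First I would rewrite $i(Y)\d\alpha$ by means of Cartan's formula $\Lie_Y = i(Y)\circ\d + \d\circ i(Y)$. Since $Y\in\Gamma(\mathcal{D})$ and $\alpha\in\mathcal{A}^k(\mathcal{D})$, by the very definition of $\mathcal{A}^k(\mathcal{D})$ we have $i(Y)\alpha = 0$, hence $\d\big(i(Y)\alpha\big)=0$ and therefore
\[
i(Y)\d\alpha = \Lie_Y\alpha .
\]
Next I would contract this with $X$ and apply the standard identity $[\Lie_Y, i(X)] = i([Y,X])$ (the graded commutator of the degree-$0$ operator $\Lie_Y$ with the degree-$(-1)$ operator $i(X)$), which yields
\[
i(X)\,i(Y)\d\alpha = i(X)\Lie_Y\alpha = \Lie_Y\big(i(X)\alpha\big) - i([Y,X])\alpha .
\]
The first term on the right vanishes because $i(X)\alpha = 0$ (again $X\in\Gamma(\mathcal{D})$ and $\alpha\in\mathcal{A}^k(\mathcal{D})$), so that $i(X)\alpha$ is the zero form and its Lie derivative is zero.

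It is the second term where the hypothesis is used, and this is the only substantive point: since $\mathcal{D}$ is involutive and $X,Y\in\Gamma(\mathcal{D})$, the bracket $[Y,X]$ is again a section of $\mathcal{D}$, whence $i([Y,X])\alpha = 0$. Combining the two vanishings gives $i(X)i(Y)\d\alpha = 0$, as claimed. The computation is entirely formal, so there is no genuine obstacle; the only things deserving care are the sign/bracket convention in the identity $[\Lie_Y, i(X)] = i([Y,X])$, and the remark that involutivity is exactly what keeps $[Y,X]$ inside $\Gamma(\mathcal{D})$ — for a non-involutive distribution this term would not vanish in general. All the manipulations are local, so it suffices to argue on a neighbourhood where $X$ and $Y$ are realised as local sections of $\mathcal{D}$.
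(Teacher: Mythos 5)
Your proof is correct and is essentially the paper's own argument: both rest on Cartan's formula together with the identity $[\Lie_X,i(Y)]=i([X,Y])$, with involutivity invoked exactly once to kill the bracket term $i([X,Y])\alpha$. The only difference is organizational — the paper expands the zero quantity $\Lie_X(i(Y)\alpha)$ in one line, while you reduce $i(X)i(Y)\d\alpha$ step by step — which is not a substantive distinction.
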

\begin{proof}
Since $\mathcal{D}$ is involutive, we have that $[X,Y]\in\Gamma(\mathcal{D})$ for every $X,Y\in\Gamma(\mathcal{D})$. Then
$$
0=\Lie_X(i(Y)\alpha)=i([X,Y])\alpha+i(Y)\d (i(X)\alpha)+i(Y)i(X)\d\alpha=
i(Y)i(X)\d\alpha\,.
$$
\end{proof}

For a form $\alpha\in\df^k(P)$, with $k>1$, the `$1$-$\ker$ of $\alpha$' will be simply denoted as 
$\ker\alpha$; that is,
$\ker\alpha =\{ Z\in\vf(P)\mid \inn(Z)\alpha=0\}$. With this in mind, the above definition of ${\cal A}^k({\cal D})$ can be written as
$$
{\cal A}^k({\cal D})=\big\{ \alpha\in\df^k(P) \mid 
\Gamma(\cal D)\subset\ker\alpha \big\} \,.
$$
Then, for a pair $(\Theta,\omega)$ we define:

\begin{definition}
\label{def:reeb_dist}
The \textbf{Reeb distribution} associated to
the pair $(\Theta,\omega)$ is the distribution
${\cal D}^{\mathfrak{R}}\subset\Tan P$ defined, at every point ${\rm p}\in P$, as
\[
{\cal D}_{\rm p}^{\mathfrak{R}}=\big\{ v\in(\ker\omega)|_{\rm p}\ \mid\ \inn(v)\dd\Theta_{\rm p}\in {\cal A}^{m}_{\rm p}(\ker\omega)\big\} \,,
\]
and $\displaystyle{\cal D}^{\mathfrak{R}}=\bigcup_{{\rm p}\in P}{\cal D}_{\rm p}^{\mathfrak{R}}$.
The set of sections of the Reeb distribution is denoted by $\mathfrak{R}:=\Gamma({\cal D}^{\mathfrak{R}})$,
and its elements $R\in\mathfrak{R}$ are called \textbf{Reeb vector fields}.
Then, if $\ker \omega$ has constant rank,
\beq
\label{Reebdef}
\mathfrak{R}=\big\{ R\in\Gamma(\ker\omega)\,|\, \inn(R)\dd\Theta\in {\cal A}^{m}(\ker\omega)\big\} \,.
\eeq
\end{definition}

Note that  $\ker\omega\cap\ker\dd\Theta\subset\mathfrak{R}$.

\begin{lemma}
\label{lem:reebinvolutive}
If $\omega$ is a closed form and has constant rank, 
then $\mathfrak{R}$ is involutive.
\end{lemma}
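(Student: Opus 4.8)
The plan is to show directly that $\mathfrak{R}$ is closed under the Lie bracket: given $R_1,R_2\in\mathfrak{R}$, I must verify the two conditions defining membership in \eqref{Reebdef}, namely that (a) $[R_1,R_2]\in\Gamma(\ker\omega)$, and (b) $\inn([R_1,R_2])\d\Theta\in{\cal A}^m(\ker\omega)$. Note that $\mathfrak{R}$ is a $\Cinfty(P)$-submodule of $\Gamma(\ker\omega)$ since ${\cal A}^m(\ker\omega)$ is, so it is meaningful to ask for its involutivity.

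The first preliminary step — and the only place where the hypothesis $\d\omega=0$ enters — is to prove that $\ker\omega$ is itself involutive. For $X,Y\in\Gamma(\ker\omega)$ I would compute $\inn([X,Y])\omega=\Lie_X\inn(Y)\omega-\inn(Y)\Lie_X\omega$; the first term vanishes because $\inn(Y)\omega=0$, and the second because $\Lie_X\omega=\inn(X)\d\omega+\d\inn(X)\omega=0$, using both $\d\omega=0$ and $\inn(X)\omega=0$. Hence $[X,Y]\in\Gamma(\ker\omega)$. Since $R_1,R_2\in\mathfrak{R}\subseteq\Gamma(\ker\omega)$, this immediately yields condition (a).

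For condition (b), I would expand by the operator identity $\inn([R_1,R_2])=\Lie_{R_1}\inn(R_2)-\inn(R_2)\Lie_{R_1}$ applied to $\d\Theta$, then contract with an arbitrary $Z\in\Gamma(\ker\omega)$ and check that both resulting terms vanish. In the second term, $\Lie_{R_1}\d\Theta=\d(\inn(R_1)\d\Theta)$ because $\d\d\Theta=0$; setting $\alpha:=\inn(R_1)\d\Theta$, which lies in ${\cal A}^m(\ker\omega)$ precisely because $R_1\in\mathfrak{R}$, the contribution is $-\inn(Z)\inn(R_2)\d\alpha$, and this vanishes by Lemma \ref{involutive} applied to the involutive distribution $\ker\omega$ (both $Z$ and $R_2$ lie in $\ker\omega$). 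In the first term I would commute $\inn(Z)$ past $\Lie_{R_1}$ via $\inn(Z)\Lie_{R_1}=\Lie_{R_1}\inn(Z)-\inn([R_1,Z])$, acting on $\gamma:=\inn(R_2)\d\Theta\in{\cal A}^m(\ker\omega)$; since $\inn(Z)\gamma=0$ and, by involutivity of $\ker\omega$, $[R_1,Z]\in\Gamma(\ker\omega)$ so that $\inn([R_1,Z])\gamma=0$, this term vanishes as well. Thus $\inn(Z)\inn([R_1,R_2])\d\Theta=0$ for every $Z\in\Gamma(\ker\omega)$, which is exactly (b).

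I expect the difficulty to be organizational rather than genuine: the two points to get right are that closedness of $\omega$ is used only once, to secure involutivity of $\ker\omega$, and that Lemma \ref{involutive} must be invoked with the correct form $\alpha=\inn(R_1)\d\Theta$, whose membership in ${\cal A}^m(\ker\omega)$ is nothing but the defining property of $R_1\in\mathfrak{R}$. Once these ingredients are in place, the two Cartan-calculus manipulations collapse every term to zero, and (a) together with (b) gives $[R_1,R_2]\in\mathfrak{R}$.
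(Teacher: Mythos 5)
Your proof is correct and follows essentially the same route as the paper's: both deduce involutivity of $\ker\omega$ from $\d\omega=0$, expand $\inn([R_1,R_2])\d\Theta$ via the identity $\inn([R_1,R_2])=\Lie_{R_1}\inn(R_2)-\inn(R_2)\Lie_{R_1}$, and kill both terms after contraction with an arbitrary element of $\Gamma(\ker\omega)$ by invoking Lemma \ref{involutive}. The only (immaterial) difference is that for the term $\Lie_{R_1}(\inn(R_2)\d\Theta)$ the paper applies Cartan's formula and then Lemma \ref{involutive}, whereas you commute $\inn(Z)$ past $\Lie_{R_1}$; both manipulations are equivalent.
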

\begin{proof}
The distribution $\ker \omega$ is involutive because $\omega$ is closed.
For every $R_1,R_2\in \mathfrak{R}$, we have that $i(R_1)\d\Theta, i(R_2)\d\Theta\in \mathcal{A}^m(\ker\omega)$
and, as a consequence, their exterior differentials vanish by the action of two vector fields of $\Gamma(\ker\omega)$. 
Then, for every $Y\in \Gamma(\ker\omega)$,
\begin{align*}
i(Y)\left[i([R_1,R_2])\d \Theta\right]&=i(Y)\left[\Lie_{R_1}(i(R_2)\d\Theta)-i(R_2)\Lie_{R_1}\d\Theta\right]
\\ &=i(Y)\left[i(R_1)\d i(R_2)\d\Theta-i(R_2)\d i(R_1)\d\Theta\right]=0 \,.
\end{align*}
\end{proof}

\begin{definition}
\label{premulticontactdef}
The pair $(\Theta,\omega)$ is a \textbf{premulticontact structure} if $\omega$ is a closed form and, for $0\leq k\leq N-m$,
we have that:
\begin{enumerate}[{\rm (1)}]
\item\label{prekeromega}
$\rk\ker\omega=N$.
\item\label{prerankReeb}
$\rk{\cal D}^{\mathfrak{R}}=m+k$.
\item\label{prerankcar}
$\rk\left(\ker\omega\cap\ker\Theta\cap\ker\d\Theta\right)=k$.
\item \label{preReebComp}
${\cal A}^{m-1}(\ker\omega)=\{\inn(R)\Theta\mid R\in \mathfrak{R}\}$,
\end{enumerate}
Then, the triple $(P,\Theta,\omega)$ is said to be
a \textbf{premulticontact manifold}
and $\Theta$ is called a \textbf{premulticontact form} on $P$. 
The distribution
$\mathcal{C}\equiv\ker\omega\cap\ker\Theta\cap\ker\d\Theta$
is called the \textbf{characteristic distribution} of $(P,\Theta,\omega)$.

If $k=0$, the pair $(\Theta,\omega)$ is a \textbf{multicontact structure},
$(P,\Theta,\omega)$ is a \textbf{multicontact manifold}
and, in this situation, $\Theta$ is said to be a \textbf{multicontact form} on $P$.
\end{definition}

From now on, we will write `(pre)multicontact' to refer interchangeably to both situations
(multicontact and premulticontact),
and write `multicontact' or `premulticontact' to distinghish each of them in particular.

In \cite{Vi-2015}, L. Vitagliano introduces higher codimensional versions of contact manifolds that he calls `multicontact manifolds'. 
This notion is different from ours since, in \cite{Vi-2015},
a multicontact manifold is a manifold equipped with a maximally non-integrable distribution of higher codimension, which is called a multicontact structure. 
However, here we are considering differential forms, not distributions. 
The distinction is just the same as in contact geometry, 
where one can consider a $1$-codimensional contact distribution on a $(2n+1)$-dimensional manifold or a contact $1$-form. 
Both descriptions are related but do not coincide.
Our interest in using differential forms relies on the fact that we seek
a suitable geometric framework in order to write the Lagrangian and Hamiltonian field equations
for classical field theories.

\begin{lemma}
\label{lem:distPre}
The characteristic distribution of a (pre)multicontact manifold $(P,\Theta,\omega)$ is involutive and
$$
\ker\omega\cap\ker\Theta\cap\ker\d\Theta={\cal D}^{\mathfrak{R}}\cap\ker \Theta\,.
$$
\end{lemma}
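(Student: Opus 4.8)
I would prove the two assertions of Lemma~\ref{lem:distPre} separately, beginning with involutivity. Take $X,Y\in\Gamma(\mathcal{C})$, so $X,Y$ annihilate $\omega$, $\Theta$ and $\d\Theta$ simultaneously; I must check $[X,Y]$ does the same. For any closed form $\beta$ one has $\inn([X,Y])\beta=\Lie_X\inn(Y)\beta-\inn(Y)\Lie_X\beta=-\inn(Y)\bigl(\d\inn(X)\beta+\inn(X)\d\beta\bigr)$, which vanishes whenever $\inn(X)\beta=\inn(Y)\beta=0$ and $\d\beta=0$. Applying this to $\beta=\omega$ (closed by hypothesis) and to $\beta=\d\Theta$ (closed since $\d\d\Theta=0$) shows $[X,Y]\in\ker\omega\cap\ker\d\Theta$. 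For $\Theta$ itself I would use Cartan's formula together with $X\in\ker\Theta\cap\ker\d\Theta$ to get $\Lie_X\Theta=\d\inn(X)\Theta+\inn(X)\d\Theta=0$, whence $\inn([X,Y])\Theta=\Lie_X\inn(Y)\Theta-\inn(Y)\Lie_X\Theta=0$. Thus $[X,Y]\in\mathcal{C}$ and $\mathcal{C}$ is involutive. This part is routine Cartan calculus.

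For the set equality, the inclusion $\mathcal{C}\subseteq{\cal D}^{\mathfrak{R}}\cap\ker\Theta$ is immediate: if $v\in\mathcal{C}_{\rm p}$ then $v\in(\ker\omega)_{\rm p}$ and $\inn(v)\d\Theta=0\in{\cal A}^m_{\rm p}(\ker\omega)$, so $v\in{\cal D}^{\mathfrak{R}}_{\rm p}$ by Definition~\ref{def:reeb_dist}, while $\inn(v)\Theta=0$ places $v$ in $\ker\Theta$. The reverse inclusion I would obtain by a rank comparison rather than a direct pointwise argument. Consider, at each ${\rm p}\in P$, the linear map $\ell_{\rm p}\colon{\cal D}^{\mathfrak{R}}_{\rm p}\to\Lambda^{m-1}\Tan^*_{\rm p}P$ given by $\ell_{\rm p}(v)=\inn(v)\Theta$. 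Its kernel is exactly $\bigl({\cal D}^{\mathfrak{R}}\cap\ker\Theta\bigr)_{\rm p}$, so rank--nullity reduces everything to identifying the image of $\ell_{\rm p}$.

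The crux, and the step I expect to be the main obstacle, is showing $\Ima\ell_{\rm p}={\cal A}^{m-1}_{\rm p}(\ker\omega)$. Here I would invoke condition~\eqref{preReebComp} of Definition~\ref{premulticontactdef}, which asserts ${\cal A}^{m-1}(\ker\omega)=\{\inn(R)\Theta\mid R\in\mathfrak{R}\}$ at the level of sections; passing to the fibre requires extending any $v\in{\cal D}^{\mathfrak{R}}_{\rm p}$ to a Reeb section and any germ of ${\cal A}^{m-1}(\ker\omega)$ to a smooth element, which is legitimate because $\mathcal{D}^{\mathfrak{R}}$ has constant rank $m+k$ by condition~\eqref{prerankReeb} and $\ker\omega$ has constant rank $N$ by condition~\eqref{prekeromega}. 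Since $\ker\omega$ has corank $m$, its annihilator $(\ker\omega)^\circ$ has rank $m$ and ${\cal A}^{m-1}_{\rm p}(\ker\omega)=\Lambda^{m-1}(\ker\omega)^\circ_{\rm p}$ has dimension $\binom{m}{m-1}=m$. Then rank--nullity gives $\rk\bigl({\cal D}^{\mathfrak{R}}\cap\ker\Theta\bigr)=(m+k)-m=k$, which equals $\rk\mathcal{C}$ by condition~\eqref{prerankcar}. Combined with the inclusion already established, the two distributions of equal rank coincide, completing the proof.
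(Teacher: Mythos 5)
Your proposal is correct and follows essentially the same route as the paper's own proof: the identical Cartan-calculus computation for involutivity, the same immediate inclusion $\mathcal{C}\subseteq{\cal D}^{\mathfrak{R}}\cap\ker\Theta$, and the same rank comparison via the pointwise linear map $v\mapsto\inn(v)\Theta$ on ${\cal D}^{\mathfrak{R}}_{\rm p}$, whose image is ${\cal A}^{m-1}_{\rm p}(\ker\omega)$ of dimension $m$, giving kernel dimension $k$ by rank--nullity. The only difference is expository: you make explicit the section-to-point extension argument and the dimension count $\dim\Lambda^{m-1}\bigl((\ker\omega)^\circ_{\rm p}\bigr)=m$, both of which the paper leaves implicit.
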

\begin{proof} Consider the vector fields $X,Z\in \Gamma (\ker\omega\cap\ker\Theta\cap\ker\d\Theta)$ defined on an open set. Then, $[X,Z]\in \Gamma(\ker \omega\cap\ker\d\Theta)$ since $\omega$ and $\d\Theta$ are closed. Moreover, 
$$
\inn([X,Z])\Theta=\Lie_X\inn(Z)\Theta-\inn(Z)\d\inn(X)\Theta-\inn(Z)\inn(X)\d\Theta=0\,.
$$
For the other claim, consider $Y\in\ker\omega\cap\ker\Theta\cap\ker\d\Theta$. We have that $Y\in\ker\omega\cap\ker\d\Theta$, 
thus $Y\in {\cal D}^{\mathfrak{R}}$ and, therefore, $\ker\omega\cap\ker\Theta\cap\ker\d\Theta\subset{\cal D}^{\mathfrak{R}}\cap\ker \Theta\,$. By property (\ref{prerankcar}) in Definition \ref{premulticontactdef}, the rank of the characteristic distribution $\mathcal{C}$ is $k$, thus it needs to be proved that the rank of ${\cal D}^{\mathfrak{R}}\cap \ker \Theta$ is also $k$. At every point ${\rm p}\in P$, consider the linear map
\begin{align*}
 \widetilde\Theta_{{\rm p},{\cal D}^{\mathfrak{R}}} \colon \ {\cal D}_{\rm p}^{\mathfrak{R}} &\longrightarrow \Lambda^{m-1}\Tan^*_{\rm p}P
 \\ 
 v &\longmapsto i(v)\Theta \, .
\end{align*}
 Then, $\Ima(\widetilde \Theta_{{\rm p},{\cal D}^{\mathfrak{R}}})={\cal A}_{\rm p}^{m-1}(\ker\omega)$ has dimension $m$ (by properties (\ref{prekeromega}) and (\ref{preReebComp}) in Definition \ref{premulticontactdef}). Since ${\cal D}_{\rm p}^{\mathfrak{R}}$ has dimension $m+k$ by property (\ref{prerankReeb}), then ${\cal D}^{\mathfrak{R}}\cap \ker \Theta=\ker(\widetilde\Theta_{{\rm p},{\cal D}^{\mathfrak{R}}})$ has dimension $k$.
\end{proof}

Associated to a (pre)multicontact structure, we have the following elements:

\begin{proposition}
\label{sigma}
Given a (pre)multicontact manifold $(P,\Theta,\omega)$,
there exists a unique $1$-form
$\sigma_{\Theta}\in\df^1(P)$ verifying that
$$
\sigma_{\Theta}\wedge\inn(R)\Theta=\inn(R)\dd\Theta \,,\ \text{\rm for every $R\in\mathfrak{R}$} \,.
$$
\end{proposition}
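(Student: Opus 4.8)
The plan is to reduce the claim to pointwise linear algebra on the annihilator of $\ker\omega$, reusing the map already analysed in the proof of Lemma~\ref{lem:distPre}. Fix ${\rm p}\in P$ and set $W_{\rm p}:={\cal A}^1_{\rm p}(\ker\omega)$, the annihilator of $(\ker\omega)_{\rm p}$; by property~(\ref{prekeromega}) it has dimension $m$, and the standard identification of forms killed by contraction with a subspace gives ${\cal A}^{m-1}_{\rm p}(\ker\omega)=\Lambda^{m-1}W_{\rm p}$ (dimension $m$) and ${\cal A}^{m}_{\rm p}(\ker\omega)=\Lambda^{m}W_{\rm p}$ (dimension $1$). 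The linear-algebraic fact I would invoke is that for an $m$-dimensional space $W$ the wedge map $W\to\mathrm{Hom}(\Lambda^{m-1}W,\Lambda^{m}W)$, $w\mapsto w\wedge(\cdot)$, is an isomorphism: both sides have dimension $m$, and injectivity follows since a nonzero $w$ extends to a basis $w,f_2,\dots,f_m$ of $W$ with $w\wedge f_2\wedge\dots\wedge f_m\neq0$. Hence an element $\sigma_\Theta|_{\rm p}\in W_{\rm p}$ is determined by, and can realise, any prescribed linear map $\Lambda^{m-1}W_{\rm p}\to\Lambda^{m}W_{\rm p}$.

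For existence I would produce that linear map as follows. The proof of Lemma~\ref{lem:distPre} already shows that $\widetilde\Theta_{{\rm p},{\cal D}^{\mathfrak{R}}}\colon{\cal D}^{\mathfrak{R}}_{\rm p}\to\Lambda^{m-1}\Tan^*_{\rm p}P$, $v\mapsto\inn(v)\Theta$, has image exactly ${\cal A}^{m-1}_{\rm p}(\ker\omega)$ and kernel $\mathcal{C}_{\rm p}={\cal D}^{\mathfrak{R}}_{\rm p}\cap\ker\Theta$. Since $\mathcal{C}\subset\ker\d\Theta$, the assignment $v\mapsto\inn(v)\d\Theta$, which lands in ${\cal A}^{m}_{\rm p}(\ker\omega)$ by the very definition of the Reeb distribution, also annihilates $\mathcal{C}_{\rm p}$; it therefore factors through ${\cal D}^{\mathfrak{R}}_{\rm p}/\mathcal{C}_{\rm p}$ and, composed with the inverse of the isomorphism induced by $\widetilde\Theta_{{\rm p},{\cal D}^{\mathfrak{R}}}$, yields a well-defined linear map $\phi_{\rm p}\colon{\cal A}^{m-1}_{\rm p}(\ker\omega)\to{\cal A}^{m}_{\rm p}(\ker\omega)$ with $\phi_{\rm p}(\inn(v)\Theta)=\inn(v)\d\Theta$. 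Applying the wedge isomorphism to $\phi_{\rm p}$ gives a unique $\sigma_\Theta|_{\rm p}\in W_{\rm p}$ with $\sigma_\Theta|_{\rm p}\wedge\beta=\phi_{\rm p}(\beta)$ for all $\beta$, so in particular $\sigma_\Theta|_{\rm p}\wedge\inn(R)\Theta=\inn(R)\d\Theta$ for every $R\in\mathfrak{R}$. As $\ker\omega$ and ${\cal D}^{\mathfrak{R}}$ have constant rank (properties~(\ref{prekeromega}) and~(\ref{prerankReeb})), these constructions can be carried out over local frames, whence $\sigma_\Theta$ is a smooth $1$-form; the uniqueness below guarantees that the local pieces agree and define a global form.

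For uniqueness, if $\sigma$ and $\sigma'$ both solve the equation then $\tau:=\sigma-\sigma'$ satisfies $\tau\wedge\beta=0$ for all $\beta\in{\cal A}^{m-1}_{\rm p}(\ker\omega)$, using property~(\ref{preReebComp}) to let $\inn(R)\Theta$ sweep out the whole of ${\cal A}^{m-1}_{\rm p}(\ker\omega)=\Lambda^{m-1}W_{\rm p}$. Choosing a coframe $e^1,\dots,e^{N+m}$ at ${\rm p}$ with $e^{N+1},\dots,e^{N+m}$ a basis of $W_{\rm p}$, I would wedge $\tau=\sum_I c_I\,e^I$ against each of the $m$ basis elements $\alpha_\mu=e^{N+1}\wedge\dots\wedge\widehat{e^{N+\mu}}\wedge\dots\wedge e^{N+m}$ of $\Lambda^{m-1}W_{\rm p}$; the surviving products (namely $e^{N+\mu}\wedge\alpha_\mu$ and $e^I\wedge\alpha_\mu$ with $I\leq N$) are linearly independent $m$-forms, so $\tau\wedge\alpha_\mu=0$ for all $\mu$ forces every coefficient $c_I$ to vanish. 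Hence $\tau=0$.

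The conceptual heart—and the only place where the premulticontact axioms really enter beyond bookkeeping—is the well-definedness of $\phi_{\rm p}$, i.e. the implication $\inn(v)\Theta=0\Rightarrow\inn(v)\d\Theta=0$ for $v\in{\cal D}^{\mathfrak{R}}_{\rm p}$, which is exactly the content of Lemma~\ref{lem:distPre}; the remaining work is the elementary wedge isomorphism together with a routine constant-rank smoothness check.
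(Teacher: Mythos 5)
Your proof is correct and follows essentially the same route as the paper's: the paper views the defining relations, in the $m$ unknown coefficients of $\sigma_\Theta\in{\cal A}^1(\ker\omega)$, as an overdetermined linear system of $m+k$ equations whose compatibility is guaranteed by Lemma \ref{lem:distPre} (the equations coming from Reeb directions with $\inn(R)\Theta=0$ are trivially satisfied) and whose coefficient matrix has rank $m$ by condition (\ref{preReebComp}) of Definition \ref{premulticontactdef}; your factored map $\phi_{\rm p}$ together with the wedge isomorphism $W_{\rm p}\cong\mathrm{Hom}(\Lambda^{m-1}W_{\rm p},\Lambda^{m}W_{\rm p})$ is exactly that argument in invariant form. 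The only cosmetic difference is your uniqueness step, which tests against a full coframe and therefore applies to arbitrary $1$-forms, whereas the paper first observes that any solution must lie in ${\cal A}^1(\ker\omega)$.
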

\begin{proof} 
Since $\inn(R)\Theta\in{\cal A}^{m-1}(\ker \omega)$ and 
$\inn(R)\dd\Theta\in{\cal A}^{m}(\ker \omega)$, 
then, if $\sigma_{\Theta}$ exists, it must be an element of ${\cal A}^{1}(\ker \omega)$ and it has $m$ independent coefficients. 
At every point ${\rm p}\in P$, we have that $\{\sigma_{\Theta}\wedge \inn(R)\Theta|_{\rm p}=\inn(R)\dd \Theta|_{\rm p}\}_{R\in \mathfrak{R}}$ is a system of $m+k$ lineal equations with $m$ unknowns, which may not be compatible. 
For every $R\in \mathfrak{R}\subset\Gamma(\ker\omega)$ such that $\inn(R)\Theta=0$, we also have that $\inn(R)\d\Theta=0$ by Lemma \ref{lem:distPre}. 
From property (\ref{preReebComp}) in Definition \ref{premulticontactdef},
we have that the rank of the matrix representing $\inn(R)\Theta\vert_{\rm p}$, in a local basis of $\mathfrak{R}$,
is $m$ and thus the system has a unique solution.
\end{proof}

\begin{definition}
The $1$-form $\sigma_{\Theta}$ is called the \textbf{dissipation form}.
\end{definition}

Using this dissipation form we can define the following operator, which will be used later to set
the field equations in a (pre)multicontact manifold.

\begin{definition}\label{def:bard}
\label{bard}
Let $\sigma_{\Theta}\in\df^1(P)$ be the dissipation form. 
We define the operator
\begin{align*}
\bd:\df^k(P)&\longrightarrow\df^{k+1}(P)
\\
\beta&\longmapsto\bd\beta=\dd \beta+\sigma_{\Theta}\wedge\beta\,.
\end{align*}
\end{definition}

We have that $\bd^2=0$ if, and only if, $\d \sigma_{\Theta}=0$. In this case, it induces a 
{\sl Lichnerowicz--Jacobi cohomology} \cite{LLMP-1999}. 
One consequence is that, locally, there exists a function such that $\sigma_{\Theta} = \d f$ and $\bd\beta=e^{-f}\d (e^{f}\beta)$. 
In this case, we say that the pair $(\Theta,\omega)$ is a \textbf{closed multicontact structure}. 
This is also the condition required in order to consider variational higher-order contact Lagrangian field theories \cite{GLMR-2022} (see Section \ref{vaf}).

Note that the 1-form $\sigma_{\Theta}$ does not only depend on $\Theta$, but also on $\omega$, since the Reeb distribution depends on the form $\omega$.
However, since the volume form is understood and in order to shorten notation, we omit the dependence on $\omega$. 

A premulticontact manifold $(P,\Theta,\omega)$ has three associated distributions: $\ker \omega$, the Reeb distribution ${\cal D}^\mathfrak{R}$, and the characteristic distribution $\mathcal{C}$. 
They are all involutive and are nested: $\mathcal{C}\subset {\cal D}^\mathfrak{R}\subset \ker\omega$. We can use these facts to obtain adapted coordinates.

\begin{theorem}
\label{lem:adaptedcoordgeneral}
Let $P$ be a differentiable manifold and let $D_1,D_2,D_3\subset\T P$ be three involutive distributions on $P$. Suppose that
\begin{enumerate}[{\rm (1)}]
\item $D_3\subset D_2 \subset D_1$,
\item $\rk D_3=r_3$, $\rk D_2=r_3+r_2$ and $\rk D_1=r_3+r_2+r_1$\,,
\item $\dim\, M=r=r_3+r_2+r_1+r_0$\,,
\end{enumerate}
with $r_3,r_2,r_1,r_0\in \Nat^* = \Nat\setminus\{0\}$.
Then, for every ${\rm p}\in M$, there exists a local chart of coordinates $(U;x^1,\ldots,x^r)$, ${\rm p}\in U$, such that:
\begin{enumerate}[{\rm (1)}]
\item $\displaystyle D_3\vert_U=\Big\langle \frac{\partial}{\partial x^1},\ldots,\frac{\partial}{\partial x^{r_3}}\Big\rangle$\,,
\item $\displaystyle D_2\vert_U=\Big\langle \frac{\partial}{\partial x^1},\ldots,\frac{\partial}{\partial x^{r_3}},\frac{\partial}{\partial x^{r_3+1}},\ldots,\frac{\partial}{\partial x^{r_3+r_2}}\Big\rangle$\,,
\item $D_1\displaystyle \vert_U=\Big\langle \frac{\partial}{\partial x^1},\ldots,\frac{\partial}{\partial x^{r_3}},\frac{\partial}{\partial x^{r_3+1}},\ldots,\frac{\partial}{\partial x^{r_3+r_2}},\frac{\partial}{\partial x^{r_3+r_2+1}},\ldots,\frac{\partial}{\partial x^{r_3+r_2+r_1}}\Big\rangle$\,.
\end{enumerate}
\end{theorem}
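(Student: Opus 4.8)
The statement is an iterated, flag version of the Frobenius theorem, so the plan is to straighten the three nested distributions one at a time, from the largest to the smallest, taking care that each straightening preserves the normal form already achieved for the larger ones. Throughout, set $q_1:=r_3+r_2+r_1=\rk D_1$ and $q_2:=r_3+r_2=\rk D_2$, with $\rk D_3=r_3$.

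First I would apply the classical Frobenius theorem to the involutive distribution $D_1$. This yields a chart $(U;x^1,\ldots,x^r)$ around $\mathrm{p}$ in which $D_1\vert_U=\langle\partial/\partial x^1,\ldots,\partial/\partial x^{q_1}\rangle$, the remaining $r_0=r-q_1$ coordinates being transverse to the $D_1$-foliation. Next, since $D_2\subset D_1$, the distribution $D_2$ is contained in the ``horizontal'' bundle $\langle\partial/\partial x^1,\ldots,\partial/\partial x^{q_1}\rangle$; being involutive of constant rank $q_2$, it restricts to an involutive distribution on each leaf $\{x^{q_1+1}=\mathrm{const},\ldots,x^r=\mathrm{const}\}$ of $D_1$. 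Applying the Frobenius theorem with parameters — straightening $D_2$ leafwise, smoothly in the transverse coordinates $x^{q_1+1},\ldots,x^r$, which play the role of parameters — produces a change of coordinates affecting only $x^1,\ldots,x^{q_1}$ (mixing them among themselves while fixing the transverse coordinates), after which $D_2=\langle\partial/\partial x^1,\ldots,\partial/\partial x^{q_2}\rangle$. The key elementary point is that any coordinate change of the form $\tilde x^a=\tilde x^a(x^1,\ldots,x^{q_1};x^{q_1+1},\ldots,x^r)$ for $a\leq q_1$ and $\tilde x^b=x^b$ for $b>q_1$ preserves the span $\langle\partial/\partial x^1,\ldots,\partial/\partial x^{q_1}\rangle$, so straightening $D_2$ does not spoil the normal form of $D_1$.

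Finally, I would repeat the same step with $D_3\subset D_2$: now $D_3$ lies inside $\langle\partial/\partial x^1,\ldots,\partial/\partial x^{q_2}\rangle$, so a parameter-dependent Frobenius straightening affecting only $x^1,\ldots,x^{q_2}$ brings it to the form $D_3=\langle\partial/\partial x^1,\ldots,\partial/\partial x^{r_3}\rangle$. By the same observation as before, a coordinate change internal to the first $q_2$ variables (with the other variables fixed as parameters) leaves invariant both $\langle\partial/\partial x^1,\ldots,\partial/\partial x^{q_2}\rangle$ and $\langle\partial/\partial x^1,\ldots,\partial/\partial x^{q_1}\rangle$, since $q_2<q_1$; hence the earlier normal forms of $D_2$ and $D_1$ survive. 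This produces the desired simultaneous chart.

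The routine parts are the individual local Frobenius applications. The step I expect to require the most care is the second one and its repetition: making the straightening uniform across a full neighborhood of $\mathrm{p}$ rather than on a single leaf. This is precisely the content of the parametric version of Frobenius, and it works because $D_2$ (respectively $D_3$) is a genuine involutive distribution on all of $P$ that merely happens to lie inside the horizontal bundle, so its leafwise normal form depends smoothly on the transverse coordinates. The accompanying bookkeeping — that a coordinate change internal to the first $q_i$ variables leaves $\langle\partial/\partial x^1,\ldots,\partial/\partial x^{q_j}\rangle$ invariant whenever $q_j\geq q_i$ — is the elementary verification that guarantees compatibility of the three normal forms.
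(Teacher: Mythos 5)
Your proposal is correct and takes essentially the same route as the paper: straighten $D_1$ by Frobenius, then straighten $D_2$ via a coordinate change that mixes only the first $r_3+r_2+r_1$ coordinates while fixing the transverse ones (so the normal form of $D_1$ survives), then repeat for $D_3\subset D_2$. The only difference is the implementation of the relative step: the paper picks commuting local generators $X_1,\ldots,X_{r_3+r_2}$ of $D_2$ (citing Narasimhan's proof of Frobenius) and uses their flow parameters as the new coordinates, whereas you invoke a leafwise/parametric Frobenius theorem --- two equivalent justifications of the same step.
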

\begin{proof}
We will follow the ideas contained in the proof of Frobenius theorem given in \cite{book:Narasimhan}.

Let ${\rm p}\in P$, by Frobenius theorem there exists a
chart of coordinates $(W;y^1,\ldots,y^r)$, with ${\rm p}\in W$, such that 
$$
D_1\vert_W=\Big\langle\frac{\partial}{\partial y^1},\ldots,\frac{\partial}{\partial y^{r_3+r_2+r_1}}\Big\rangle\ .
$$
Let $X_1,\ldots,X_{r_3+r_2}$ be local generators of $D_2$ such that
$$
D_1\vert_W=\Big\langle X_1,\ldots,X_{r_3+r_2},\frac{\partial}{\partial y^{r_3+r_2+1}},\ldots,\frac{\partial}{\partial y^{r_3+r_2+r_1}}\Big\rangle\,,
$$
and $[X_i,X_j]=0$ for every $i,j$. For the existence of such generators of $D_2$ see \cite{book:Narasimhan}.
Then we can take the parameters of the local flow of $X_1,\ldots,X_{r_3+r_2}$ as new coordinates, maybe reducing the neighborhood $W$, and obtain a new coordinate system $(V;z^1,\ldots, z^r)$,
with ${\rm p}\in V\subseteq W$, such that
\begin{enumerate}[{\rm (1)}]
\item $\displaystyle D_1\vert_V=\Big<\frac{\partial}{\partial z^1},\ldots,\frac{\partial}{\partial z^{r_3+r_2}},\frac{\partial}{\partial z^{r_3+r_2+1}},\ldots\frac{\partial}{\partial z^{r_3+r_2+r_1}}\Big>\,,$
\item $\displaystyle D_2\vert_V=\Big<\frac{\partial}{\partial z^1},\ldots,\frac{\partial}{\partial z^{r_3+r_2}}\Big>\ .$
\end{enumerate}
Repeating the procedure above with $D_3\subset D_2$, we obtain the required coordinate system in an appropriate neighborhood $U$ of ${\rm p}$.
\end{proof}

\begin{corollary} 
\label{cor:coordenadas}
Around every point ${\rm p}\in P$ of a premulticontact manifold $(P,\Theta,\omega)$, there exists a local chart of {\rm adapted coordinates} 
$(U;x^1,\dots,x^m,u^1\dots,u^{N-m-k},s^{1}\dots,s^{m},w^{1},\dots,w^{k})$ such that
\begin{align*}
\ker\omega\vert_U&=\Big<\frac{\partial}{\partial u^1},\dots, \frac{\partial}{\partial u^{N-m-k}},\frac{\partial}{\partial s^1},\dots, \frac{\partial}{\partial s^m},\frac{\partial}{\partial w^1},\dots, \frac{\partial}{\partial w^{k}}\Big> \,, \\   {\cal D}^\mathfrak{R}\vert_U&=\Big<\frac{\partial}{\partial s^1},\dots, \frac{\partial}{\partial s^m},\frac{\partial}{\partial w^1},\dots, \frac{\partial}{\partial w^{k}}\Big> \,, \\    \mathcal{C}\vert_U&=\Big<\frac{\partial}{\partial w^1},\dots, \frac{\partial}{\partial w^{k}}\Big>\,.
\end{align*}
\end{corollary}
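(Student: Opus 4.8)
The plan is to recognize this statement as a direct application of Theorem \ref{lem:adaptedcoordgeneral} to the three nested distributions naturally attached to a premulticontact manifold. Concretely, I would set $D_1=\ker\omega$, $D_2={\cal D}^{\mathfrak{R}}$ and $D_3=\mathcal{C}$, and then check that they satisfy the hypotheses (1)--(3) of that theorem, after which the coordinate chart is produced and only a relabelling of the coordinate blocks remains.

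For the hypotheses, involutivity of the three distributions has already been established in the preceding results: $\ker\omega$ is involutive because $\omega$ is closed (as noted in the proof of Lemma \ref{lem:reebinvolutive}), the Reeb distribution ${\cal D}^{\mathfrak{R}}$ is involutive by Lemma \ref{lem:reebinvolutive}, and the characteristic distribution $\mathcal{C}$ is involutive by Lemma \ref{lem:distPre}. The nesting $\mathcal{C}\subset{\cal D}^{\mathfrak{R}}\subset\ker\omega$ follows from the same sources: the inclusion ${\cal D}^{\mathfrak{R}}\subset\ker\omega$ is immediate from Definition \ref{def:reeb_dist}, while Lemma \ref{lem:distPre} gives the identity $\mathcal{C}={\cal D}^{\mathfrak{R}}\cap\ker\Theta$, whence $\mathcal{C}\subset{\cal D}^{\mathfrak{R}}$. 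Finally, the required rank conditions are exactly properties (\ref{prekeromega}), (\ref{prerankReeb}) and (\ref{prerankcar}) of Definition \ref{premulticontactdef}, namely $\rk\mathcal{C}=k$, $\rk{\cal D}^{\mathfrak{R}}=m+k$ and $\rk\ker\omega=N$.

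It then remains to match the numerology of Theorem \ref{lem:adaptedcoordgeneral}. Reading off the ranks gives $r_3=k$, $r_3+r_2=m+k$ and $r_3+r_2+r_1=N$, hence $r_2=m$ and $r_1=N-m-k$; since $\dim P=m+N$, the complementary block has dimension $r_0=(m+N)-N=m$. The theorem then supplies a chart $(U;x^1,\dots,x^{m+N})$ in which the three distributions are spanned by the first $r_3$, the first $r_3+r_2$, and the first $r_3+r_2+r_1$ coordinate vector fields, respectively. Relabelling the four consecutive coordinate blocks — the first $k$ as $w^1,\dots,w^k$, the next $m$ as $s^1,\dots,s^m$, the next $N-m-k$ as $u^1,\dots,u^{N-m-k}$, and the final $m$ as $x^1,\dots,x^m$ — reproduces verbatim the claimed spanning sets for $\ker\omega$, ${\cal D}^{\mathfrak{R}}$ and $\mathcal{C}$.

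The only genuine care required concerns the degenerate cases in which one of the integers $r_i$ vanishes, since Theorem \ref{lem:adaptedcoordgeneral} is stated for $r_3,r_2,r_1,r_0\in\Nat^*$. This happens, most notably, in the multicontact case $k=0$ (where $\mathcal{C}=0$ and the $w$-block is absent) and in the boundary case $N=m+k$ (where $r_1=0$ and the $u$-block is absent). In each such situation I would apply the theorem to the non-trivial sub-chain of distributions that survive, or equivalently invoke the evident degenerate version of its statement, and delete the corresponding empty coordinate block. This is the main — and essentially the only — obstacle, and it is a matter of bookkeeping rather than of substance.
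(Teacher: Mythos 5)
Your proposal is correct and follows essentially the same route as the paper's own proof, which likewise invokes Theorem \ref{lem:adaptedcoordgeneral} with $D_1=\ker\omega$, $D_2={\cal D}^{\mathfrak{R}}$, $D_3=\mathcal{C}$ and the numerology $r_3=k$, $r_2=m$, $r_1=N-m-k$, $r_0=m$. The paper states this as a one-line consequence (noting only that for $k=0$ the $w$-coordinates disappear), whereas you additionally spell out the involutivity, nesting, and rank hypotheses from Lemmas \ref{lem:reebinvolutive} and \ref{lem:distPre} and Definition \ref{premulticontactdef}, and flag the degenerate blocks — the same argument, carried out in more detail.
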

\begin{proof}
It is a straightforward consequence of Theorem \ref{lem:adaptedcoordgeneral}, taking $D_1=\ker\omega$, ${\cal D}^\mathfrak{R}=D_2$ and $\mathcal{C}=D_3$, with $r_0=m$, $r_1=N-m-k$, $r_2=m$, and $r_3=k$.

Observe that for a multicontact manifold,
since ${\cal C}=\{0\}$, there are no coordinates $(w^j)$.
\end{proof}

On these charts, the coordinates $(x^\mu)$
can be chosen in such a way that the form $\omega$ reads 
$\omega\vert_U=\dd x^1\wedge\dots\wedge\dd x^m\equiv\dd^m x$,
and so we shall do henceforth.
Then we denote $\displaystyle\d^{m-1}x_\mu=\inn\Big(\frac{\partial}{\partial x^\mu}\Big)\d^mx$.

Taking into account these results, we can give a local characterization of the Reeb vector fields.

\begin{proposition}
\label{reeblemma}
If $(P,\Theta,\omega)$ is a multicontact manifold, in the above chart of coordinates,
there exists a unique local basis
$\{ R_\mu\}$ of $\mathfrak{R}$ such that 
$$
\inn(R_\mu)\Theta=\dd^{m-1}x_\mu\,.
$$
In addition, $[R_\mu,R_\nu] = 0$.
\end{proposition}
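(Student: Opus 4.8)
The plan is to work locally in the adapted chart of Corollary \ref{cor:coordenadas} and to reduce the whole statement to a single algebraic fact valid when $k=0$: contraction with $\Theta$ identifies the Reeb distribution with $\mathcal{A}^{m-1}(\ker\omega)$. Indeed, in the multicontact case $k=0$, so by property (\ref{prerankcar}) of Definition \ref{premulticontactdef} together with Lemma \ref{lem:distPre} the characteristic distribution vanishes and
$$
{\cal D}^{\mathfrak{R}}\cap\ker\Theta=\mathcal{C}=\{0\}.
$$
Hence the pointwise linear map $\widetilde\Theta_{{\rm p},{\cal D}^{\mathfrak{R}}}\colon {\cal D}_{\rm p}^{\mathfrak{R}}\to\Lambda^{m-1}\Tan^*_{\rm p}P$, $v\mapsto\inn(v)\Theta$, used in the proof of Lemma \ref{lem:distPre}, is injective; since $\rk{\cal D}^{\mathfrak{R}}=m$ and (as shown there, using properties (\ref{prekeromega}) and (\ref{preReebComp})) its image is ${\cal A}^{m-1}(\ker\omega)$, also of rank $m$, this map is a fibrewise, hence $\Cinfty(U)$-module, isomorphism $\mathfrak{R}\cong{\cal A}^{m-1}(\ker\omega)$.

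Next I would identify ${\cal A}^{m-1}(\ker\omega)$ explicitly in the adapted chart. Since $\ker\omega\vert_U=\langle\partial/\partial u^a,\partial/\partial s^\alpha\rangle$, its annihilator is $\langle\d x^1,\dots,\d x^m\rangle$, so a form lies in ${\cal A}^{m-1}(\ker\omega)$ if, and only if, it belongs to $\Lambda^{m-1}\langle\d x^\mu\rangle$; a basis of this $m$-dimensional space is precisely $\{\d^{m-1}x_\mu\}_{\mu=1}^m$, each $\d^{m-1}x_\mu$ indeed vanishing under contraction with every $\partial/\partial u^a$ and $\partial/\partial s^\alpha$. Applying the inverse of the isomorphism above to this basis produces unique local vector fields $R_\mu\in\mathfrak{R}$ with $\inn(R_\mu)\Theta=\d^{m-1}x_\mu$, which then form a local basis of $\mathfrak{R}$; uniqueness is immediate, since any two solutions differ by an element of ${\cal D}^{\mathfrak{R}}\cap\ker\Theta=\{0\}$. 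This settles existence, uniqueness and the basis property.

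The substantive part is the bracket relation. By Lemma \ref{lem:reebinvolutive} ($\omega$ being closed of constant rank) $\mathfrak{R}$ is involutive, so $[R_\mu,R_\nu]\in\mathfrak{R}=\Gamma({\cal D}^{\mathfrak{R}})$; by the injectivity above it suffices to prove $\inn([R_\mu,R_\nu])\Theta=0$, forcing $[R_\mu,R_\nu]\in{\cal D}^{\mathfrak{R}}\cap\ker\Theta=\{0\}$. I would compute using $\inn([R_\mu,R_\nu])\Theta=\Lie_{R_\mu}(\inn(R_\nu)\Theta)-\inn(R_\nu)\Lie_{R_\mu}\Theta$. For the first term, $\inn(R_\nu)\Theta=\d^{m-1}x_\nu$ is closed and $\inn(R_\mu)\d^{m-1}x_\nu=0$, because $R_\mu$ is a section of ${\cal D}^{\mathfrak{R}}=\langle\partial/\partial s^\alpha\rangle$ while $\d^{m-1}x_\nu\in\Lambda^{m-1}\langle\d x^\mu\rangle$; by Cartan's formula this term vanishes. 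For the second term, $\Lie_{R_\mu}\Theta=\inn(R_\mu)\d\Theta+\d(\d^{m-1}x_\mu)=\inn(R_\mu)\d\Theta$, and Proposition \ref{sigma} gives $\inn(R_\mu)\d\Theta=\sigma_{\Theta}\wedge\inn(R_\mu)\Theta=\sigma_{\Theta}\wedge\d^{m-1}x_\mu$. Since $\sigma_{\Theta}\in{\cal A}^1(\ker\omega)=\langle\d x^\mu\rangle$ and $R_\nu$ is again a section of $\langle\partial/\partial s^\alpha\rangle$, both $\inn(R_\nu)\sigma_{\Theta}$ and $\inn(R_\nu)\d^{m-1}x_\mu$ vanish, whence $\inn(R_\nu)\Lie_{R_\mu}\Theta=\inn(R_\nu)(\sigma_{\Theta}\wedge\d^{m-1}x_\mu)=0$. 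Therefore $\inn([R_\mu,R_\nu])\Theta=0$ and $[R_\mu,R_\nu]=0$.

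I expect the main obstacle to be the bookkeeping in this bracket computation: recognizing that $\Lie_{R_\mu}\Theta=\sigma_{\Theta}\wedge\d^{m-1}x_\mu$ through Proposition \ref{sigma}, and then systematically using that the Reeb fields are valued in $\langle\partial/\partial s^\alpha\rangle$ while $\sigma_{\Theta}$ and all $\d^{m-1}x_\mu$ are built from the horizontal covectors $\d x^\mu$, so that every cross-contraction collapses. Everything else is a formal consequence of the $k=0$ isomorphism $\mathfrak{R}\cong{\cal A}^{m-1}(\ker\omega)$.
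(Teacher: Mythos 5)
Your proposal is correct and follows essentially the same route as the paper: existence and uniqueness via the pointwise isomorphism $v\mapsto\inn(v)\Theta$ from the Reeb distribution onto ${\cal A}^{m-1}(\ker\omega)$ (injective since $\mathcal{C}={\cal D}^{\mathfrak{R}}\cap\ker\Theta=\{0\}$, surjective by condition (4)), and the bracket relation by showing $\inn([R_\mu,R_\nu])\Theta=0$ with Cartan calculus and then invoking involutivity of $\mathfrak{R}$ together with Lemma \ref{lem:distPre}. The only cosmetic difference is that you kill the term $\inn(R_\nu)\inn(R_\mu)\d\Theta$ via Proposition \ref{sigma} and the adapted coordinates, whereas the paper uses directly that $\inn(R_\mu)\d\Theta\in{\cal A}^m(\ker\omega)$ and $R_\nu\in\Gamma(\ker\omega)$; both are valid.
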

\begin{proof}
For every ${\rm p}\in P$ we have the linear map
$\widetilde\Theta_{\rm p}:\Tan_{\rm p}P\rightarrow\Lambda^{m-1}\cT_{\rm p}P$ given by $v\mapsto\inn (v)\Theta_{\rm p}$, for every $v\in\Tan_{\rm p}P$.
For the subset $\mathfrak{R}_{\rm p}\subset\Tan_{\rm p}P$, we have that $\widetilde\Theta_{\rm p}(\mathfrak{R}_{\rm p})={\cal A}_{\rm p}^{m-1}(\ker\omega)$, where both spaces
$\mathfrak{R}_{\rm p}$ and ${\cal A}_{\rm p}^{m-1}(\ker\omega)$ have the same dimension. Then $\widetilde\Theta_{\rm p}$ is an isomorphism and using the above system of coordinates and taking $\{\d^{m-1}x_\mu\vert_{\rm p}\}$
as local basis in ${\cal A}_{\rm p}^{m-1}(\ker\omega)$,
there exists a unique solution to $\inn((R_\mu)_{\rm p})\Theta_{\rm p}=\dd^{m-1}x_\mu\vert_{\rm p}$.
This solution can be extended to the open neighbourhood $U$ of ${\rm p}$ where the local chart of coordinates is defined.
Finally, in $U$:
$$
i([R_\mu,R_\nu])\Theta=\Lie_{R_\mu}(i(R_\nu)\Theta)+i(R_\nu)\d i(R_\mu)\Theta+i(R_\nu)i(R_\mu)\d\Theta=0\,.
$$
Then, $[R_\mu,R_\nu]\in\Gamma(\ker\Theta)$ and $[R_\mu,R_\nu]\in\mathfrak{R}$ because the Reeb distribution is involutive (see Lemma \ref{lem:reebinvolutive}). Therefore, by Lemma \ref{lem:distPre}, $[R_\mu,R_\nu]\in\mathcal{C}=\{0\}$.
\end{proof}

\begin{definition}
\label{reebprop}
The above vector fields $R_\mu\in\mathfrak{R}$ are the \textbf{local Reeb vector fields} of the multicontact manifold  $(P,\Theta,\omega)$
in the chart $U\subset P$.
\end{definition}

From Definition \ref{def:reeb_dist} and, in particular, from \eqref{Reebdef}, we have that there exist local functions $\Gamma_\mu\in \Cinfty(U)$ associated with the basis $\{ R_\mu\}$, which are given by
$$
\inn(R_\mu)\d\Theta = \Gamma_\mu\,\omega\,, \quad \forall \mu\,,
$$
because $\inn(R_\mu)\d\Theta\in{\cal A}_{\rm p}^m(\ker\omega)=\langle\omega\rangle$.
As a consequence, the dissipation form introduced in Proposition \ref{sigma} can be locally expressed on these charts as 
$$
\sigma_{\Theta}=\Gamma_\mu\dd x^\mu \,, 
$$
because $\sigma_{\Theta}\wedge\dd^{m-1}x_\mu=\Gamma_\mu\,\omega=\Gamma_\mu\, \dd^{m}x$, for every $\mu$.

In the premulticontact case, we have the following result.

\begin{proposition}
\label{reeblemma2}
If $(P,\Theta,\omega)$ is a premulticontact manifold, there exist local vector fields
$\{ R_\mu\}$ of $\mathfrak{R}$ such that $\mathfrak{R}=\left<R_\mu\right>+ \mathcal {C}$ and $\inn(R_\mu)\Theta=\dd^{m-1}x_\mu$. They are unique up to a term in the characteristic distribution. Moreover $[R_\mu,R_\nu]\in\Gamma(\mathcal{C})$.
\end{proposition}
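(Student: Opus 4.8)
The plan is to mirror the proof of Proposition \ref{reeblemma}, the only structural difference being that the pointwise contraction map sending a Reeb vector to its interior product with $\Theta$ is now surjective with a nontrivial kernel rather than an isomorphism. Concretely, I fix ${\rm p}\in P$, pass to the adapted chart of Corollary \ref{cor:coordenadas} (so that $\omega\vert_U=\d^m x$ and ${\cal D}^{\mathfrak{R}}$, $\mathcal{C}$ take their coordinate forms), and reuse the linear map $\widetilde\Theta_{{\rm p},{\cal D}^{\mathfrak{R}}}\colon{\cal D}^{\mathfrak{R}}_{\rm p}\to\Lambda^{m-1}\Tan^*_{\rm p}P$, $v\mapsto\inn(v)\Theta$, introduced in the proof of Lemma \ref{lem:distPre}. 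That lemma already supplies $\Ima\widetilde\Theta_{{\rm p},{\cal D}^{\mathfrak{R}}}={\cal A}^{m-1}_{\rm p}(\ker\omega)$ (of dimension $m$) and $\ker\widetilde\Theta_{{\rm p},{\cal D}^{\mathfrak{R}}}={\cal D}^{\mathfrak{R}}\cap\ker\Theta=\mathcal{C}_{\rm p}$ (of dimension $k$). The forms $\d^{m-1}x_1\vert_{\rm p},\dots,\d^{m-1}x_m\vert_{\rm p}$ are linearly independent and lie in ${\cal A}^{m-1}_{\rm p}(\ker\omega)$, since each is a contraction of $\d^m x$ and hence annihilates every coordinate field spanning $\ker\omega$; so they constitute a basis of the image.

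By surjectivity, for each $\mu$ there is a vector $v_\mu\in{\cal D}^{\mathfrak{R}}_{\rm p}$ with $\inn(v_\mu)\Theta=\d^{m-1}x_\mu\vert_{\rm p}$, determined uniquely modulo $\ker\widetilde\Theta_{{\rm p},{\cal D}^{\mathfrak{R}}}=\mathcal{C}_{\rm p}$; this already yields the asserted uniqueness up to the characteristic distribution. I then extend these pointwise solutions to smooth local sections $R_\mu\in\mathfrak{R}$ on a neighbourhood of ${\rm p}$, exactly as in Proposition \ref{reeblemma}. Since the $R_\mu$ project onto a basis of ${\cal A}^{m-1}(\ker\omega)$ while $\mathcal{C}$ is precisely the kernel of the same map, $\langle R_\mu\rangle$ and $\mathcal{C}$ together span ${\cal D}^{\mathfrak{R}}$ at each point, which gives $\mathfrak{R}=\langle R_\mu\rangle+\mathcal{C}$ at the level of sections.

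For the bracket I repeat the computation of Proposition \ref{reeblemma}: expanding $\inn([R_\mu,R_\nu])\Theta$ via Cartan's formula and the identity $\Lie_X\inn(Y)-\inn(Y)\Lie_X=\inn([X,Y])$ produces the three terms $\Lie_{R_\mu}(\inn(R_\nu)\Theta)$, $\inn(R_\nu)\inn(R_\mu)\d\Theta$ and $\inn(R_\nu)\d\inn(R_\mu)\Theta$. The last equals $\inn(R_\nu)\d\d^{m-1}x_\mu=0$; the middle vanishes because $\inn(R_\mu)\d\Theta\in{\cal A}^m(\ker\omega)=\langle\omega\rangle$ while $R_\nu\in\ker\omega$; and the first vanishes because $\inn(R_\mu)\d^{m-1}x_\nu=0$ (the $R_\mu$ carry no $\partial/\partial x^\rho$-component) together with $\d\d^{m-1}x_\nu=0$. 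Hence $[R_\mu,R_\nu]\in\ker\Theta$, and combining this with the involutivity of $\mathfrak{R}$ (Lemma \ref{lem:reebinvolutive}) and Lemma \ref{lem:distPre} forces $[R_\mu,R_\nu]\in{\cal D}^{\mathfrak{R}}\cap\ker\Theta=\Gamma(\mathcal{C})$. The main departure from the multicontact case — and the only genuine subtlety — is that here $\mathcal{C}\neq\{0\}$, so one can neither conclude $[R_\mu,R_\nu]=0$ nor pin down a canonical basis $\{R_\mu\}$; the whole construction is well-defined only modulo the characteristic distribution, and I would carry this ambiguity explicitly throughout rather than try to remove it.
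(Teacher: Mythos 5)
Your proof is correct and follows essentially the same route as the paper's: the paper also works with the pointwise map $v\mapsto \inn(v)\Theta$ on ${\cal D}^{\mathfrak{R}}_{\rm p}$ (phrased there as an isomorphism on the quotient $\mathfrak{R}_{\rm p}/\mathcal{C}_{\rm p}$, which is exactly your surjectivity-with-kernel-$\mathcal{C}_{\rm p}$ formulation), extends the pointwise solutions to a neighbourhood, and settles the bracket by the same computation combined with Lemmas \ref{lem:reebinvolutive} and \ref{lem:distPre}. No gaps.
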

\begin{proof}
For every ${\rm p}\in P$ we have the linear map
$\widetilde\Theta_{\rm p}:\Tan_{\rm p}P\rightarrow\Lambda^{m-1}\cT_{\rm p}P$, with $\widetilde\Theta_{\rm p}(\mathcal{C}_{\rm p})=0$. Then, $\widetilde\Theta_{\rm p}(\mathfrak{R}_{\rm p}/\mathcal{C}_{\rm p})={\cal A}_{\rm p}^{m-1}(\ker\omega)$, where both spaces
$\mathfrak{R}_{\rm p}/\mathcal{C}_{\rm p}$ and ${\cal A}_{\rm p}^{m-1}(\ker\omega)$ have the same dimension. Then $\widetilde\Theta_{\rm p}$ is an isomorphism between these spaces and, taking $\{\d^{m-1}x_\mu\vert_{\rm p}\}$
as a local basis in ${\cal A}_{\rm p}^{m-1}(\ker\omega)$,
there exists a unique solution to $\inn((R_\mu)_{\rm p})\Theta_{\rm p}=\dd^{m-1}x_\mu\vert_{\rm p}$, up to a term in $\mathcal{C}_{\rm p}$. This solution can be extended to the open neighbourhood $U$ of ${\rm p}$ where the local chart of coordinates is defined.
The proof of the last assertion is the same as in the above case.
\end{proof}

Note that, using adapted coordinates,
the local Reeb vector fields introduced in Propositions 
\ref{reeblemma} and \ref{reeblemma2}
read $\displaystyle R_\mu=\derpar{}{s^\mu}$.

\subsection{Bundle structures in multicontact and premulticontact manifolds}

Associated to a (pre)multicontact  structure $(\Theta,\omega)$ in a manifold $P$ 
there are the two involutive distributions: $\ker\omega$
and the Reeb distribution ${\cal D}^{\cal R}$,
with ${\cal D}^{\cal R}\subset\ker\omega$.
We can consider the corresponding quotient sets,
$M\equiv P/\ker\omega$ and $\mathcal{E}\equiv P/{\cal D}^{\cal R}$.
From now on we assume:

\noindent\textbf{Assumption:} The quotients $M$ and $\mathcal{E}$ are differentiable manifolds.

\noindent Hence we have the natural projections
$$
\begin{array}{ccccc}
\tau\ \colon & P &\to & M & \\
&(x^\mu,u^I,s^\mu,w^r) &\longmapsto & (x^\mu)\,, & \\ 
\varsigma \ \colon& P &\to & \mathcal{E} &\\
& (x^\mu,u^I,s^\mu,w^r) &\longmapsto & (x^\mu,u^I)\,, & \\
\varepsilon\ \colon& \mathcal{E} &\to & M & \\
& (x^\mu,u^I) &\longmapsto & (x^\mu)\,. &
\end{array}
$$
Furthermore, the form $\omega$ is obviously $\tau$-projectable to a form 
$\omega_{_M}\in\df^m(M)$,
which is a volume form in $M$.
Therefore we can state:

\begin{proposition}
Every (pre)multicontact manifold $(P,\Theta,\omega)$ is locally diffeomorphic 
to a fiber bundle $\tau\colon P\to M$,
where $M$ is an orientable manifold with volume form $\omega_{_M}$, and $\omega=\tau^*\omega_{_M}$.
\end{proposition}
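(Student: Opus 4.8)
The plan is to realize $\tau\colon P\to M$ as the leaf projection of the foliation integrating $\ker\omega$, to read its local triviality directly off the adapted coordinates, and finally to descend $\omega$ to a volume form on the base.

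First I would observe that, since $\omega$ is closed and $\ker\omega$ has constant rank $N$ (property (1) of Definition \ref{premulticontactdef}), the distribution $\ker\omega$ is involutive --- this is exactly the remark opening the proof of Lemma \ref{lem:reebinvolutive} --- so by Frobenius' theorem it integrates to a regular foliation of $P$ of codimension $m$ whose (connected) leaves are the fibers of $\tau$. By the standing Assumption the leaf space $M=P/\ker\omega$ is an $m$-dimensional manifold and $\tau$ is a surjective submersion.

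For the local bundle structure I would invoke Corollary \ref{cor:coordenadas}: in the adapted coordinates $(x^\mu,u^I,s^\mu,w^r)$ the distribution $\ker\omega|_U$ is spanned by all the coordinate fields except the $\partial/\partial x^\mu$, so $\tau$ reads $(x^\mu,u^I,s^\mu,w^r)\mapsto(x^\mu)$ and the chart $U$ splits as the product of its image in $M$ with the fiber coordinates $(u^I,s^\mu,w^r)$. This is precisely a local trivialization, exhibiting $\tau$ as a fiber bundle in the local sense. To produce $\omega_{_M}$ I would check that $\omega$ is basic for $\tau$: horizontality is the very definition of $\ker\omega$, while invariance follows from closedness, since
\[
\Lie_X\omega=\inn(X)\d\omega+\d\,\inn(X)\omega=0\,,\qquad X\in\Gamma(\ker\omega)\,.
\]
A horizontal and invariant $m$-form on the total space of a submersion with connected fibers descends to a unique $\omega_{_M}\in\df^m(M)$ with $\omega=\tau^*\omega_{_M}$; in the adapted chart $\omega_{_M}=\d x^1\wedge\dots\wedge\d x^m$ is nowhere vanishing, so it is a volume form and $M$ is orientable.

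The only step that is not a direct reading of the adapted coordinates is the descent of $\omega$, that is, the basicity check together with the standard quotient argument, for which the closedness of $\omega$ is essential. The conclusion is necessarily local in $M$: without further hypotheses the leaf foliation need not be a globally trivial fibration, but once the quotient $M$ is assumed smooth the volume form $\omega_{_M}$ is nevertheless globally well defined on it.
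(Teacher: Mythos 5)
Your proof is correct and follows essentially the same route as the paper, which states this proposition without a detailed proof as a direct consequence of the preceding discussion: involutivity of $\ker\omega$ (closedness plus constant rank), the adapted coordinates of Corollary \ref{cor:coordenadas} giving the local trivialization over the quotient $M=P/\ker\omega$ (assumed smooth), and the projectability of $\omega$ to a volume form $\omega_{_M}$. Your only addition is to make explicit the basicity check ($\inn(X)\omega=0$ and $\Lie_X\omega=0$ for $X\in\Gamma(\ker\omega)$) behind the paper's remark that $\omega$ is ``obviously'' $\tau$-projectable, which is exactly the intended argument.
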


From now on, we assume this  as the canonical model for (pre)multicontact manifolds 
since, in addition, this is the situation which
is interesting in field theories.
Thus, we consider a fiber bundle $\tau\colon P\rightarrow M$, with $\dim M=m$, $\dim P=m+N$,
and such that $M$ is an orientable manifold with volume form $\omega_{_M}\in\df^m(M)$.
Let $\omega=\tau^*\omega_{_M}\in\df^m(P)$.
We always take local coordinates $(x^\mu,z^A)$ in $P$
($1\leq\mu\leq m$, $1\leq A\leq N$),
adapted to the bundle structure, and
such that $\omega=\dd x^1\wedge\dots\wedge\dd x^m\equiv\dd^m x$.
The $\tau$-vertical bundle is defined as
$$
V(\tau)=\bigcup_{{\rm p}\in P}V(\tau_{\rm p})=
\bigcup_{{\rm p}\in P}\{ {\rm v}\in\Tan_{\rm p}P\,\mid\,\Tan_{\rm p}\tau({\rm v})=0\} \,.
$$

Let $\vf^{V(\tau)}(P)$ denote the $\Cinfty(P)$-module of $\tau$-vertical vector fields
and $V(\tau)$ the corresponding {\sl $\tau$-vertical distribution}. 
A form $\alpha\in\df^k(P)$ is $\tau$-semibasic if $\inn(Y)\alpha=0$,
for every $Y\in\vf^{V(\tau)}(P)$. 
Let ${\cal A}^k(V(\tau))$ denote the $\Cinfty(P)$-module of $\tau$-semibasic $k$-forms 
and ${\cal A}_{\rm p}^k(V(\tau))$ the corresponding fiber at ${\rm p}\in P$.
We have that $\Gamma(\ker\omega)=\vf^{V(\tau)}(P)$.

Now, taking these forms $\omega\in\df^m(P)$ and $\Theta\in\df^m(P)$,
Definition \ref{premulticontactdef} adapted
to this context (where condition (1) holds automatically) is: 

\begin{definition}
\label{multicontactbundle}
The pair $(\Theta,\omega)$ is a \textbf{multicontact bundle structure} 
and $(P,\Theta,\omega)$ is said to be
a \textbf{multicontact bundle} if: 
\begin{enumerate}[{\rm(1)}]
\item
$\rk{\cal D}^{\mathfrak{R}}=m$.
\item
$\ker\omega\cap\ker\Theta\cap\ker\d\Theta=\{0\}$.
\item 
${\cal A}^{m-1}(\ker\omega)=\{\inn(R)\Theta\mid R\in \mathfrak{R}\}$.
\end{enumerate}
The pair $(\Theta,\omega)$ is a \textbf{premulticontact bundle structure}
and $(P,\Theta,\omega)$ is said to be
a \textbf{premulticontact bundle} if, for $0< k\leq N-m$,
we have that:
\begin{enumerate}[{\rm(1)}]
\item\label{prerankReebbundle}
$\rk{\cal D}^{\mathfrak{R}}=m+k$.
\item\label{prerankcarbundle}
$\rk\left(\ker\omega\cap\ker\Theta\cap\ker\d\Theta\right)=k$.
\item \label{preReebCompbundle}
${\cal A}^{m-1}(\ker\omega)=\{\inn(R)\Theta\mid R\in \mathfrak{R}\}$,
\end{enumerate}
\end{definition}

In classical field theories we will be specially interested in the situation in which
$P=\mathcal{E}\times\Lambda^{m-1}(\Tan^*M)$,
where $\mathcal{E}\to M$ is a (pre)multisymplectic bundle and, in particular, a jet bundle or a bundle of forms.

\subsection{Multicontact and premulticontact structures of variational type}

Now, we are going to restrict the kind of (pre)multicontact structures we are interested in.
This is motivated by the following fact:
If $(P,\Theta,\omega)$ is a (pre)multicontact manifold,
we can introduce a system of {\sc pde}s associated
with the (pre)multicontact structure.
We want these equations, when expressed in coordinates,
to coincide with those derived from the variational formulation \cite{GLMR-2022}, 
which are also those obtained in the $k$-(co)contact formulation of non-conservative field theories
\cite{GGMRR-2019,GGMRR-2020,GRR-2022,Ri-2022}.
This variational principle is an extension of the {\sl Herglotz variational principle}
for contact mechanical systems.
Nevertheless, in general, these variational equations cannot be derived for any
multicontact structure and then, in order to achieve this,
some additional conditions on the (pre)multicontact structure must be imposed.
So, a more restricted type of multicontact structures called ``variational'' is introduced.
For them, those {\sc pde}s are derived from the variational principle 
using the differential operator $\overline\d$ introduced in Definition \ref{bard}.

Therefore, as we will see in Sections 
\ref{mlf} and \ref{mhf}, in the particular case when $P\to M$ are certain kinds of fiber bundles,
we can formulate Lagrangian and Hamiltonian descriptions for these systems 
and the {\sc pde}s associated
with the ``variational'' multicontact structure are the corresponding
Euler--Lagrange (Herglotz) equations and the
Hamilton--de Donder--Weyl (Herglotz) equations
(we will maintain the usual terminology of the
Lagrangian and the Hamiltonian formalisms of multisymplectic field theories).

In this way we state:

\begin{definition}
If $(P,\Theta,\omega)$ is a (pre)multicontact manifold such that
\beq
\label{varcond}
i(X)i(Y)\Theta = 0 \,, \qquad
\text{\rm for every $X,Y\in\Gamma(\ker\omega)$} \,,
\eeq
then $(\Theta,\omega)$ is said to be a 
\textbf{variational (pre)multicontact structure} and
$(P,\Theta,\omega)$ is a \textbf{variational (pre)multicontact manifold}.
\end{definition}

The terminology comes from the above comment and from the fact that this condition  \eqref{varcond} is precisely what is imposed to the multisymplectic potential forms in the multisymplectic formulation of field theories in order to ensure that the theory is variational and, hence, it comes from a Lagrangian
(in these cases, $\ker\omega$ is just the vertical distribution on the corresponding bundles).

Now, from the results stated in Theorem \ref{lem:adaptedcoordgeneral}
and Corollary \ref{cor:coordenadas},
we can prove a Darboux-like theorem for this class of (pre)multicontact manifolds:

\begin{theorem}
\label{prop-adapted-coord}
If $(P,\Theta,\omega)$is a variational (pre)multicontact manifold, then there exist local charts of adapted coordinates $(U;x^\mu,u^I,s^\mu,w^r)$ ($1\leq\mu\leq m$, $1\leq I\leq N-m-k$, $1\leq r\leq k$) in $P$ such that the local expression of the (pre)multicontact form $\Theta$ is
\beq
\label{thetacoor}
\Theta\vert_U=H(x^\nu,u^I,s^\nu)\,\d^mx+f_I^\mu(x^\nu,u^J)\,\d u^I\wedge\d^{m-1}x_\mu+
\d s^\mu\wedge\d^{m-1}x_\mu\,.
\eeq
Furthermore, in these coordinates,
\beq
\label{sigmacoord}
\sigma_{\Theta}\vert_U=
\Gamma_\mu\,\d x^\mu=\derpar{H}{s^\mu}\,\d x^\mu\,.
\eeq
\end{theorem}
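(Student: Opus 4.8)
The plan is to work in a chart of adapted coordinates $(U;x^\mu,u^I,s^\mu,w^r)$ supplied by Corollary~\ref{cor:coordenadas}, where $\omega\vert_U=\d^m x$, the characteristic distribution is $\mathcal{C}\vert_U=\big\langle\partial/\partial w^r\big\rangle$, and the local Reeb fields are $R_\mu=\partial/\partial s^\mu$ with $\inn(R_\mu)\Theta=\d^{m-1}x_\mu$ (Propositions~\ref{reeblemma} and~\ref{reeblemma2}). The idea is to write down the most general $\Theta$ compatible with the variational condition~\eqref{varcond}, then successively determine its pieces by contracting with $\partial/\partial w^r$ and $\partial/\partial s^\mu$ and imposing the defining conditions of the characteristic and Reeb distributions; the dependences of the coefficients will come from a single computation of $\d\Theta$. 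In the multicontact case $k=0$ there are no coordinates $w^r$ and the corresponding steps are vacuous.

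First I would exploit the variational condition~\eqref{varcond}. Call a factor among $\d u^I,\d s^\mu,\d w^r$ a \emph{vertical leg}. Since $\ker\omega=\big\langle\partial/\partial u^I,\partial/\partial s^\mu,\partial/\partial w^r\big\rangle$, the operator $\inn(X)\inn(Y)$ with $X,Y\in\Gamma(\ker\omega)$ annihilates precisely the part of $\Theta$ of vertical degree $\geq 2$, so \eqref{varcond} forces $\Theta$ to have at most one vertical leg. As $\d^m x$ is, up to scale, the only $m$-form in the $\d x^\mu$ alone and every horizontal $(m-1)$-form is a combination of the $\d^{m-1}x_\mu$, this gives
$$\Theta\vert_U = H\,\d^m x + f_I^\mu\,\d u^I\wedge\d^{m-1}x_\mu + g_\nu^\mu\,\d s^\nu\wedge\d^{m-1}x_\mu + h_r^\mu\,\d w^r\wedge\d^{m-1}x_\mu,$$
with coefficients a priori functions of all the coordinates.

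Next I would contract with the coordinate Reeb and characteristic fields. By Lemma~\ref{lem:distPre} one has $\mathcal{C}\subset\ker\Theta$, so $\inn(\partial/\partial w^r)\Theta=h_r^\mu\,\d^{m-1}x_\mu=0$ yields $h_r^\mu=0$; and $\inn(R_\mu)\Theta=\inn(\partial/\partial s^\mu)\Theta=g_\mu^\nu\,\d^{m-1}x_\nu$ must equal $\d^{m-1}x_\mu$, forcing $g_\nu^\mu=\delta_\nu^\mu$. This already produces the claimed shape $\Theta\vert_U=H\,\d^m x+f_I^\mu\,\d u^I\wedge\d^{m-1}x_\mu+\d s^\mu\wedge\d^{m-1}x_\mu$. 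To pin down the dependences I would compute $\d\Theta$ using $\d x^\nu\wedge\d^{m-1}x_\mu=\delta_\mu^\nu\,\d^m x$ and the closedness of $\d u^I,\d s^\mu,\d^{m-1}x_\mu$; the only terms relevant below are those carrying a $\d w^r$ or a $\d s^\nu$ leg, namely $\parder{H}{w^r}\,\d w^r\wedge\d^m x+\parder{f_I^\mu}{w^r}\,\d w^r\wedge\d u^I\wedge\d^{m-1}x_\mu$ and $\parder{H}{s^\nu}\,\d s^\nu\wedge\d^m x+\parder{f_I^\mu}{s^\nu}\,\d s^\nu\wedge\d u^I\wedge\d^{m-1}x_\mu$. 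The characteristic condition $\mathcal{C}\subset\ker\d\Theta$ then gives $\inn(\partial/\partial w^r)\d\Theta=\parder{H}{w^r}\,\d^m x+\parder{f_I^\mu}{w^r}\,\d u^I\wedge\d^{m-1}x_\mu=0$, and the Reeb condition $\inn(R_\lambda)\d\Theta\in\mathcal{A}^m(\ker\omega)=\langle\omega\rangle$ (Definition~\ref{def:reeb_dist}) gives $\inn(\partial/\partial s^\lambda)\d\Theta=\parder{H}{s^\lambda}\,\d^m x+\parder{f_I^\mu}{s^\lambda}\,\d u^I\wedge\d^{m-1}x_\mu\in\langle\d^m x\rangle$. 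Splitting along the independent forms $\d^m x$ and $\{\d u^I\wedge\d^{m-1}x_\mu\}$ yields $\partial f_I^\mu/\partial w^r=\partial H/\partial w^r=0$ and $\partial f_I^\mu/\partial s^\lambda=0$, i.e.\ $f_I^\mu=f_I^\mu(x^\nu,u^J)$ and $H=H(x^\nu,u^I,s^\nu)$; the same identity reads $\Gamma_\lambda\,\omega=\inn(R_\lambda)\d\Theta=\parder{H}{s^\lambda}\,\d^m x$, whence $\sigma_{\Theta}=\Gamma_\mu\,\d x^\mu=\parder{H}{s^\mu}\,\d x^\mu$.

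The work is essentially bookkeeping rather than conceptual: the only care needed is to organize $\d\Theta$ by vertical degree and to justify that $\d^m x$ and the $\d u^I\wedge\d^{m-1}x_\mu$ are linearly independent, so that each contraction identity can be split coefficient by coefficient. I expect the main nuisance to be keeping track of the signs and the Kronecker contractions $\d x^\nu\wedge\d^{m-1}x_\mu=\delta^\nu_\mu\,\d^m x$ when expanding $\d\Theta$; everything else is a direct contraction of the already-normalized $\Theta$ with the coordinate fields $\partial/\partial w^r$ and $\partial/\partial s^\mu$.
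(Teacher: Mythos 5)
Your proposal is correct and takes essentially the same route as the paper's proof: adapted coordinates from Corollary \ref{cor:coordenadas}, the variational condition \eqref{varcond} to reduce $\Theta$ to terms with at most one vertical leg, the normalization $\inn(R_\mu)\Theta=\d^{m-1}x_\mu$ together with the characteristic and Reeb conditions on $\d\Theta$ to fix the coefficients and their coordinate dependences, and finally $\inn(R_\mu)\d\Theta=\Gamma_\mu\,\omega$ to get \eqref{sigmacoord}. If anything, you are slightly more explicit than the paper in one spot: you derive $\partial H/\partial w^r=0$ directly from $\mathcal{C}\subset\ker\d\Theta$, whereas the paper's written proof keeps a nominal $w^r$-dependence in $H$ and only eliminates the $g_r^\mu\,\d w^r\wedge\d^{m-1}x_\mu$ term.
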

\begin{proof}
Using the adapted coordinates introduced in Theorem \ref{lem:adaptedcoordgeneral} 
and Corollary \ref{cor:coordenadas},
$$
\ker\omega=\Big<\frac{\partial}{\partial u^I},\frac{\partial}{\partial s^\mu},\frac{\partial}{\partial w^r}\Big> \, , \qquad
{\cal D}^\mathfrak{R}=\Big<\frac{\partial}{\partial s^\mu},\frac{\partial}{\partial w^r}\Big>\, , \qquad 
{\cal C}=\Big\{\derpar{}{w^r}\Big\} \,.
$$
Taking into account condition
\eqref{varcond} and Propositions \ref{reeblemma} and \ref{reeblemma2},
we have that the local expression of $\Theta$ must be, in general,
$$
\Theta=H\,\d^mx+f_I^\mu\,\d u^I\wedge\d^{m-1}x_\mu+g_r^\mu\,\d w^r\wedge\d^{m-1}x_\mu+\d s^\mu\wedge\d^{m-1}x_\mu \, ,
$$
where $H,f_I^\mu,g_r^\mu\in\Cinfty(U)$. Now, bearing in mind \eqref{Reebdef},
we conclude that $H\equiv H(x^\mu,u^I,s^\mu,w^r)$,
$f_I^\mu\equiv f_I^\mu(x^\nu,u^J)$, and
$g_r^\mu\equiv g_r^\mu(x^\nu)$
(in the multicontact case, $k=0$ and hence there are no coordinates $w^r$ and no functions $g_r^\mu$).
Finally, consider Lemma \ref{lem:distPre}; 
in the multicontact case ${\cal C}=\{0\}$ and then ${\cal D}^{\mathfrak{R}}\cap\ker\Theta=\{0\}$;
but, as $\displaystyle {\cal D}^\mathfrak{R}=\Big<\frac{\partial}{\partial s^\mu}\Big>$,
this implies that, in the general case,
$\displaystyle\Big\{\frac{\partial}{\partial w^r}\Big\}\subset\Gamma(\ker\Theta)$
and hence it must be $g_r^\mu(x^\nu)=0$ and \eqref{thetacoor} holds.

Finally, from Proposition \ref{sigma} and \eqref{thetacoor},
we obtain \eqref{sigmacoord}.
\end{proof}

In most physical models of field theory, 
$(x^\mu)$ are spacetime coordinates,
$(u^I)$ are coordinates related to the physical fields,
$(w^r)$ are gauge variables, and
$(s^\mu)$ are the `contact variables' related to `damping' or `dissipative' phenomena
and also to the variational action.

\subsection{(Pre)multicontact variational systems: field equations}

The equations for variational multicontact  and premulticontact bundles 
can be stated using different geometric elements as follows: 

\begin{definition}
\label{mconteq}
Let $(P,\Theta,\omega)$ be a variational (pre)multicontact bundle.
\begin{enumerate}[{\rm(1)}]
\item  
The \textbf{(pre)multicontact field equations for sections}
$\psi\colon M\to P$ are
\begin{equation}
\label{sect2}
\inn(\psi^{(m)})(\Theta\circ\psi)=0 \,,\qquad
\inn(\psi^{(m)})(\bd\Theta\circ\psi) = 0\,.
\end{equation}
\item 
The \textbf{(pre)multicontact field equations for $\tau$-transverse, locally decomposable multivector fields} ${\bf X}\in\vf^m(P)$ are
\begin{equation}
\label{vf}
\inn({\mathbf{X}})\Theta=0 \,,\qquad \inn({\bfX})\bd\Theta=0 \,,
\end{equation}
where the condition of $\tau$-transversality is usually imposed 
by requiring that $\inn({\bf X})\omega=1$.
\item
The \textbf{(pre)multicontact field equations for Ehresmann connections}
$\nabla$ on $P\to M$ are
\begin{equation}
\label{Ec}
\inn(\nabla)\Theta=(m-1)\Theta \,,\qquad
\inn(\nabla)\bd\Theta=(m-1)\bd\Theta  \,.
\end{equation}
\end{enumerate}
\end{definition}

The relations among these kinds of field equations 
are given by the following results:

\begin{theorem}
\label{equivteor1}
If ${\bf X}\in\vf^m(P)$ is a representative of
 a class of $\tau$-transverse and integrable $m$-multivector fields $\left\{ {\bf X} \right\} \subset \vf^m(P)$ 
 satisfying the (pre)multicontact field equations for multivector fields \eqref{vf},
 then the integral sections of ${\bf X}$ are solutions to the (pre)multicontact field equations for sections \eqref{sect2}.

Conversely, if $\psi\colon M\to P$ is a solution to the (pre)multicontact field equations for sections \eqref{sect2}, 
then there exist a tubular neighborhood $U\subset P$ of \,$\Ima\psi$
and a $\tau$-transverse and integrable multivector field ${\bf X}\in\vf^m(U)$ 
 such that:
 \begin{enumerate}[{\rm (1)}]
 \item
 $\psi$ is an integral section of ${\bf X}$.
 \item
 ${\bf X}$ is a solution to the (pre)multicontact field equations for multivector fields \eqref{vf}
 on $\Ima\psi$.
 \end{enumerate}
\end{theorem}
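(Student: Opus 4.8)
The plan is to exploit the characterisation of integral sections recalled in Section~\ref{ap:multivector}: a section $\psi\colon M\to P$ is an integral section of $\bfX\in\vf^m(P)$ precisely when $\bfX\circ\psi=\psi^{(m)}$, where $\psi^{(m)}=\Lambda^m\Tan\psi\circ\bfY_\omega$ is the canonical prolongation. The whole statement rests on the fact that, once this relation holds, contracting a form with $\psi^{(m)}$ along $\psi$ coincides with first contracting with $\bfX$ and then restricting to $\Ima\psi$; explicitly, evaluating at $\psi(x)$ and using $\bfX_{\psi(x)}=\psi^{(m)}_x$ gives the identity $(\inn(\bfX)\alpha)\circ\psi=\inn(\psi^{(m)})(\alpha\circ\psi)$ for $\alpha=\Theta$ (a function) and for $\alpha=\bd\Theta$ (a covector field along $\psi$, so the equality is of full covectors). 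For the direct implication I would argue as follows: since $\bfX$ is integrable and $\tau$-transverse, its integral manifolds are local sections of $\tau$, and I may normalise the chosen representative so that $\inn(\bfX)\omega=1$; this does not alter the associated distribution, hence neither the integral sections, and the equations~\eqref{vf} are insensitive to rescaling the representative by a function. For an integral section $\psi$ we then have $\bfX\circ\psi=\psi^{(m)}$, so composing the two equations in~\eqref{vf} with $\psi$ and using the identity above yields exactly~\eqref{sect2}.

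For the converse the same identity shows that~\eqref{sect2} is nothing but the restriction of~\eqref{vf} to the points of $\Ima\psi$. Consequently, \emph{any} $\tau$-transverse, integrable multivector field having $\psi$ as an integral section and normalised by $\inn(\bfX)\omega=1$ will automatically satisfy~\eqref{vf} on $\Ima\psi$, because on the image the two conditions $\inn(\bfX)\Theta=0$ and $\inn(\bfX)\bd\Theta=0$ reduce, via $\bfX_{\psi(x)}=\psi^{(m)}_x$, to the hypotheses $\inn(\psi^{(m)})(\Theta\circ\psi)=0$ and $\inn(\psi^{(m)})(\bd\Theta\circ\psi)=0$. Thus the only genuine task is to \textbf{construct} such a multivector field on a tubular neighbourhood $U$ of $\Ima\psi$.

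To produce it I would work in bundle-adapted coordinates $(x^\mu,z^A)$ with $\omega=\d^m x$, in which $\psi$ reads $z^A=\psi^A(x)$, and set
\[
\bfX=\bigwedge_{\mu=1}^{m}\Big(\frac{\partial}{\partial x^\mu}+\frac{\partial\psi^A}{\partial x^\mu}\,\frac{\partial}{\partial z^A}\Big),
\]
extending the $x$-dependent coefficients $\partial\psi^A/\partial x^\mu$ to a whole neighbourhood $U$ of $\Ima\psi$. This $\bfX$ is locally decomposable and satisfies $\inn(\bfX)\omega=1$, so it is $\tau$-transverse; its generating vector fields commute, since the symmetry of the mixed second derivatives $\partial^2\psi^A/\partial x^\mu\partial x^\nu$ makes the bracket terms cancel, whence $\bfX$ is integrable and the leaf through a point of $\Ima\psi$ is $\Ima\psi$ itself. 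Therefore $\bfX\circ\psi=\Lambda^m\Tan\psi\circ\bfY_\omega=\psi^{(m)}$, i.e.\ $\psi$ is an integral section, and by the reduction above $\bfX$ solves~\eqref{vf} on $\Ima\psi$. The main obstacle is precisely this construction: one must secure integrability and $\tau$-transversality simultaneously while matching $\psi^{(m)}$ along $\Ima\psi$, and the explicit coordinate model above delivers all three at once, with integrability being the delicate point and resting on the symmetry of the second derivatives of $\psi$.
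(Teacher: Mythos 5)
Your reduction of both implications to the pointwise identity $\bfX\circ\psi=\psi^{(m)}$ is exactly the paper's argument. The direct implication is proved there in the same one line (compose \eqref{vf} with $\psi$), and your extra care in normalising the representative so that $\inn(\bfX)\omega=1$ --- so that the integral-section condition really reads $\bfX\circ\psi=\psi^{(m)}$ rather than $\bfX\circ\psi=f\,\psi^{(m)}$, while \eqref{vf} is insensitive to the rescaling --- actually tightens a point the paper leaves implicit. The same goes for your observation that, once $\bfX\circ\psi=\psi^{(m)}$ holds, the equations \eqref{vf} evaluated at points of $\Ima\psi$ are literally the equations \eqref{sect2}; this is also the closing step of the paper's proof of the converse.

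Where you diverge from the paper --- and where there is a genuine gap --- is in the construction of $\bfX$ on a tubular neighbourhood. Your fibre-constant model $\bfX=\bigwedge_{\mu=1}^{m}\bigl(\tfrac{\partial}{\partial x^\mu}+\tfrac{\partial\psi^A}{\partial x^\mu}\,\tfrac{\partial}{\partial z^A}\bigr)$ is correct and complete on a single bundle-adapted chart: transversality, local decomposability and integrability (commuting generators, by symmetry of the second derivatives of $\psi$) are all immediate, and on a chart this is more explicit than what the paper offers. But the formula is chart-bound. The coefficients $\partial\psi^A/\partial x^\mu$, extended as functions of $x$ alone, define \emph{different} multivector fields on overlapping adapted charts, because adapted transitions $\tilde z^B=\tilde z^B(x,z)$ mix base and fibre variables: the local extensions all agree along $\Ima\psi$ (they equal $\psi^{(m)}$ there) but not off it. Hence, when $\Ima\psi$ is not contained in one chart, you have not yet produced a single $\tau$-transverse, integrable $\bfX$ on a whole tubular neighbourhood of $\Ima\psi$, which is what the statement demands. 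This gluing is precisely what the paper's proof supplies (albeit sketchily): it first builds the multivector field along $\Ima\psi$ from the local pieces $\Lambda^m(\psi\vert_{W_x})$ --- unproblematic, since these pieces agree on overlaps --- and only then extends it off the image using local flows of $\tau$-vertical vector fields. To close your argument you need some such patching step, and note that it cannot be done by naively averaging your chart models with a partition of unity, since neither local decomposability nor involutivity survives convex combinations of multivector fields.
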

\begin{proof}
If $\psi$ is an integral section of a multivector field
${\bf X}\in\vf^m(P)$, then ${\bf X}\circ\psi=\psi^{(m)}$, and hence
if \eqref{vf} holds for ${\bf X}$, then \eqref{sect2} holds for $\psi$, in the corresponding domains.

Conversely, if $\psi\colon  W\subset M \to P$ is a solution to \eqref{sect2} then,
for every $x\in W$ there exists a neighbourhood $W_x\subset W$ of $x$
such that $\psi(W_x)\subset \psi(W)$.
Since $\psi$ is a section and hence $\Ima\psi$ is an embedded submanifold,
the map $\psi\vert_{W_x}$ is an injective immersion
and the map $\Lambda^m(\psi\vert_{W_x})$ defines 
a locally decomposable multivector field 
on $\psi(W_x)\subset P$ which is tangent to $\psi(W_x)$.
Using partitions of unity, this construction can be extended to the whole $W$,
obtaining a multivector field ${\bf X}_{W}$ on $\Ima\psi$
and such that  $\psi$ is an integral section of ${\bf X}_W$, by construction.
Finally, using the local flow of $\tau$-vertical vector fields,
starting from $\Ima\psi$ and ${\bf X}_{W}$
we generate a tubular neighborhood $U$ of $\Ima\psi$
and a multivector field ${\bf X}\in\vf^m(U)$.
Then, since ${\bf X}\circ\psi=\psi^{(m)}$, if equation \eqref{sect2} holds for $\psi$ on $\Ima\psi$, 
then \eqref{vf} holds for ${\bf X}$ on $\Ima\psi$.
\end{proof}

\begin{theorem}
\label{equivteor2}
The (integrable) Ehresmann connections $\nabla$ 
which are the solutions to the (pre)multicontact field equations for Ehresmann connections \eqref{Ec}
are locally associated with classes of (integrable)
$\tau$-transverse, locally decomposable multivector fields $\{{\bf X}\}\subset\vf^m(P)$ which are solutions to the (pre)multicontact field equations for multivector fields \eqref{vf}, and conversely.
\end{theorem}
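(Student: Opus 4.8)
The plan is to reduce the statement to the one-to-one correspondence, recalled in Section~\ref{1}, between classes $\{\bfX\}$ of $\tau$-transverse, locally decomposable $m$-multivector fields on $P$ and orientable Ehresmann connections $\nabla$ on $\tau\colon P\to M$, a correspondence under which integrable classes match flat (integrable) connections and the horizontal distribution of $\nabla$ coincides with the distribution associated with $\{\bfX\}$. Fixing such a corresponding pair and a representative $\bfX$ normalized by $\inn(\bfX)\omega=1$, it then suffices to prove that each of the two connection equations in \eqref{Ec} is equivalent to the corresponding multivector equation in \eqref{vf}. Both conditions are insensitive to the choice of representative (replacing $\bfX$ by $f\bfX$ merely multiplies $\inn(\bfX)\Theta$ and $\inn(\bfX)\bd\Theta$ by $f$), so the equivalence is well posed at the level of the class and its connection.

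The key tool I would set up is the bigrading of forms induced by $\nabla$. Working in the adapted coordinates of Theorem~\ref{prop-adapted-coord}, where $\omega=\d^m x$ and the horizontal lifts $X_\mu$ span the distribution of $\bfX$, I split $\cT P$ into horizontal covectors, spanned by the $\d x^\mu$, and vertical covectors $\theta^A$ (those annihilating the horizontal lifts $X_\mu$ and completing $\{\d x^\mu\}$ to a local coframe). Since $\inn(\nabla)\d x^\mu=\d x^\mu$ and $\inn(\nabla)\theta^A=0$, the operator $\inn(\nabla)$ acts on a bihomogeneous component of horizontal degree $a$ as multiplication by $a$; dually, $\inn(\bfX)$, being the contraction by the $m$ horizontal vectors $X_1,\dots,X_m$, annihilates every component with fewer than $m$ horizontal legs and isolates the top horizontal part.

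For the $m$-form equation I would use the variational condition \eqref{varcond} directly: since $\ker\omega$ is the $\tau$-vertical distribution, \eqref{varcond} says exactly that $\Theta$ has no component with two or more vertical legs, i.e.\ $\Theta$ has vertical degree $\le 1$. A short computation with the bigrading then yields the clean identity
\begin{equation*}
\inn(\nabla)\Theta=(m-1)\Theta+\big(\inn(\bfX)\Theta\big)\,\omega \,,
\end{equation*}
because the only component of horizontal degree $m$ is the purely horizontal one, whose coefficient is precisely $\inn(\bfX)\Theta$. As $\omega$ is a volume form, $\inn(\nabla)\Theta=(m-1)\Theta$ holds if and only if $\inn(\bfX)\Theta=0$.

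The genuinely delicate point, and the one I would treat most carefully, is the $(m+1)$-form equation, where no identity as clean as the previous one is available for arbitrary forms. Here I would first show that $\bd\Theta=\d\Theta+\sigma_{\Theta}\wedge\Theta$ still has vertical degree $\le 2$: the exterior differential raises the vertical degree by at most one (which one checks on $\d\theta^A$ and on differentials of functions), and $\sigma_{\Theta}=\Gamma_\mu\,\d x^\mu$ is horizontal by \eqref{sigmacoord}, so $\sigma_{\Theta}\wedge\Theta$ keeps vertical degree $\le 1$. Thus $\bd\Theta$ has only components of bidegrees $(m,1)$ and $(m-1,2)$. Evaluating $\inn(\nabla)\bd\Theta-(m-1)\bd\Theta$ with the bigrading, the $(m-1,2)$-component carries the coefficient $(m-1)-(m-1)=0$ and drops out, leaving exactly the $(m,1)$-component; on the other hand, $\inn(\bfX)\bd\Theta$ also sees only the $(m,1)$-component, and the assignment $\theta^A\wedge\omega\mapsto\inn(\bfX)(\theta^A\wedge\omega)=\pm\theta^A$ is injective. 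Hence both $\inn(\nabla)\bd\Theta=(m-1)\bd\Theta$ and $\inn(\bfX)\bd\Theta=0$ are equivalent to the vanishing of that single $(m,1)$-component, and therefore to each other. Combining the two equivalences with the class/connection correspondence gives the statement, the integrable case following from the flat case of that same correspondence.
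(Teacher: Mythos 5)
Your proposal is correct, but it takes a genuinely different route from the paper's own proof. The paper argues by direct computation in the Darboux coordinates of Theorem~\ref{prop-adapted-coord}: it writes $\nabla$ and a normalized representative $\bfX$ with the same coefficient functions, computes $\bd\Theta$, the contractions $\inn(\bfX)\Theta$, $\inn(\bfX)\bd\Theta$, $\inn(\nabla)\Theta$, $\inn(\nabla)\bd\Theta$ explicitly, and checks that both systems reduce to the same local equations \eqref{3rd}--\eqref{4th}, the coefficients on $\d x^\nu$ in \eqref{2nd} being identities modulo the others; integrability is then handled, exactly as you do, via the standard correspondence between classes of locally decomposable $\tau$-transverse multivector fields and orientable Ehresmann connections. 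You instead argue structurally with the bidegree decomposition induced by the splitting $\Tan P=H\oplus\ker\omega$: the counting identity $\inn(\nabla)\alpha^{(a,b)}=a\,\alpha^{(a,b)}$ (which is indeed what the paper's tensorial contraction gives, since $\inn(\nabla)\alpha=\textstyle\sum_\mu \d x^\mu\wedge\inn(X_\mu)\alpha$ for $\nabla=\d x^\mu\otimes X_\mu$), the fact that $\inn(\bfX)$ isolates the top horizontal component, the bound on the vertical degree of $\Theta$ coming from the variational condition \eqref{varcond}, and the bound for $\bd\Theta$ coming from integrability of the vertical distribution together with the semibasic character of $\sigma_{\Theta}$ in \eqref{sigmacoord}. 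Your two identities, $\inn(\nabla)\Theta=(m-1)\Theta+\bigl(\inn(\bfX)\Theta\bigr)\,\omega$ and $\inn(\nabla)\bd\Theta-(m-1)\bd\Theta=(\bd\Theta)^{(m,1)}$, are correct and make the equivalence transparent without ever writing the field equations. What your approach buys is conceptual clarity: it explains why the $\d x^\nu$-terms in \eqref{2nd} are redundant (they are the horizontal components, in the coordinate coframe, of a vertical $1$-form whose vertical components already vanish), and it isolates precisely where the variational hypothesis and the horizontality of the dissipation form enter. What the paper's computation buys is the explicit coordinate form of the equations, including the identification of \eqref{3rd} as compatibility conditions in the premulticontact case, which is reused elsewhere (compare Proposition~\ref{sectequiv}).
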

\begin{proof}
In a chart of adapted coordinates the expressions of 
an Ehresmann connection $\nabla$ on $P$ and of a 
$\tau$-transverse, locally decomposable multivector field ${\bf X}\in\vf^m(P)$, satisfying $\inn({\bf X})\omega=1$, are
\begin{align*}
\nabla&=\d x^\mu\otimes\left(\derpar{}{x^\mu}+F_\mu^I\derpar{}{u^I}+
G_\mu^\nu\derpar{}{s^\nu}+g_\mu^r\derpar{}{w^r}\right) \,,
\\
{\bf X}&=
\bigwedge_{\mu=1}^m \left(\derpar{}{x^\mu}+F_\mu^I\derpar{}{u^I}+
G_\mu^\nu\derpar{}{s^\nu}+g_\mu^r\derpar{}{w^r}\right) \,.
\end{align*}
Now, using Definition \ref{bard},
and the local expressions \eqref{thetacoor} and \eqref{sigmacoord} we have that
\begin{align}
\bd\Theta&=\displaystyle\d(H\,\d^mx+f_I^\mu\,\d u^I\wedge\d^{m-1}x_\mu)-
\Big(\derpar{H}{s^\mu}\,f_I^\mu\,\d u^I+\derpar{H}{s^\mu}\,\d s^\mu\Big)\wedge\d^mx \nonumber
\\ &=\displaystyle
\Big(\derpar{H}{u^I}-\derpar{H}{s^\mu}\,f_I^\mu-\derpar{f_I^\mu}{x^\mu}\Big)\d u^I\wedge\d^mx+\derpar{H}{w^r}\,\d w^r\wedge\d^mx+\derpar{f_I^\mu}{u^J}\,\d u^J\wedge\d u^I\wedge\d^{m-1}x_\mu \,,
\label{bdtheta}
\end{align}
then equations \eqref{vf} give
\begin{align}
0&=\inn({\bf X})\Theta=
G_\mu^\mu+f_I^\mu F_\mu^I+H
\,, \nonumber
\\
0&=\inn({\bf X})\overline\d\Theta=
\derpar{H}{w^r}\,\d w^r+
\Big[\derpar{H}{u^I}-\derpar{H}{s^\mu}\,f_I^\mu-\derpar{f_I^\mu}{x^\mu}+
\Big(\derpar{f_J^\mu}{u^I}-\derpar{f_I^\mu}{u^J}\Big)\,F^J_\mu\Big]\,\d u^I \nonumber
\\ & \quad+
\Big[\Big(\derpar{H}{u^I}-\derpar{H}{s^\mu}\,f_I^\mu-\derpar{f_I^\mu}{x^\mu}\Big)\,F_\nu^I+\derpar{H}{w^r}\,g_\nu^r+ 
\Big(\derpar{f_I^\mu}{u^J}-\derpar{f_J^\mu}{u^I}\Big)\,F_\mu^J F_\nu^I\Big]\,\d x^\nu \,.
\label{2nd}
\end{align}
On the other hand, equations \eqref{Ec} read
\begin{align*}
0&=\inn(\nabla)\Theta-(m-1)\Theta=
G_\mu^\mu+f_I^\mu F_\mu^I+H
\,,
\\
0&=\inn(\nabla)\overline\d\Theta-(m-1)\bd\Theta=
\derpar{H}{w^r}\,\d w^r\wedge\d^mx
\\ &\quad+
\Big[\derpar{H}{u^I}-\derpar{H}{s^\mu}\,f_I^\mu-\derpar{f_I^\mu}{x^\mu}+
\Big(\derpar{f_J^\mu}{u^I}-\derpar{f_I^\mu}{u^J}\Big)\,F^J_\mu\Big]\,\d u^I\wedge\d^mx \,,
\end{align*}
and so, they lead to the same coordinate equations:
\begin{align}
0&=G_\mu^\mu+f_I^\mu F_\mu^I+H \,,  \nonumber \\
0&=\derpar{H}{w^r}\,, \label{3rd} \\
0&=\derpar{H}{u^I}-\derpar{H}{s^\mu}\,f_I^\mu-\derpar{f_I^\mu}{x^\mu}+
\Big(\derpar{f_J^\mu}{u^I}-\derpar{f_I^\mu}{u^J}\Big)\,F^J_\mu \,,  \label{4th}
\end{align}
where equations \eqref{3rd} are just compatibility conditions.
Note that the coefficients on $\d x^\nu$
in \eqref{2nd} vanish identically when 
\eqref{3rd} and \eqref{4th} are used.

Finally, if $\nabla$ and the class $\{ {\bf X}\}$ are locally associated,
$\nabla$ is integrable if, and only if, ${\bf X}$ is integrable,
since the corresponding associated distribution must be involutive.
\end{proof}

As a last result, the field equations for sections can be expressed
in an equivalent way which is analogous to what is commonly used 
to write such equations in the multisymplectic formulation of classical field theories (see \cite{book:Saunders89}):

\begin{proposition}
\label{sectequiv}
The (pre)multicontact field equations for sections \eqref{sect2} are equivalent to
\begin{equation}\label{sect1}
 \psi^*\Theta= 0  \,,\qquad
\psi^*\inn(Y)\bd\Theta= 0 \,, \qquad \text{\rm for every }\ Y\in\vf(P) \,.
\end{equation}
\end{proposition}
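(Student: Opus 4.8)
The plan is to reduce both equivalences to a single pointwise identity relating the pullback by $\psi$ of an $m$-form on $P$ to its contraction with the canonical prolongation $\psi^{(m)}$. First I would establish that, for every $\beta\in\df^m(P)$,
\[
\psi^*\beta=\big(\inn(\psi^{(m)})(\beta\circ\psi)\big)\,\omega_{_M}\,.
\]
This holds because $\psi^*\beta$ is an $m$-form on the $m$-dimensional manifold $M$, hence a multiple of the volume form $\omega_{_M}$; evaluating both sides on the $m$-multivector field ${\bf Y}_\omega$ (characterized by $\inn({\bf Y}_\omega)\omega_{_M}=1$) and using $\psi^{(m)}=\Lambda^m\T\psi\circ{\bf Y}_\omega$ gives $(\psi^*\beta)({\bf Y}_\omega)=\beta\big(\Lambda^m\T\psi({\bf Y}_\omega)\big)=\inn(\psi^{(m)})(\beta\circ\psi)$, which pins down the proportionality factor. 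Note also that $\psi^{(m)}$ is nowhere zero, since ${\bf Y}_\omega$ is nonvanishing and $\T\psi$ is injective along the section.

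Applying this identity with $\beta=\Theta$ immediately yields that $\psi^*\Theta=0$ holds if and only if $\inn(\psi^{(m)})(\Theta\circ\psi)=0$, which is exactly the equivalence of the first equations in \eqref{sect1} and \eqref{sect2}. This is the usual multisymplectic fact relating the two first ways of writing the field equations, and requires no further argument.

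For the second equations I would apply the same identity to the $m$-form $\beta=\inn(Y)\bd\Theta$, with $Y\in\vf(P)$ arbitrary, and then move the inner contraction outside using the rule
\[
\inn(\psi^{(m)})\big(\inn(Y)\bd\Theta\big)=(-1)^m\,\inn(Y)\big(\inn(\psi^{(m)})\bd\Theta\big)\,,
\]
valid because $\bd\Theta$ is an $(m+1)$-form and $\psi^{(m)}$ an $m$-multivector (the sign is the cost of transporting $\inn(Y)$ past the $m$ interior products defining $\inn(\psi^{(m)})$). This rewrites the coefficient as the pairing of the $1$-form $\inn(\psi^{(m)})(\bd\Theta\circ\psi)$, defined along $\psi$, with $Y\circ\psi$, so that
\[
\psi^*\big(\inn(Y)\bd\Theta\big)=(-1)^m\,\big\langle\inn(\psi^{(m)})(\bd\Theta\circ\psi),\,Y\circ\psi\big\rangle\,\omega_{_M}\,.
\]

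The equivalence then follows in both directions. If $\inn(\psi^{(m)})(\bd\Theta\circ\psi)=0$, the pairing vanishes for every $Y$, hence $\psi^*\inn(Y)\bd\Theta=0$ for all $Y$. Conversely, if $\psi^*\inn(Y)\bd\Theta=0$ for all $Y\in\vf(P)$, then the $1$-form $\inn(\psi^{(m)})(\bd\Theta\circ\psi)$ annihilates $Y(\psi(x))$ for every vector field and every point; since the values $Y(\psi(x))$ exhaust the whole tangent space $\T_{\psi(x)}P$, the $1$-form must vanish. The only steps demanding care are the bookkeeping of the sign $(-1)^m$ in the contraction rule and the pointwise surjectivity of vector fields used in the converse; I expect these, rather than the underlying computation, to be the sole delicate points, the remainder being routine.
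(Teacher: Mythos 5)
Your proof is correct, but it proceeds quite differently from the paper's. The paper proves this proposition by a coordinate computation: it works in the adapted Darboux charts $(U;x^\mu,u^I,s^\mu,w^r)$ of Theorem \ref{prop-adapted-coord} (which exist because the structure is variational), expands both \eqref{sect2} and \eqref{sect1} using the local expressions \eqref{thetacoor} and \eqref{bdtheta}, and checks that both reduce to the same system of {\sc pde}s together with the compatibility conditions, the $\d x^\nu$-components vanishing identically once the other equations hold. Your argument is intrinsic: you reduce everything to the single identity $\psi^*\beta=\big(\inn(\psi^{(m)})(\beta\circ\psi)\big)\,\omega_{_M}$ for an arbitrary $m$-form $\beta$, proved by evaluating on ${\bf Y}_\omega$ and using $\psi^{(m)}=\Lambda^m\T\psi\circ{\bf Y}_\omega$, then apply it to $\beta=\Theta$ and to $\beta=\inn(Y)\bd\Theta$, with the graded anticommutation of interior products (the harmless sign $(-1)^m$) and the pointwise surjectivity of $Y\mapsto Y(\psi(x))$ handling the second pair of equations. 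What each approach buys: yours is shorter, coordinate-free, and notably more general --- it never uses the variational hypothesis, the adapted coordinates, or any property of $\Theta$ and $\bd\Theta$ beyond their degrees, so it establishes the equivalence for any pair of forms of those degrees; the paper's computation, by contrast, simultaneously produces the explicit coordinate equations (including the compatibility conditions $\partial H/\partial w^r=0$), which the paper needs anyway and reuses when matching the field equations with the Herglotz--Euler--Lagrange equations, so for the paper the coordinate route does double duty.
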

\begin{proof}
In a chart of adapted coordinates $(U;x^\mu,u^I,s^\mu,w^r)$
(see Theorem \ref{prop-adapted-coord}), 
for every $Y\in \vf(P)$, for every section $\psi\in\Gamma(\tau)$, and $x\in M$,
we have that
\begin{align*}
Y&=a^\mu\derpar{}{x^\mu}+b^I\derpar{}{u^I}+c^\mu\derpar{}{s^\mu}+g^r\derpar{}{w^r} \,, \\
\psi(x)&=(x^\mu,u^I(x),s^\mu(x),w^r(x)) \,, \\
\psi^{(m)}&=\bigwedge_{\mu=1}^m\left(\derpar{}{x^\mu}+\derpar{u^I}{x^\mu}\derpar{}{u^I}+\derpar{s^\nu}{x^\mu}\derpar{}{s^\nu}+\derpar{w^r}{x^\mu}\derpar{}{w^r}\right) \,.
\end{align*}
Therefore, using the local expressions \eqref{thetacoor}
and \eqref{bdtheta},
equations \eqref{sect2} read
\begin{align*}
0&=\inn(\psi^{(m)})(\Theta\circ\psi)=
\derpar{s^\mu}{x^\mu}+f_I^\mu\,\derpar{u^I}{x^\mu}+H\,,
\\
0&=\inn(\psi^{(m)})(\bd\Theta\circ\psi)=
\derpar{H}{w^r}\,\d w^r+
\Big[\derpar{H}{u^I}-\derpar{H}{s^\mu}\,f_I^\mu-\derpar{f_I^\mu}{x^\mu}+
\Big(\derpar{f_J^\mu}{u^I}-\derpar{f_I^\mu}{u^J}\Big)\,\derpar{u^J}{x^\mu}\Big]\,\d u^I
\\ &\quad+
\Big[\Big(\derpar{H}{u^I}-\derpar{H}{s^\mu}\,f_I^\mu-\derpar{f_I^\mu}{x^\mu}\Big)\,\derpar{u^I}{x^\nu}+\derpar{H}{w^r}\,\derpar{g^r}{x^\nu}+ 
\Big(\derpar{f_I^\mu}{u^J}-\derpar{f_J^\mu}{u^I}\Big)\,\derpar{u^J}{x^\mu} \,\derpar{u^I}{x^\nu}\Big]\,\d x^\mu \,,
\end{align*}
and equations \eqref{sect1} read
\begin{align*}
0&=\psi^*\Theta=
\Big(\derpar{s^\mu}{x^\mu}+f_I^\mu\,\derpar{u^I}{x^\mu}+H\Big)\,\d^mx\,,
\\
0&=\psi^*\inn(Y)\bd\Theta=
\left(\derpar{H}{w^r}\,g^r+  \Big[\derpar{H}{u^I}-\derpar{H}{s^\mu}\,f_I^\mu-\derpar{f_I^\mu}{x^\mu}+
\Big(\derpar{f_J^\mu}{u^I}-\derpar{f_I^\mu}{u^J}\Big)\,\derpar{u^J}{x^\mu}\Big]\,b^I\right.
\\ &\quad+
\left.\Big[\Big(\derpar{H}{u^I}-\derpar{H}{s^\mu}\,f_I^\mu-\derpar{f_I^\mu}{x^\mu}\Big)\,\derpar{u^I}{x^\nu}+\derpar{H}{w^r}\,\derpar{g^r}{x^\nu}+ 
\Big(\derpar{f_I^\mu}{u^J}-\derpar{f_J^\mu}{u^I}\Big)\,\derpar{u^J}{x^\mu} \,\derpar{u^I}{x^\nu}\Big]
\,a^\mu\right)\,\d^mx \,.
\end{align*}
Both of them lead to the same equations:
\begin{align}
0&=\derpar{s^\mu}{x^\mu}+f_I^\mu\,\derpar{u^I}{x^\mu}+H \nonumber \,,
\\
0&=\derpar{H}{w^r} \label{2eq}\,,
\\
0&=\derpar{H}{u^I}-\derpar{H}{s^\mu}\,f_I^\mu-\derpar{f_I^\mu}{x^\mu}+
\Big(\derpar{f_J^\mu}{u^I}-\derpar{f_I^\mu}{u^J}\Big)\,\derpar{u^J}{x^\mu} \label{3eq} \,,
\end{align} 
where equations \eqref{2eq} are not {\sc pde}s, 
but {\sl compatibility conditions\/} which relate 
the component functions of the sections solution.
Note that the equations
$$
0=\Big(\derpar{H}{u^I}-\derpar{H}{s^\mu}\,f_I^\mu-\derpar{f_I^\mu}{x^\mu}\Big)\,\derpar{u^I}{x^\nu}+\derpar{H}{w^r}\,\derpar{g^r}{x^\nu}+ 
\Big(\derpar{f_I^\mu}{u^J}-\derpar{f_J^\mu}{u^I}\Big)\,\derpar{u^J}{x^\mu} \,\derpar{u^I}{x^\nu}
$$
hold identically using \eqref{2eq} and \eqref{3eq}.
\end{proof}

\begin{definition}
A variational (pre)multicontact bundle $(P,\Theta,\omega)$
along with some of the field equations \eqref{sect2}, \eqref{vf} or \eqref{Ec}
is said to be a \textbf{(pre)multicontact system}.
\end{definition}

\begin{remark}
{\rm
In the premulticontact case, in general,
for the premulticontact system $(P,\Theta,\omega)$, 
the field equations for sections $\psi\colon M\to P$, 
multivector fields ${\bf X}\in\vf^m(P)$,
and Ehresmann connections $\nabla$ on $P$
are not compatible on $P$ 
and a constraint algorithm must be implemented in order to find 
a submanifold $P_f\hookrightarrow P$
(when it exists) where there are integrable
distributions whose associated 
multivector fields ${\bf X}$
and Ehresmann connections $\nabla$ are solutions to the premulticontact field equations on $P_f$ and are tangent to $P_f$.
In this situation note that the constraint algorithm and the final solutions
are independent of the Reeb vector fields selected for the premulticontact system, 
as a consequence of the construction of $\sigma_{\Theta}$ (see Proposition \ref{sigma}).}
\end{remark}

\begin{remark}
{\rm
Summarizing, we have introduced different ways of  setting the field equations in classical field theories. 
The equations for sections, written in its two equivalent forms \eqref{sect2} and \eqref{sect1},
give straightforwardly the system of {\sc pde}s to be solved
for describing the behaviour of the system.
On the other hand, the equations for multivector fields
\eqref{vf} and connections \eqref{Ec}
give a more geometrical interpretation of the solutions (as distributions)
that often make it easier to study and characterize qualitative properties of such solutions.
In particular, these geometric characterizations 
are the most suitable in order to apply the constraint algorithms in the case of premulticontact theories. Note that one can write these equations for a general (pre)multicontact system although, if the structure is not variational, the resulting equations may not correspond to those of the Herglotz principle for fields.
}\end{remark}

Finally, we generalize the concept of
{\sl dissipated quantity}
(see \cite{GGMRR-2019,GGMRR-2020}) to this 
(pre)multicontact setting.

\begin{definition}
Let $(P,\Theta,\omega)$ be a (pre)multicontact system 
and let $\mathbf{X}\in\vf^m(P)$ be a solution to the field equations \eqref{vf}.
A differential form $\xi\in\df^{m-1}(P)$ is a \textbf{dissipated quantity} for this system if $\inn({\mathbf{X}})\bd\xi=0.$

In terms of a section $\psi$ solution to the equivalent field equations
\eqref{sect2} or \eqref{sect1},
this condition reads $\psi^*\bd\xi=0$.
\end{definition}

\section{Multicontact Lagrangian formalism}
\label{mlf}

\subsection{Geometric preliminaries}

Let $\pi:E\rightarrow M$ be a fiber bundle over the spacetime $M$,
where $\dim{M}=m$, $\dim{E}=m+n$,
and hence $\dim{J^1\pi}=m+n+mn$. 
In the Lagrangian setting,
consider the bundle 
$${\cal P}=J^1\pi\times_M\Lambda^{m-1}(\Tan^*M)\simeq J^1\pi\times\Real^m\,,$$
whose natural projections are presented in the next diagram:
$$
\xymatrix{
&\ &  \  &{\cal P}=J^1\pi\times_M\Lambda^{m-1}(\Tan^*M)  \ar[rrd]_{\tau_1}\ar[lld]^{\rho}\ar[ddd]_{\tau}\ &  \   &
\\
&J^1\pi \ar[ddrr]^{\bar{\pi}^1}\ar[d]^{\pi^1}\ & \ & \ & \ &\ar[ddll]_{\tau_o}\Lambda^{m-1}(\Tan^*M) 
\\
&E\ar[drr]^{\pi}\ & \ & \ & \ &  
 \\
&\ & \ &M \ & \ & 
}
$$
If $(x^\mu,y^i)$ are natural coordinates in $E$, then the induced natural coordinates in ${\cal P}$ are $(x^\mu, y^i,y^i_\mu,s^\mu)$ where,
taking $\{\d^{m-1} x_\mu\}$ as the local basis of $\Lambda^{m-1}(\Tan^*M)$, we have that $\xi=s^\mu\,\d^{m-1} x_\mu$, for every $\xi\in\Lambda^{m-1}(\Tan^*M)$.

Note that, since $\Lambda^{m-1}(\Tan^*M)$
is a bundle of forms over $M$, it is endowed with a canonical structure
$\theta\in \df^{m-1}(\Lambda^{m-1}(\Tan^*M))$,
the ``tautological form'',
which is defined as follows:
for every $\bm{\xi} \equiv(x,\xi)\in\Lambda^{m-1}(\Tan^*M)$
and $X^{1}_{\bm{\xi}},\dots,X^{m-1}_{\bm{\xi}}\in\Tan_{\bm{\xi}}(\Lambda^{m-1}(\Tan^*M))$,
$$
\theta_{\bm{\xi}}(X^1_{\bm{\xi}},\cdots,X^{m-1}_{\bm{\xi}}):=
\xi\Big(\Tan_{\bm{\xi}}\tau_o(X^1_{\bm{\xi}}),\dots,\Tan_{\bm{\xi}}\tau_o(X^{m-1}_{\bm{\xi}})\Big) \,.
$$
Its local expression in natural coordinates is $\theta=s^\mu\,\d^{m-1}x_\mu$.

\begin{definition}
The \textbf{canonical action form} is the differential form $\overline{S}\in\df^{m-1}({\cal P})$ defined
as 
$$
\overline{S}:=\tau_1^*\theta \,,
$$
or, what is equivalent,
at every point ${\rm p}\in {\cal P}$,
\[
\overline{S}_{\rm p}(X^1_{\rm p},\cdots,X^{m-1}_{\rm p}):=
\tau_1({\rm p})_{\tau({\rm p})}(\Tan_{\rm p}\tau(X^1_{\rm p}),\dots,\Tan_{\rm p}\tau(X^{m-1}_{\rm p}))\,,\
\text{\rm for every $X^{1}_{\rm p},\dots,X^{m-1}_{\rm p}\in\Tan_{\rm p}{\cal P}$} \,.
\]
\end{definition}

Note that every section $\bm{\psi}:M\rightarrow {\cal P}$
of $\tau$ defines the $(m-1)$-form $\tau_1\circ\bm{\psi}\in\Lambda^{m-1}(\Tan^*M)$
and then $\bm{\psi}^*\overline{S}=\tau_1\circ\bm{\psi}$. It is also immediate to check that $\overline S$ is a $\tau$-semibasic form, whose expression in coordinates is 
\[
\overline{S}=s^\mu\,\d^{m-1} x_\mu\,.
\]
The terminology is justified because, as we will see in Section \ref{vaf}, this form $\overline S$ is closely related to the action of the system:
in fact, $\d\overline{S}$ is the {\sl Lagrangian action} that appears in the {\sl action functional} \eqref{Lagact}
(see also \eqref{SLag})
and it is also related with the extended contact structures presented in \cite{LGL-2021}.

\begin{definition}
\label{de652}
Let $\bm{\psi}\colon M\rightarrow {\cal P}$ be a section of the projection $\tau$.
Then $\bm{\psi}$ is a \textbf{holonomic section} in ${\cal P}$ if
the section $\psi:=\rho\circ\bm{\psi}\colon M\to J^1\pi$
is holonomic in $J^1\pi$.
We also say that $\bm{\psi}$ is the
\textbf{canonical prolongation} 
 of $\psi$ to ${\cal P}$.

Then, we can write $\bm{\psi}=(\psi,s)=(j^1\phi,s)$, where
$s\colon M\to\Lambda^{m-1}(\Tan^*M)$ 
is a section of the projection
$\tau_0:\Lambda^{m-1}(\Tan^*M)\to M$.
\end{definition}

\begin{definition}
An $m$-multivector field
$\bm{\Gamma}\in\vf^m({\cal P})$ is a \textbf{second-order partial differential equation} (or \textsc{sopde}) in ${\cal P}$ if
\begin{enumerate}[{\rm(1)}]
\item
it is $\tau$-transverse,
\item
it is integrable,
\item
the multivector field 
${\bf X}:=\Lambda^m\Tan\rho\circ \bm{\Gamma}$, which is obviously integrable and $\bar\pi^1$-transverse,
is a {\sc sopde} in $J^1\pi$.
\end{enumerate}
An Ehresmann connection
$\nabla$ in ${\cal P}$ is a \textbf{second-order partial differential equation} (or \textsc{sopde}) in ${\cal P}$ if
\begin{enumerate}[{\rm(1)}]
\item
it is integrable,
\item
the natural restriction of $\nabla$ to $J^1\pi$
is a {\sc sopde} in $J^1\pi$.
\end{enumerate}

\end{definition}
            
The local expression of a {\sc sopde}
multivector field in ${\cal P}$
verifying the transversality condition $\inn(\bm{\Gamma})\omega=1$ is 
$$
\bm{\Gamma}=\bigwedge^m_{\mu=1}
\Big(\derpar{}{x^\mu}+y^i_\mu\frac{\displaystyle\partial} {\displaystyle
\partial y^i}+\Gamma_{\mu\nu}^i\frac{\displaystyle\partial}{\displaystyle \partial y^i_\nu}+g^\nu_\mu\,\frac{\partial}{\partial s^\nu}\Big)\,.
$$
On the other hand, the local expression of a {\sc sopde} connection is
$$
\nabla=\d x^\mu\otimes
\Big(\derpar{}{x^\mu}+y^i_\mu\frac{\displaystyle\partial} {\displaystyle
\partial y^i}+\Gamma_{\mu\nu}^i\frac{\displaystyle\partial}{\displaystyle \partial y^i_\nu}+g^\nu_\mu\,\frac{\partial}{\partial s^\nu}\Big)\,.
$$
As usual, multivector fields and connections in ${\cal P}$ which have these local expressions but are not integrable are called {\sl semi-holonomic}.

A straightforward consequence of
the above definitions is that
$\bm{\Gamma}\in\vf^m({\cal P})$ and $\nabla$ are {\sc sopde}s in ${\cal P}$
if, and only if, their integral sections are holonomic in ${\cal P}$.

Finally, since ${\cal P}=J^1\pi\times_M\Lambda^{m-1}(\Tan^*M)$,
the canonical endomorphism ${\rm J}$ of $J^1\pi$
can be extended to ${\cal P}$ in a natural way and has the same coordinate expression. 
Denoting this extension with the same notation ${\rm J}$,
in natural coordinates
$\displaystyle {\rm J}=\left(\d y^i-y^i_\mu\d x^\mu\right)\otimes
\derpar{}{y^i_\nu}\otimes\derpar{}{x^\nu}$.

\subsection{(Pre)multicontact Lagrangian systems}

Now we can state the Lagrangian formalism of field theories with dissipation in the multicontact setting.

A \textbf{Lagrangian density} is a $\tau$-semibasic form $\mathcal{L}\in\df^m({\cal P})$.
If $\omega_{_M}$ is the volume form in $M$,
we have that $\Lag=L\,\tau^*\omega_{_M}$,
where $L\in\Cinfty({\cal P})$ is the
\textsl{Lagrangian function} associated to $\Lag$.

\begin{definition}
    \label{lagrangean}
The \textbf{Lagrangian form} associated to $\L$ is the form
\[
\Theta_{\Lag}=-\inn({\rm J})\d\mathcal{L}-\mathcal{L}+\d \overline{S}\in\df^m({\cal P}) \,,
\]
and then \
$\bd\Theta_\L=\d\Theta_\L+\sigma_{\Theta_\Lag}\wedge\Theta_\L\,.$
\end{definition}

In natural coordinates, the expression of the form
$\Theta_\L$ is just \eqref{thetacoor1},
and the local function
$\displaystyle E_\Lag:=\frac{\partial L}{\partial y^i_\mu}y^i_\mu-L$
is called the \textbf{Lagrangian energy} associated with $L$.
Therefore, $\displaystyle \sigma_{\Theta_\Lag}=\derpar{E_\Lag}{s^\mu}\,\d x^\mu$.

\begin{remark}{\rm
The (pre)multicontact form $\Theta_\L$ in ${\cal P}$
can also be obtained in an equivalent way
which is based on using the multisymplectic formalism for Lagrangian field theories
(see Section \ref{nmft}):
If we take the restriction of the Lagrangian function 
$L\in\Cinfty({\cal P})$
 to the fibers of the projection $\tau_1$
 (it is obtained considering $L$ with $s^\mu$ `freezed'),
 as ${\cal P}=J^1\pi\times_M\Lambda^{m-1}(\Tan^*M)$, these fibers are identified with
 $J^1\pi$, and hence this restricted function is $L_s\in\Cinfty(J^1\pi)$.
 Therefore we can construct
 the Poincar\'e--Cartan $m$-form
 $\Theta_{\Lag_s}\in\df^m(J^1\pi)$
 associated with the Lagrangian density $\Lag_s=L_s\,\bar\pi^{1*}\omega$, which has local expression
$$
\Theta_{\mathcal{L}_s}=
\frac{\partial L_s}{\partial y^i_\mu}\,\d y^i\wedge\d^{m-1}x_\mu -\left(\frac{\partial L_s}{\partial y^i_\mu}y^i_\mu-L_s\right)\d^m x \,.
$$}
\end{remark}

\begin{proposition}
\label{eqthetalag}
The Lagrangian form associated with $\L$ is
$\Theta_{\mathcal{L}}=-\rho^*\Theta_{\mathcal{L}_s}+\d\overline S$.
\end{proposition}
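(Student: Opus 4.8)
The plan is to reduce the claimed identity, via Definition \ref{lagrangean}, to two independent facts about the projection $\rho\colon{\cal P}\to J^1\pi$, and then to verify each of them. By definition $\Theta_\L=-\inn({\rm J})\d\mathcal{L}-\mathcal{L}+\d\overline S$, while on $J^1\pi$ the Poincar\'e--Cartan form is $\Theta_{\mathcal{L}_s}=\inn({\rm J})\d\mathcal{L}_s+\mathcal{L}_s$. Since the summand $\d\overline S$ already sits on both sides of the asserted equation, it suffices to prove
\[
\inn({\rm J})\d\mathcal{L}+\mathcal{L}=\rho^*\big(\inn({\rm J})\d\mathcal{L}_s+\mathcal{L}_s\big)=\rho^*\Theta_{\mathcal{L}_s}\,.
\]
I would split this into the claim $\mathcal{L}=\rho^*\mathcal{L}_s$ and the claim $\inn({\rm J})\d\mathcal{L}=\rho^*\big(\inn({\rm J})\d\mathcal{L}_s\big)$, reading $\rho^*$ with the convention that the value of the frozen parameter $s^\mu$ is reinstated from the point of ${\cal P}$ at which the pullback is evaluated.

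The first claim is immediate. Both $\mathcal{L}$ and $\mathcal{L}_s$ have the shape (Lagrangian function)$\,\times\,$(volume form); since $\bar\pi^1\circ\rho=\tau$ one has $\rho^*\bar\pi^{1*}\omega_{_M}=\tau^*\omega_{_M}$, and the restriction $L_s$ agrees with $L$ along the fibre of $\tau_1$. Hence $\rho^*\mathcal{L}_s=L\,\tau^*\omega_{_M}=\mathcal{L}$.

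The crux is the second claim, and this is where the $s$-dependence must be handled. The decisive structural point is that the canonical endomorphism ${\rm J}$ on ${\cal P}$ has the same coordinate expression as on $J^1\pi$, so its contraction operator acts only through $\inn(\partial/\partial y^i_\nu)$ and $\inn(\partial/\partial x^\nu)$. Writing $\mathcal{L}=L\,\d^m x$ gives $\d\mathcal{L}=\big(\derpar{L}{y^i}\d y^i+\derpar{L}{y^i_\mu}\d y^i_\mu+\derpar{L}{s^\mu}\d s^\mu\big)\wedge\d^m x$, and the key observation is that the extra term $\derpar{L}{s^\mu}\d s^\mu\wedge\d^m x$, which is absent from $\d\mathcal{L}_s$, is annihilated by $\inn(\partial/\partial y^i_\nu)$, since neither $\d s^\mu$ nor $\d^m x$ contains $\d y^i_\nu$. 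Consequently $\inn({\rm J})\d\mathcal{L}=\derpar{L}{y^i_\nu}(\d y^i-y^i_\mu\d x^\mu)\wedge\d^{m-1}x_\nu$, which is exactly $\rho^*\big(\inn({\rm J})\d\mathcal{L}_s\big)$. Equivalently, one may simply match the coordinate expression \eqref{thetacoor1} of $\Theta_\L$ term by term against $-\rho^*\Theta_{\mathcal{L}_s}+\d\overline S$, using the local form of $\Theta_{\mathcal{L}_s}$ recalled in the preceding Remark together with $\d\overline S=\d s^\mu\wedge\d^{m-1}x_\mu$ (valid because $\d^{m-1}x_\mu$ is closed).

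Combining the two claims yields $\Theta_\L=-\rho^*\Theta_{\mathcal{L}_s}+\d\overline S$, as desired. The only genuine obstacle here is conceptual rather than computational: one must make precise the meaning of $\rho^*\Theta_{\mathcal{L}_s}$, given that $\mathcal{L}_s$ depends on the frozen variables $s^\mu$, and then check that this freezing is compatible with the ${\rm J}$-contraction. This compatibility holds precisely because $\mathcal{L}$ is $\tau$-semibasic, which is exactly what forces the $\d s^\mu$-part of $\d\mathcal{L}$ to drop out under $\inn({\rm J})$.
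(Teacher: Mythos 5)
Your proposal is correct. For comparison: the paper gives no proof at all for this proposition --- it is stated right after the remark recording the local expression of $\Theta_{\mathcal{L}_s}$, and is implicitly justified by matching that expression (with sign reversed) plus $\d\overline S=\d s^\mu\wedge\d^{m-1}x_\mu$ against the coordinate formula \eqref{thetacoor1} for $\Theta_\L$. Your ``equivalently'' paragraph is exactly this implicit argument, so you certainly cover the paper's route. Your primary argument goes further and is intrinsic: reducing via Definition \ref{lagrangean} to the two claims $\rho^*\mathcal{L}_s=\mathcal{L}$ and $\inn({\rm J})\d\mathcal{L}=\rho^*\bigl(\inn({\rm J})\d\mathcal{L}_s\bigr)$, and observing that the $\d s^\mu$-part of $\d\mathcal{L}$ is killed by the contraction with ${\rm J}$ because that contraction only inserts $\partial/\partial y^i_\nu$ and $\partial/\partial x^\nu$. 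This buys two things the paper leaves tacit: it makes precise the abuse of notation in $\rho^*\Theta_{\mathcal{L}_s}$ (a parametrized pullback in which the frozen $s^\mu$ is reinstated from the evaluation point, which is needed for the statement to even make sense as an equality of forms on ${\cal P}$), and it isolates the structural reason the extra $s$-dependence of $L$ does not spoil the identity. One small quibble: attributing the vanishing of the $\d s^\mu$-term to $\tau$-semibasicity of $\mathcal{L}$ is slightly indirect --- semibasicity ensures $\d\mathcal{L}=\d L\wedge\d^m x$, and it is then the absence of any $\d y^i_\nu$ factor in $\d s^\mu\wedge\d^m x$ that makes $\inn({\rm J})$ annihilate it --- but your computation itself uses the correct mechanism, so this does not affect the validity of the proof.
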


Now, consider the extended and the restricted multimomentum bundles 
${\cal M}\pi$ and $J^{1*}\pi$ introduced in Section \ref{nmft}.
We can construct the extended and the restricted Legendre maps
$\widetilde{FL}_s\colon J^1\pi\to {\cal M}\pi$,
and $FL_s:=\kappa\circ\widetilde{FL}_s\colon J^1\pi\to J^{1*}\pi$
associated with this ``restricted'' Lagrangian function
$L_s\in\Cinfty(J^1\pi)$
whose local expressions coincide with \eqref{FL1}.
Remember that $L_s$ is a regular Lagrangian in $J^1\pi$ if
 $FL_s$ is a local diffeomorphism or, what is equivalent,
 if the Hessian matrix
$\displaystyle\Big(\frac{\partial^2L_s}{\partial y^i_\mu\partial y^j_\nu}\Big)$
is everywhere regular (nondegenerate),
and $L_s$ is {\sl hyperregular} when $FL_s$ is a global diffeomorphism.

These considerations lead us to introduce the manifold
 ${\cal P}^*:=J^{1*}\pi\times_M\Lambda^{m-1}(\Tan^*M)$,
which has natural coordinates $(x^\mu,y^i,p_i^\mu,s^\mu)$.
 
\begin{definition}
\label{Legmap}
The \textbf{Legendre map}  associated with the Lagrangian function $L\in\Cinfty({\cal P})$
is the map
${\cal F}L\colon {\cal P}\to {\cal P}^*$
given by ${\cal F}L:=(FL_s,{\rm Id}_{\Lambda^{m-1}(\Tan^*M)})$,
\end{definition}

The Legendre map is locally given by
$\displaystyle{\cal F}L(y^i,y^i_\mu,s^\mu)=\Big(y^i,\frac{\partial L}{\partial y^i_\mu},s^\mu\Big)$.

\begin{proposition}
\label{Prop-regLag}
For a Lagrangian function $L\in\Cinfty({\cal P})$, the following conditions are equivalent:
\begin{enumerate}[{\rm (1)}]
\item
The Legendre map
${\cal F}L$ is a local diffeomorphism.
\item
The Hessian matrix
$\displaystyle (W_{ij}^{\mu\nu})= 
\bigg(\frac{\partial^2L}{\partial y^i_\mu\partial y^j_\nu}\bigg)$
is regular everywhere.
\item
The Lagrangian form $\Theta_\L$ is a multicontact form in ${\cal P}$
and $(\Theta_\L,\omega)$ is a multicontact structure.
\end{enumerate}
\end{proposition}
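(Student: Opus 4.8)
The plan is to establish the two equivalences $(1)\Leftrightarrow(2)$ and $(2)\Leftrightarrow(3)$ separately; the first is essentially a one-line Jacobian computation and the second carries the real content. For $(1)\Leftrightarrow(2)$ I would read off from Definition \ref{Legmap} that in adapted coordinates ${\cal F}L$ acts as $(x^\mu,y^i,y^i_\mu,s^\mu)\mapsto(x^\mu,y^i,\partial L/\partial y^i_\mu,s^\mu)$, so that its Jacobian is block triangular: the identity on the $x^\mu$, $y^i$ and $s^\mu$ entries, with the only nontrivial block being $\partial(\partial L/\partial y^i_\mu)/\partial y^j_\nu=W^{\mu\nu}_{ij}$, which is reached only from the $y^j_\nu$ inputs. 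Hence $\det(\text{Jac}\,{\cal F}L)=\det(W^{\mu\nu}_{ij})$ up to sign, and ${\cal F}L$ is a local diffeomorphism around a point exactly where $\det(W^{\mu\nu}_{ij})\neq0$ there, which is condition $(2)$.

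For $(2)\Leftrightarrow(3)$ I would compute the characteristic distribution directly. Writing $p_i^\mu:=\partial L/\partial y^i_\mu$, from \eqref{thetacoor1} one gets $\d\Theta_\L=-\d p_i^\mu\wedge\d y^i\wedge\d^{m-1}x_\mu+\d E_\L\wedge\d^m x$, since the summand $\d s^\mu\wedge\d^{m-1}x_\mu$ is closed. Take a generic $Z=a^i\,\partial/\partial y^i+b^i_\mu\,\partial/\partial y^i_\mu+c^\mu\,\partial/\partial s^\mu\in\Gamma(\ker\omega)$. A straightforward contraction gives $\inn(Z)\Theta_\L=(c^\mu-p_i^\mu a^i)\,\d^{m-1}x_\mu$, so $Z\in\ker\Theta_\L$ iff $c^\mu=p_i^\mu a^i$. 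Contracting $\d\Theta_\L$, the component of $\inn(Z)\d\Theta_\L$ along $\d y^j_\nu\wedge\d^{m-1}x_\mu$ is $a^iW^{\mu\nu}_{ij}$, while once $a=0$ (so that $c=0$ by the $\ker\Theta_\L$ condition) the component along $\d y^i\wedge\d^{m-1}x_\mu$ reduces to $-b^j_\nu W^{\mu\nu}_{ij}$; the remaining components along $\d s^\nu\wedge\d^{m-1}x_\mu$ and $\d^m x$ turn out to be consequences of these once the previous vanishing conditions are imposed.

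The crux is then linear algebra with the Hessian $\mathbb{W}=(W^{\mu\nu}_{ij})$, viewed as a symmetric matrix whose rows and columns are indexed by the pairs $(i,\mu)$ and $(j,\nu)$. If $\mathbb{W}$ is regular, the vanishing of $a^iW^{\mu\nu}_{ij}$ for all $j,\nu,\mu$ forces $a=0$ (apply regularity to the vector $e_\mu\otimes a$ for each fixed $\mu$); then $c=0$, and the vanishing of $b^j_\nu W^{\mu\nu}_{ij}$ says $b\in\ker\mathbb{W}=\{0\}$, whence $\mathcal{C}=\ker\omega\cap\ker\Theta_\L\cap\ker\d\Theta_\L=\{0\}$. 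The same computation identifies the Reeb vectors: taking $a=0$ and solving $b^k_\nu W^{\mu\nu}_{ik}=-c^\nu\,\partial p_i^\mu/\partial s^\nu$ for $b$ in terms of the $m$ free parameters $c^\mu$ gives $\rk\mathcal{D}^{\mathfrak{R}}=m$, and then $\inn(R)\Theta_\L=c^\mu\,\d^{m-1}x_\mu$ sweeps out $\langle\d^{m-1}x_\mu\rangle=\mathcal{A}^{m-1}(\ker\omega)$. This verifies the three conditions of Definition \ref{multicontactbundle} and proves $(2)\Rightarrow(3)$.

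Conversely, if $\mathbb{W}$ is singular I would pick $0\neq(b^i_\mu)\in\ker\mathbb{W}$ and check that $Z=b^i_\mu\,\partial/\partial y^i_\mu$ (with $a=c=0$) satisfies $\inn(Z)\Theta_\L=0$ and $\inn(Z)\d\Theta_\L=0$, so $Z$ is a nonzero element of $\mathcal{C}$; thus the characteristic distribution is nontrivial, condition $(2)$ of Definition \ref{multicontactbundle} fails, and $(\Theta_\L,\omega)$ is not multicontact, giving $\neg(2)\Rightarrow\neg(3)$. The main obstacle is the bookkeeping in $\inn(Z)\d\Theta_\L$: one must check carefully that the $\d s^\nu\wedge\d^{m-1}x_\mu$ and $\d^m x$ components, together with the identification $\mathcal{A}^{m-1}(\ker\omega)=\langle\d^{m-1}x_\mu\rangle$, are all governed by the Hessian conditions, so that no additional nondegeneracy requirement is hidden among the remaining terms.
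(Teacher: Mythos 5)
Your proof is correct, but your treatment of $(2)\Leftrightarrow(3)$ takes a genuinely different route from the paper's. The paper never computes the characteristic or Reeb distributions of $(\Theta_\L,\omega)$ in coordinates: it invokes Proposition \ref{eqthetalag}, $\Theta_\L=-\rho^*\Theta_{\L_s}+\d\overline{S}$, so that $\d\Theta_\L=-\rho^*\d\Theta_{\L_s}$, and then argues that being multicontact amounts to $\rk\d\Theta_\L$ attaining its maximal value, i.e.\ to $\d\Theta_{\L_s}$ being a multisymplectic form on $J^1\pi$ for every frozen $s$; the classical multisymplectic fact that this holds if, and only if, $L_s$ is regular, together with $\tparderr{L}{y^i_\mu}{y^j_\nu}({\rm p})=\tparderr{L_s}{y^i_\mu}{y^j_\nu}(\rho({\rm p}))$, closes the loop with condition $(2)$. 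Your argument is instead a self-contained coordinate verification: you contract $\Theta_\L$ and $\d\Theta_\L$ with an arbitrary vertical vector and reduce everything to linear algebra with the Hessian $\mathbb{W}$. This buys two things the paper's proof does not deliver explicitly: you check \emph{all three} conditions of Definition \ref{multicontactbundle} (including the Reeb rank and ${\cal A}^{m-1}(\ker\omega)=\{\inn(R)\Theta_\L\}$, which the paper glosses over by treating the vanishing of the characteristic distribution as the sole characterization of being multicontact), and your solution of $W^{\mu\nu}_{ij}b^j_\nu=-\tparderr{L}{s^\nu}{y^i_\mu}\,c^\nu$ recovers, as a by-product, the explicit Reeb vector fields $(R_\L)_\mu$ that the paper only states after the proposition. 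What the paper's approach buys is brevity and a conceptual identification of multicontact regularity with multisymplecticity of the underlying Poincar\'e--Cartan form; what yours buys is completeness at the level of the definition, at the cost of the index bookkeeping you rightly flag as the main hazard (and which you do handle correctly: the $\d s^\nu\wedge\d^{m-1}x_\mu$, $\d x^\nu\wedge\d^{m-1}x_\mu$ and $\d^m x$ components all carry factors that vanish once $a=b=c=0$ or once $b\in\ker\mathbb{W}$, the last one using the symmetry of the Hessian).
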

\begin{proof}
The equivalence between (1) and (2) can be easily proved using natural coordinates in ${\cal P}$ and bearing in mind the
local expression of the Legendre map ${\cal F}L$.

For the equivalence with (3), bearing in mind Definition \ref{multicontactbundle} 
we observe that the multicontact structure is characterized by the fact that 
$\Gamma({\cal C})=\{ 0\}$ or, what is equivalent,
$\rk{\cal D}^{\mathfrak{R}}=m$,
its lowest possible value,
and this means that $\rk\d\Theta_\L$
reaches its maximum value.
But, from Proposition \ref{eqthetalag},
we have that $\d\Theta_\L=\rho^*\d\Theta_{\L_s}$,
and then $\rk\d\Theta_\L=\rk\d\Theta_{\L_s}=\dim\,J^1\pi$,
namely $\d\Theta_{\L_s}$ is a multisymplectic form in $J^1\pi$.
As it is well-known from the multisymplectic formulation of classical field theories,
this happens if, and only if, $L_s$ is a regular Lagrangian in $J^1\pi$, for every $s=(s^\mu)\in\Lambda^{m-1}(\Tan^*M)$,
that is $\displaystyle 
\bigg(\frac{\partial^2L_s}{\partial y^i_\mu\partial y^j_\nu}\bigg)$ 
is regular everywhere in $J^1\pi$, for every $s$,
and since $\displaystyle 
\frac{\partial^2L}{\partial y^i_\mu\partial y^j_\nu}({\rm p})=
\frac{\partial^2L_s}{\partial y^i_\mu\partial y^j_\nu}(\rho({\rm p}))$, with $s=\tau_1({\rm p})$,
condition (2) holds.
 \end{proof}

\begin{definition}
A Lagrangian function $L\in\Cinfty({\cal P})$ is said to be \textbf{regular} if the equivalent
conditions in Proposition \ref{Prop-regLag} hold.
Otherwise $L$ is a \textbf{singular} Lagrangian.
In particular, 
$L$ is said to be \textbf{hyperregular} 
if ${\cal F}L$ is a global diffeomorphism.
\end{definition}

As we have seen, $L$ is regular in ${\cal P}$ if, and only if, $L_s$ is regular in $J^1\pi$, for every $s\in\Lambda^{m-1}(\Tan^*M)$.

\begin{remark}
{\rm
It is important to point out that
non-regular Lagrangians can induce premulticontact structures but also structures which are neither multicontact nor premulticontact.
For example, 
the Lagrangian $\displaystyle L=\sum_{i=1}^n y^i_\mu s^\mu$
yields a structure $(\Theta_\L,\omega)$
which has no Reeb distribution associated to it.
}\end{remark}

\begin{definition}
The premulticontact bundle $({\cal P},\Theta_\L,\omega)$ is called a \textbf{(pre)multicontact Lagrangian system}.
\end{definition}

Given a multicontact Lagrangian system $({\cal P},\Theta_\L,\omega)$,
from Lemma \ref{reeblemma} we have that
the Reeb vector fields $(R_\L)_\mu\in\mathfrak{R}_\L\subset\vf({\cal P})$ 
for this system are the unique solutions to
$\inn((R_\L)_\mu)\Theta=\d^{m-1}x_\mu$.
Then, since $L$ is regular, there exists the inverse 
$(W^{ij}_{\mu\nu})$ of the Hessian matrix,
namely $\displaystyle W^{ij}_{\mu\nu}\frac{\partial^2L}{\partial y^j_\nu \partial y^k_\gamma}=\delta^i_k\delta^\gamma_\mu$,
and a simple calculation in coordinates leads to
$$
(R_\L)_\mu=\frac{\partial}{\partial s^\mu}-W^{ji}_{\gamma\nu}\frac{\partial^2\L}{\partial s^\mu\partial y^j_\gamma}\,\frac{\partial}{\partial y^i_\nu} \,.
$$
Therefore, bearing in mind Proposition \ref{sigma} and \eqref{thetacoor1},
we see that
\beq\displaystyle \sigma_{\Theta_\L}=-\derpar{L}{s^\mu}\,\d x^\mu\,.
\label{sigmaL}
\eeq
If $({\cal P},\Theta_\L)$ is a premulticontact Lagrangian system, the Reeb vector fields are not uniquely determined from the equation $\inn((R_\L)_\mu)\Theta=\d^{m-1}x_\mu$.

Note that, in general, the natural coordinates in ${\cal P}$ are not adapted coordinates for the
(pre)multicontact structure $(\Theta_\L,\omega)$.

\subsection{The (pre)multicontact Lagrangian field equations}

Bearing in mind Definition \ref{mconteq}, Proposition \ref{sectequiv},
and Theorems \ref{equivteor1} and \ref{equivteor2}, we can define:

\begin{definition}
Let $({\cal P},\Theta_\Lag,\omega)$ be a (pre)multicontact Lagrangian system.
\begin{enumerate}[{\rm(1)}]
\item
The \textbf{(pre)multicontact Lagrangian equations} for holonomic sections
$\bm{\psi}\colon M\to {\cal P}$ are
\begin{equation}
\label{ELmcontact}
\inn(\bm{\psi}^{(m)})(\Theta_\Lag\circ\bm{\psi})=0 \,,\qquad
\inn(\bm{\psi}^{(m)})(\overline\d\Theta_\Lag\circ\bm{\psi})=0 \,.
\end{equation}
or equivalently
\begin{equation}
\label{ELmcontact2}
\bm{\psi}^*\Theta_\Lag= 0 \,,\qquad
\bm{\psi}^*\inn(Y)\bd\Theta_\Lag= 0 \,, \qquad \text{\rm for every }\ Y\in\vf({\cal P}) \,.
\end{equation}
\item
The \textbf{(pre)multicontact Lagrangian equations for $\tau$-transverse, locally decomposable multivector fields}
${\bf X}_\L\in\vf^m({\cal P})$ are
\beq 
\inn({\mathbf{X}_\Lag})\Theta_\Lag=0 \,,\qquad \inn({\bfX}_\Lag)\bd\Theta_\Lag=0 \,,
\label{fieldLcontact}
\eeq
where the condition of $\tau$-transversality is imposed 
by requiring that $\inn({\bf X}_\L)\omega=1$.

An $m$-multivector field solution to these equations is called a
\textbf{Lagrangian multivector field}.
\item
The \textbf{(pre)multicontact Lagrangian equations for Ehresmann connections}
$\nabla_\Lag$ on ${\cal P}\to M$ are
\begin{equation}
\label{EcL}
\inn(\nabla_\L)\Theta_\L=(m-1)\Theta_\L \,,\qquad
\inn(\nabla_\L)\bd\Theta_\L=(m-1)\bd\Theta_\L  \,.
\end{equation}
An Ehresmann connection solution to these equations is called a
\textbf{Lagrangian connection}.
\end{enumerate}
\end{definition}

\begin{proposition}
Let $({\cal P},\Theta_\L,\omega)$ be a multicontact (i.e., regular) Lagrangian system.
Then:
\begin{enumerate}[\rm (1)]
\item
The multicontact Lagrangian field equations
for multivector fields \eqref{fieldLcontact} 
and for Ehresmann connections \eqref{EcL}
have solutions on ${\cal P}$.
The solutions are not unique if $m>1$.
\item
The Lagrangian $m$-multivector fields ${\bf X}_\L$
solution to equations \eqref{fieldLcontact} 
and the corresponding Ehresmann connections
$\nabla_\L$ in ${\cal P}$
which are associated with the classes $\{{\bf X}_\L\}$ 
and are solutions to \eqref{EcL},
are semi-holonomic.
\item
In addition, if ${\bf X}_\L$ and $\nabla_\L$ are semi-holonomic and integrable solutions, namely {\sc sopde}s, their integral sections are solutions to the multicontact Euler--Lagrange field equations \eqref{ELmcontact} or \eqref{ELmcontact2}.

\noindent In this case, these {\sc sopde}s ${\bf X}_\L$
and $\nabla_\L$ are called the \textbf{Euler--Lagrange multivector fields}
and \textbf{connections}
associated with the Lagrangian function $L$.
\end{enumerate}
\end{proposition}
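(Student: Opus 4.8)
The plan is to prove everything at the level of multivector fields and then transfer the conclusions to connections through Theorem~\ref{equivteor2}, which already establishes that Lagrangian connections solving \eqref{EcL} and classes of $\tau$-transverse, locally decomposable multivector fields solving \eqref{fieldLcontact} are in one-to-one correspondence, with integrability preserved. Thus the whole statement reduces to analysing \eqref{fieldLcontact} in natural coordinates $(x^\mu,y^i,y^i_\mu,s^\mu)$. I would take a generic $\tau$-transverse, locally decomposable representative normalised by $\inn({\bf X}_\L)\omega=1$,
$$
{\bf X}_\L=\bigwedge_{\mu=1}^m\Big(\derpar{}{x^\mu}+D_\mu^i\derpar{}{y^i}+C_{\mu\nu}^i\derpar{}{y^i_\nu}+g_\mu^\nu\derpar{}{s^\nu}\Big)\,,
$$
with unknowns $D_\mu^i,C_{\mu\nu}^i,g_\mu^\nu\in\Cinfty({\cal P})$, and contract it both with $\Theta_\L$ of \eqref{thetacoor1} and with $\bd\Theta_\L=\d\Theta_\L+\sigma_{\Theta_\L}\wedge\Theta_\L$.

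For part (2), the key observation is that by Proposition~\ref{eqthetalag} one has $\d\Theta_\L=-\rho^*\d\Theta_{\L_s}$, so the only terms of $\bd\Theta_\L$ carrying a factor $\d y^j_\nu$ are those of $-\rho^*\d\Theta_{\L_s}$, i.e. exactly the terms bearing the Hessian $\partial^2 L/\partial y^i_\mu\partial y^j_\nu$; the correction $\sigma_{\Theta_\L}\wedge\Theta_\L$, with $\sigma_{\Theta_\L}$ given by \eqref{sigmaL} and $\Theta_\L$ by \eqref{thetacoor1}, produces no $\d y^j_\nu$ factor at all. Collecting the $\d y^j_\nu$ components of the $1$-form equation $\inn({\bf X}_\L)\bd\Theta_\L=0$ then yields, for all $j,\nu$,
$$
\frac{\partial^2 L}{\partial y^i_\mu\partial y^j_\nu}\,(D_\mu^i-y_\mu^i)=0\,,
$$
and regularity of the Hessian (Proposition~\ref{Prop-regLag}) forces $D_\mu^i=y_\mu^i$. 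This is precisely the semi-holonomy condition, so every solution ${\bf X}_\L$, and hence every associated connection $\nabla_\L$, is semi-holonomic.

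For part (1), once $D_\mu^i=y_\mu^i$ is imposed the remaining content of $\inn({\bf X}_\L)\bd\Theta_\L=0$ lies in its $\d y^i$ components, which give a linear system for the $C_{\mu\nu}^i$ whose principal part is again the Hessian; regularity determines the combination of $C_{\mu\nu}^i$ symmetric in $(\mu,\nu)$, while the scalar equation $\inn({\bf X}_\L)\Theta_\L=0$ fixes only the trace $g_\mu^\mu$ in terms of the data (the $\d x^\nu$ components being satisfied identically, as in the proofs of Theorem~\ref{equivteor2} and Proposition~\ref{sectequiv}). Since the antisymmetric part of $C_{\mu\nu}^i$ and the off-diagonal part of $g_\mu^\nu$ remain free, for $m>1$ there are genuinely free components, so solutions exist on all of ${\cal P}$ but are not unique; for $m=1$ no such freedom survives. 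The corresponding statements for connections follow once more from Theorem~\ref{equivteor2}.

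Finally, for part (3), if a solution ${\bf X}_\L$ (equivalently $\nabla_\L$) is in addition integrable and semi-holonomic, it is a {\sc sopde}, so its integral sections $\bm{\psi}$ are holonomic, $\bm{\psi}=(j^1\phi,s)$. Restricting \eqref{fieldLcontact} along such a $\bm{\psi}$ and using ${\bf X}_\L\circ\bm{\psi}=\bm{\psi}^{(m)}$ reproduces \eqref{ELmcontact}, equivalently \eqref{ELmcontact2}; substituting $D_\mu^i=y_\mu^i$ and $C_{\mu\nu}^i=\partial^2 y^i/\partial x^\mu\partial x^\nu$ into the $\d y^i$ equations collapses them to the Herglotz--Euler--Lagrange equations. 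The main obstacle I anticipate is the bookkeeping in the contraction $\inn({\bf X}_\L)\bd\Theta_\L$: cleanly separating its $\d y^j_\nu$, $\d y^i$, and $\d x^\mu$ components and verifying that the $\sigma_{\Theta_\L}\wedge\Theta_\L$ correction never feeds into the Hessian block, so that both the semi-holonomy argument and the symmetric/antisymmetric split of $C_{\mu\nu}^i$ governing non-uniqueness go through exactly as in the conservative multisymplectic case.
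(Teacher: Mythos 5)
Your proof takes essentially the same route as the paper's: both expand $\inn({\bf X}_\L)\Theta_\L=0$ and $\inn({\bf X}_\L)\bd\Theta_\L=0$ in natural coordinates, extract the Hessian block $\big((X_\L)^j_\mu-y^j_\mu\big)\,\parderr{L}{y^i_\nu}{y^j_\mu}=0$ (the paper's \eqref{A-E-L-eqs1}) to obtain semi-holonomy from regularity, deduce existence and non-uniqueness for $m>1$ from the remaining underdetermined linear system with regular Hessian as principal part, and get part (3) by evaluating on integral sections; your only organizational differences (invoking Theorem \ref{equivteor2} to transfer the statements to connections instead of running the two coordinate computations in parallel, and using Proposition \ref{eqthetalag} to see that the correction $\sigma_{\Theta_\L}\wedge\Theta_\L$ contributes no $\d y^j_\nu$ terms) are valid shortcuts. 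One minor imprecision, which does not affect any of your conclusions: the $\d y^i$-components pin down only the $n$ contracted combinations $\parderr{L}{y^i_\mu}{y^j_\nu}(X_\L)^j_{\mu\nu}$, not the whole $(\mu,\nu)$-symmetric part of the second-order coefficients, so for $m>1$ there is in fact even more freedom than you state.
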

\begin{proof}
In a natural chart of coordinates of ${\cal P}$, 
for an $m$-multivector field
$$\displaystyle
{\bf X}_\L= \bigwedge_{\mu=1}^m
\bigg(\derpar{}{x^\mu}+(X_\L)_\mu^i\frac{\displaystyle\partial}{\displaystyle
\partial y^i}+(X_\L)_{\mu\nu}^i\frac{\displaystyle\partial}{\displaystyle\partial y^i_\nu}+(X_\L)_\mu^\nu\,\frac{\partial}{\partial s^\nu}\bigg)\in\vf^m({\cal P}) \,,
$$
or for the Ehresmann connection $\nabla_\L$
in ${\cal P}$ associated with the class $\{{\bf X}_\L\}$,
$$\displaystyle
\nabla_\L= \d x^\mu\otimes
\bigg(\derpar{}{x^\mu}+(X_\L)_\mu^i\frac{\displaystyle\partial}{\displaystyle
\partial y^i}+(X_\L)_{\mu\nu}^i\frac{\displaystyle\partial}{\displaystyle\partial y^i_\nu}+(X_\L)_\mu^\nu\,\frac{\partial}{\partial s^\nu}\bigg) \,,
$$
bearing in mind Definition \ref{bard} and the local expressions \eqref{thetacoor1} and \eqref{sigmaL} we have that
$$
\bd\Theta_\L=
\d\left(-\frac{\partial L}{\partial y^i_\mu}\d y^i\wedge\d^{m-1}x_\mu +\Big(\frac{\partial L}{\partial y^i_\mu}y^i_\mu-L\Big)\d^m x\right)
+\left(\derpar{L}{s^\mu}\frac{\partial L}{\partial y^i_\mu}\d y^i
-\derpar{L}{s^\mu}\d s^\mu\right)\wedge\d^mx
\,,
$$
and equations \eqref{fieldLcontact} and \eqref{EcL} lead to
\begin{align}
0 &=
\displaystyle L + 
\frac{\partial L}{\partial y^i_\mu}\Big((X_\L)_\mu^i-y^i_\mu\Big)-(X_\L)_\mu^\mu\,,
\label{A-E-L-eqs4}
\\
0 &=
\displaystyle \Big((X_\L)_\mu^j-y^j_\mu\Big)
\frac{\partial^2L}{\partial v^j_\mu\partial s^\nu} \,,
\label{A-E-L-eqs2}
\\
\displaystyle 0&=\Big((X_\L)_\mu^j-y^j_\mu\Big)
\frac{\partial^2L}{\partial v^j_\mu\partial x^\nu} \,,
\label{A-E-L-eqs0}
\\
0 &=
\displaystyle \Big((X_\L)_\mu^j-y^j_\mu\Big)
\frac{\partial L}{\partial y^i_\nu\partial y^j_\mu}
\label{A-E-L-eqs1} \,,
\\
0 &=
\displaystyle
\Big((X_\L)_\mu^j-y^j_\mu\Big)
\frac{\partial^2 L}{\partial y^i\partial y^j_\mu}
+\frac{\partial L}{\partial y^i}- \parderr{L}{x^\mu}{y_\mu^i}
-\frac{\partial^2L}{\partial s^\nu\partial y^i_\mu}(X_\L)_\mu^\nu
\nonumber
\\ &\quad
-\frac{\partial^2L}{\partial y^j \partial y^i_\mu}(X_\L)_\mu^j
-\frac{\partial^2L}{\partial y^j_\nu\partial y^i_\mu}(X_\L)_{\mu\nu}^j
+\frac{\partial L}{\partial s^\mu}
\frac{\partial L}{\partial y^i_\mu}\,,
\label{A-E-L-eqs3}
\end{align}
and a last group of equations which are identities when they are combined with the above ones.
If $L$ is a regular Lagrangian, equations \eqref{A-E-L-eqs1} give 
\beq
\label{semihol}
y^i_\mu=(X_\L)_\mu^i \,,
\eeq
which are the conditions for the multivector field ${\bf X}_\L$ 
and the connection $\nabla_\L$ to be semi-holonomic.
Then, \eqref{A-E-L-eqs2} and \eqref{A-E-L-eqs0} hold identically, 
and \eqref{A-E-L-eqs4} and \eqref{A-E-L-eqs3} give 
\begin{align*}
(X_\L)_\mu^\mu&= L \,,
\\
\frac{\partial L}{\partial y^i}- \parderr{L}{x^\mu}{y_\mu^i}
-\frac{\partial^2L}{\partial y^j \partial y^i_\mu}y_\mu^j
-\frac{\partial^2L}{\partial s^\nu\partial y^i_\mu}(X_\L)_\mu^\nu
-\frac{\partial^2L}{\partial y^j_\nu\partial y^i_\mu}(X_\L)_{\mu\nu}^j
&=-\frac{\partial L}{\partial s^\mu}
\frac{\partial L}{\partial y^i_\mu} \,.
\end{align*}
These equations have always solution since the Hessian matrix 
$\displaystyle\bigg(\frac{\partial^2L}{\partial y^j_\nu\partial y^i_\mu}\bigg)$ is regular everywhere.

Finally, if these semi-holonomic multivector fields ${\bf X}_\L$ and connections $\nabla_\L$ are integrable, by \eqref{semihol}, they are {\sc sopde}s 
and these last equations transform into
\begin{align}
 \derpar{s^\mu}{x^\mu}&=L\circ{\bm{\psi}} \,,
 \label{ELeqs2}
 \\
\label{ELeqs1}
\frac{\partial}{\partial x^\mu}
\left(\frac{\displaystyle\partial L}{\partial
y^i_\mu}\circ{\bm{\psi}}\right)&=
\left(\frac{\partial L}{\partial y^i}+
\displaystyle\frac{\partial L}{\partial s^\mu}\displaystyle\frac{\partial L}{\partial y^i_\mu}\right)\circ{\bm{\psi}} \,,
\end{align}
which are the coordinate expression of the Lagrangian equations
\eqref{ELmcontact} or \eqref{ELmcontact2}
for the integral sections of ${\bf X}_\L$ and $\nabla_\L$. 
\end{proof}

Of course all these equations are the same as those obtained in the $k$-cocontact formulation of non-conservative field theories \cite{Ri-2022}
and also match those of $k$-contact formalism when the Lagrangian function does not depend on the spacetime variables $x^\mu$ \cite{GGMRR-2020,GRR-2022}.
Furthermore, equation \eqref{ELeqs2} relates the canonical action form with the variational formulation through the Lagrangian density
(see \eqref{Lagact}).
In fact, we have the following.

\begin{corollary}
If $\bm{\psi}$ is a holonomic section such that $\bm{\psi}^\ast\Theta_\L = 0$, we have that
\beq
\label{SLag}
\d(\overline S\circ{\bm{\psi}})=\L\circ{\bm{\psi}}\,.
\eeq
\end{corollary}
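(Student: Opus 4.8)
The plan is to read the composition notation in the statement as pullback, namely $\overline{S}\circ\bm{\psi}:=\bm{\psi}^*\overline{S}\in\df^{m-1}(M)$ and $\L\circ\bm{\psi}:=\bm{\psi}^*\L\in\df^m(M)$, which is consistent with the identity $\bm{\psi}^*\overline{S}=\tau_1\circ\bm{\psi}$ recorded above. With this reading, I would simply pull the defining equation of the Lagrangian form back along $\bm{\psi}$. Starting from Definition \ref{lagrangean},
$$
\Theta_\L=-\inn(\mathrm{J})\d\L-\L+\d\overline{S},
$$
and using that exterior differentiation commutes with pullback, this gives
$$
\bm{\psi}^*\Theta_\L=-\bm{\psi}^*\big(\inn(\mathrm{J})\d\L\big)-\bm{\psi}^*\L+\d(\bm{\psi}^*\overline{S}).
$$

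The crux is the vanishing $\bm{\psi}^*(\inn(\mathrm{J})\d\L)=0$ for every holonomic section. This is the standard annihilation property of the canonical endomorphism $\mathrm{J}=(\d y^i-y^i_\mu\d x^\mu)\otimes\partial/\partial y^i_\nu\otimes\partial/\partial x^\nu$: the form $\inn(\mathrm{J})\d\L$ is assembled from the contact $1$-forms $\d y^i-y^i_\mu\d x^\mu$, and since $\rho\circ\bm{\psi}=j^1\phi$ is holonomic (Definition \ref{de652}) one has $y^i_\mu\circ\bm{\psi}=\partial(y^i\circ\bm{\psi})/\partial x^\mu$, so each such contact form pulls back to zero. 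I would confirm this with the one-line coordinate check: the two surviving pieces $-\frac{\partial L}{\partial y^i_\mu}\d y^i\wedge\d^{m-1}x_\mu$ and $\frac{\partial L}{\partial y^i_\mu}y^i_\mu\,\d^m x$ of $-\inn(\mathrm{J})\d\L$ pull back respectively to $-(\frac{\partial L}{\partial y^i_\mu}y^i_\mu)\circ\bm{\psi}\,\d^m x$ and $(\frac{\partial L}{\partial y^i_\mu}y^i_\mu)\circ\bm{\psi}\,\d^m x$, which cancel.

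With this vanishing in hand, the displayed pullback reduces to $\bm{\psi}^*\Theta_\L=\d(\bm{\psi}^*\overline{S})-\bm{\psi}^*\L$, so the hypothesis $\bm{\psi}^*\Theta_\L=0$ yields exactly $\d(\overline{S}\circ\bm{\psi})=\L\circ\bm{\psi}$. As a sanity check, one may instead argue purely in coordinates from \eqref{thetacoor1}: a direct computation gives $\bm{\psi}^*\Theta_\L=(\partial s^\mu/\partial x^\mu-L\circ\bm{\psi})\,\d^m x$, so that $\bm{\psi}^*\Theta_\L=0$ recovers the relation $\partial s^\mu/\partial x^\mu=L\circ\bm{\psi}$ of \eqref{ELeqs2}; then $\bm{\psi}^*\overline{S}=s^\mu\,\d^{m-1}x_\mu$ gives $\d(\bm{\psi}^*\overline{S})=(\partial s^\mu/\partial x^\mu)\,\d^m x$, and the claim follows from $\L=L\,\d^m x$. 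The only delicate point is the holonomy bookkeeping in the contraction with $\mathrm{J}$; there is no real obstacle, and the corollary is immediate once that cancellation is in place.
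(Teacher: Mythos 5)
Your proposal is correct. Your primary argument, however, is a more intrinsic route than the one the paper takes: the paper simply observes that, for a holonomic section, the equation $\bm{\psi}^*\Theta_\L=0$ written in coordinates is exactly \eqref{ELeqs2}, $\partial s^\mu/\partial x^\mu=L\circ\bm{\psi}$, which is the coordinate expression of \eqref{SLag} — this is precisely your ``sanity check''. Your main argument instead pulls back the defining identity $\Theta_\L=-\inn({\rm J})\d\L-\L+\d\overline{S}$ and uses that $\inn({\rm J})\d\L=\dfrac{\partial L}{\partial y^i_\nu}\left(\d y^i-y^i_\mu\d x^\mu\right)\wedge\d^{m-1}x_\nu$ is generated by the contact $1$-forms, hence is annihilated by every holonomic section; this gives $0=\bm{\psi}^*\Theta_\L=\d(\bm{\psi}^*\overline{S})-\bm{\psi}^*\L$ directly. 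What the intrinsic version buys is self-containedness and conceptual clarity: it shows the corollary is a consequence of holonomy plus the algebraic structure of $\Theta_\L$ alone, with no appeal to the field-equation analysis of the preceding proposition, and it works uniformly without invoking the local expression \eqref{thetacoor1}. The paper's route is shorter given that \eqref{ELeqs2} has already been derived, but it leans on that prior computation. Both arguments are sound, and your identification of the vanishing of $\bm{\psi}^*(\inn({\rm J})\d\L)$ as the only delicate point is accurate.
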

\begin{proof}
It is immediate since \eqref{ELeqs2} is the coordinate expression of \eqref{SLag}.
\end{proof}

\begin{remark}\rm
As in the case of premultisymplectic field theories,
when $L$ is not regular and $({\cal P},\Theta_\L,\omega)$
is a premulticontact system, 
the field equations \eqref{ELmcontact},
\eqref{ELmcontact2}, \eqref{fieldLcontact}, or \eqref{EcL} have no
solutions everywhere on ${\cal P}$, in general. 
In the most favourable situations,
these equations have solutions on a submanifold of ${\cal P}$ which is obtained by applying a suitable constraint algorithm.
Nevertheless, solutions to equations \eqref{fieldLcontact} or \eqref{EcL}
are not necessarily {\sc sopde}s and,
as a consequence, if they are integrable, 
their integral sections are not necessarily holonomic;
so this requirement must be imposed as an additional condition.
Hence, the final objective consists in finding the maximal submanifold 
${\cal S}_f$ of ${\cal P}$ where there are 
holonomic distributions whose associated 
Lagrangian multivector fields ${\bf X}_\L$
and connections $\nabla_\L$ are
{\sc sopde} solutions to the premulticontact Lagrangian field equations on ${\cal S}_f$ and are tangent to ${\cal S}_f$.
\end{remark}

\section{Multicontact Hamiltonian formalism}
\label{mhf}

\subsection{The (hyper)regular case}

Consider a multicontact Lagrangian system $({\cal P},\Theta_\L,\omega)$,
where $L$ is a hyperregular Lagrangian
(the case in which $L$ is regular is the same but changing
${\cal P}$ and ${\cal P}^*$ by the corresponding open sets).
Like in the Lagrangian formalism, we have the diagram
$$
\xymatrix{
&\ &  \  &{\cal P}^*=J^{1*}\pi\times_M\Lambda^{m-1}(\Tan^*M)  \ar[rrd]_{\widetilde\tau_1}\ar[lld]^{\varrho}\ar[ddd]_{\widetilde\tau}\ &  \   &
\\
&J^{1*}\pi \ar[ddrr]^{\bar{\kappa}^1}\ar[d]^{\kappa^1}\ & \ & \ & \ &\ar[ddll]_{\tau_o}\Lambda^{m-1}(\Tan^*M) 
\\
&E\ar[drr]^{\pi}\ & \ & \ & \ &  
 \\
&\ & \ &M \ & \ & 
}
$$

Since ${\cal F}L$ and $FL_s$ are global diffeomorphisms (for every $s$),
we have that $FL_s(J^1\pi)=J^{1*}\pi$
and hence ${\cal F}L({\cal P})={\cal P}^*$.
Furthermore, the canonical action form $\overline S\in\df^m({\cal P})$
is ${\cal F}L$-projectable to an $m$-form in ${\cal P}^*$
which has the same coordinate expression and is denoted
with the same notation $\overline S\in\df^m({\cal P}^*)$.
The same goes for the multicontact form $\Theta_\L\in\df^m({\cal P})$:
there exists an $m$-form $\Theta_{\cal H}\in\df^m({\cal P}^*)$
such that $\Theta_\L={\cal F}L^*\Theta_{\cal H}$,
whose local expression is
\beq
\label{thetaHcoor}
\Theta_{\cal H}=
-p_i^\mu\d y^i\wedge\d^{m-1}x_\mu+H\,\d^m x+\d s^\mu\wedge \d^{m-1}x_\mu \,,
\eeq
where $H\in\Cinfty({\cal P}^*)$ is the \textbf{Hamiltonian function}
defined by $E_\L={\cal F}L^*{\cal H}$ whose local expression is
$H=p^\mu_i({\cal F}L^{-1})^*y_\mu^i-({\cal F}L^{-1})^*L$.

\begin{proposition}
The form $\Theta_{\cal H}$ is a multicontact form
and hence $({\cal P}^*,\Theta_{\cal H},\omega)$
is a multicontact bundle.
\end{proposition}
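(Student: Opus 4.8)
The plan is to deduce the statement from the regular case already settled in Proposition \ref{Prop-regLag}, by transporting the multicontact structure of $({\cal P},\Theta_\L,\omega)$ to ${\cal P}^*$ along the Legendre diffeomorphism ${\cal F}L$. The key observation is that every ingredient appearing in Definition \ref{multicontactbundle} —the distribution $\ker\omega$, the Reeb distribution ${\cal D}^{\mathfrak{R}}$ (Definition \ref{def:reeb_dist}), the semibasic modules ${\cal A}^k(\ker\omega)$, and the intersection $\ker\omega\cap\ker\Theta\cap\ker\d\Theta$— is built functorially out of the pair of forms $(\Theta,\omega)$. Hence any diffeomorphism carrying one pair of forms to the other carries the whole apparatus of one structure to that of the other, so that the three defining conditions transfer verbatim.

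First I would verify that ${\cal F}L$ identifies the two reference forms. Since ${\cal F}L=(FL_s,\Id_{\Lambda^{m-1}(\Tan^*M)})$ covers the identity on $M$, i.e. $\widetilde\tau\circ{\cal F}L=\tau$, and since $\omega=\tau^*\omega_{_M}$ on ${\cal P}$ while $\omega=\widetilde\tau^*\omega_{_M}$ on ${\cal P}^*$, we get ${\cal F}L^*\omega={\cal F}L^*\widetilde\tau^*\omega_{_M}=\tau^*\omega_{_M}=\omega$. Together with the identity $\Theta_\L={\cal F}L^*\Theta_{\cal H}$ obtained above, and with $\d$ commuting with pullback so that $\d\Theta_\L={\cal F}L^*\d\Theta_{\cal H}$, this shows that ${\cal F}L$ identifies the pair $(\Theta_{\cal H},\omega)$ on ${\cal P}^*$ with the pair $(\Theta_\L,\omega)$ on ${\cal P}$. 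When $L$ is only regular, ${\cal F}L$ is a local diffeomorphism and the same argument runs on the corresponding open sets; as the conditions of Definition \ref{multicontactbundle} are pointwise (ranks and fiberwise surjectivity), locality is harmless.

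Next I would spell out the transfer of the three conditions. Because $\Tan{\cal F}L$ is a fiberwise isomorphism, it maps $(\ker\omega)|_{\rm p}$ onto $(\ker\omega)|_{{\cal F}L({\rm p})}$ and, using ${\cal F}L^*\Theta_{\cal H}=\Theta_\L$ and ${\cal F}L^*\d\Theta_{\cal H}=\d\Theta_\L$, it sends $\ker\Theta_\L$ and $\ker\d\Theta_\L$ to the corresponding kernels of $\Theta_{\cal H}$ and $\d\Theta_{\cal H}$, while pullback matches ${\cal A}^k(\ker\omega)$ on the two manifolds; consequently the Reeb distributions correspond under $\Tan{\cal F}L$. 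Therefore $\rk{\cal D}^{\mathfrak{R}}=m$ on ${\cal P}^*$ (condition 1), the characteristic intersection $\ker\omega\cap\ker\Theta_{\cal H}\cap\ker\d\Theta_{\cal H}$ is mapped to the trivial characteristic distribution of $\Theta_\L$ and hence vanishes (condition 2), and the equality ${\cal A}^{m-1}(\ker\omega)=\{\inn(R)\Theta_{\cal H}\mid R\in\mathfrak{R}\}$ follows from the same equality for $\Theta_\L$ (condition 3). By Proposition \ref{Prop-regLag} these hold for $(\Theta_\L,\omega)$, so they hold for $(\Theta_{\cal H},\omega)$, and $\Theta_{\cal H}$ is a multicontact form.

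The only genuinely delicate point is the naturality claim of the first paragraph, namely that the Reeb distribution —being defined through $\ker\omega$ and $\inn(\,\cdot\,)\d\Theta$— is diffeomorphism-invariant; once this is granted the rest is bookkeeping. As an independent check one may work directly from the coordinate expression \eqref{thetaHcoor}: computing $\d\Theta_{\cal H}$, solving $\inn(R_\mu)\Theta_{\cal H}=\d^{m-1}x_\mu$ for a basis $\{R_\mu\}$ of $\mathfrak{R}$, and confirming that $\ker\omega\cap\ker\Theta_{\cal H}\cap\ker\d\Theta_{\cal H}=\{0\}$. This reproduces conditions (1)--(3), with the regularity of the Hessian $(\partial^2 L/\partial y^i_\mu\partial y^j_\nu)$ entering exactly as in Proposition \ref{Prop-regLag}.
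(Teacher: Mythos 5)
Your proposal is correct and follows essentially the same route as the paper: the paper's (very terse) proof invokes exactly the two arguments you give, namely that the conditions of Definition \ref{multicontactbundle} can be read off the coordinate expression \eqref{thetaHcoor}, and that they transfer through the diffeomorphism ${\cal F}L$ since $\Theta_\L={\cal F}L^*\Theta_{\cal H}$, ${\cal F}L^*\omega=\omega$, and $\Theta_\L$ is a multicontact form. You have merely supplied the naturality details (correspondence of $\ker\omega$, the Reeb distributions, the characteristic distributions, and ${\cal A}^{m-1}(\ker\omega)$ under $\Tan{\cal F}L$) that the paper leaves as ``straightforward''.
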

\begin{proof}
From the coordinate expression \eqref{thetaHcoor}
it is immediate to check that the conditions in Definition \ref{multicontactbundle} hold in this case.
It is also a straightforward consequence of the fact that ${\cal F}L$ is a diffeomorphism,
that $\Theta_\L={\cal F}L^*\Theta_{\cal H}$
and that $\Theta_\L$ is a multicontact form.
\end{proof}

\begin{remark}
\label{rem0}
{\rm
The multicontact form $\Theta_{\cal H}$ in ${\cal P}^*$
can also be obtained in the following alternative way,
which is based on using the multisymplectic formalism for hyperregular Hamiltonian field theories
(see Section \ref{nmft}):
${\cal M}\pi$ is endowed with the canonical  multimomentum Liouville $m$-form 
whose local expression is given in \eqref{Liform}.
We can construct the map  
${\bf h}:=\widetilde{{\cal F}L}_s\circ{\cal F}L_s^{-1}$
(which is a section of the projection $\mathfrak{p}\colon{\cal M}\pi\to J^{1*}\pi$
(see the Diagram \eqref{1stdiag}), and
then we define the Hamilton--Cartan $m$-form
$\Theta_{\bf h}:={\bf h}^*\widetilde\Theta\in\df^m(J^{1*}\pi)$,
whose local expression is given in \eqref{HCform}.
Finally, the multicontact form $\Theta_{\cal H}\in\df^m({\cal P}^*)$
is
$$
\Theta_{\cal H}=-\varrho^*\Theta_{\bf h}+\d\overline S \,,
$$
where we have also denoted $\overline S=\widetilde\tau_1^*\theta$.
This construction can also be done independently of having a starting Lagrangian system, 
simply giving any section of the projection $\mathfrak{p}\colon{\cal M}\pi\to J^{1*}\pi$, called a {\bf Hamiltonian section}. 
}
\end{remark}

\begin{definition}
The triple $({\cal P}^*,\Theta_{\cal H},\omega)$ is called a \textbf{multicontact Hamiltonian system}.
\end{definition}

Observe that the natural coordinates in ${\cal P}^*$ are adapted coordinates for the
multicontact structure $(\Theta_{\cal H},\omega)$.
In particular,
the Reeb vector fields $(R_{\cal H})_\mu\in\mathfrak{R}_{\cal H}$ for this multicontact structure are
$\displaystyle (R_{\cal H})_\mu=\derpar{}{s^\mu}$ and hence, bearing in mind Proposition \ref{sigma} and equation \eqref{thetaHcoor},
we obtain that
\beq
\label{sigmaH}
\sigma_{\cal H}=\derpar{H}{s^\mu}\,\d x^\mu\,.
\eeq

For this multicontact Hamiltonian system, the multicontact field equations
come straightforwardly from Definition \ref{mconteq}, Proposition \ref{sectequiv}, and Theorems \ref{equivteor1} and \ref{equivteor2}.

\begin{definition}
\label{mconteqH}
Let $({\cal P}^*,\Theta_{\cal H},\omega)$ be a multicontact Hamiltonian system.
\begin{enumerate}[{\rm(1)}]
\item  
The \textbf{multicontact Hamilton--de Donder--Weyl equations for sections}
$\bm{\psi}\colon M\to{\cal P}^*$ are
\begin{equation}
\label{sect2H}
\inn(\bm{\psi}^{(m)})(\Theta_{\cal H}\circ\bm{\psi})=0 \,,\qquad
\inn(\bm{\psi}^{(m)})(\bd\Theta_{\cal H}\circ\bm{\psi}) = 0\,.
\end{equation}
or, equivalently,
\begin{equation}
\label{sect1H}
\bm{\psi}^*\Theta_{\cal H}= 0  \,,\qquad
\bm{\psi}^*\inn(Y)\bd\Theta_{\cal H}= 0 \,, \qquad \text{\rm for every }\ Y\in\vf({\cal P}^*) \,.
\end{equation}
\item 
The \textbf{multicontact Hamilton--de Donder--Weyl equations for $\widetilde\tau$-transverse, locally decomposable multivector fields} ${\bf X}_{\cal H}\in\vf^m({\cal P}^*)$ are
\begin{equation}
\label{vfH}
\inn({\mathbf{X}_{\cal H}})\Theta_{\cal H}=0 \,,\qquad \inn({\bfX}_{\cal H})\bd\Theta_{\cal H}=0 \,;
\end{equation}
where the condition of $\widetilde\tau$-transversality is imposed 
by requiring that $\inn({\bf X}_{\cal H})\omega=1$.
\item
The \textbf{multicontact Hamilton--de Donder--Weyl equations for Ehresmann connections}
$\nabla_{\cal H}$ on ${\cal P}^*\to M$ are
\begin{equation}
\label{EcH}
\inn(\nabla_{\cal H})\Theta_{\cal H}=(m-1)\Theta_{\cal H} \,,\qquad
\inn(\nabla_{\cal H})\bd\Theta_{\cal H}=(m-1)\bd\Theta_{\cal H}  \,.
\end{equation}
\end{enumerate}
\end{definition}

All these equations are compatible in ${\cal P}^*$.

In natural coordinates, bearing in mind Definition \ref{bard} and the local expressions \eqref{thetaHcoor} and \eqref{sigmaH},
we have
$$
\overline \d\Theta_{\cal H}=
\d(-p_i^\mu\d y^i\wedge\d^{m-1}x_\mu+H\,\d^m x)+
\Big(\derpar{H}{s^\mu}\,p_i^\mu\,\d y^i-\derpar{H}{s^\mu}\,\d s^\mu\Big)\wedge\d^mx \,.
$$
Then, if
$$
{\bf X}_{\cal H}= 
\bigwedge^\mu\Big(\derpar{}{x^\mu}+ (X_\mu)^i\frac{\partial}{\partial y^i}+
(X_\mu)_i^\nu\frac{\partial}{\partial p_i^\nu}+(X_\mu)^\nu\derpar{}{s^\nu}\Big)
$$
is a multivector field solution to \eqref{vfH}, and 
$$
\nabla_{\cal H}=\d x^\mu\otimes\Big(\derpar{}{x^\mu}+ (X_\mu)^i\frac{\partial}{\partial y^i}+
(X_\mu)_i^\nu\frac{\partial}{\partial p_i^\nu}+(X_\mu)^\nu\derpar{}{s^\nu}\Big)
$$ 
is the Ehresmann connection in ${\cal P}^*$ associated with the class $\{{\bf X}_{\cal H}\}$ 
and it is solution to \eqref{EcH},
these field equations lead to
$$
(X_\mu)^\mu = 
p_i^\mu\,\frac{\partial H}{\partial p^\mu_i}-H \,,\qquad
(X_\mu)^i=\frac{\partial H}{\partial p^\mu_i} \,,\qquad
(X_\mu)^\mu_i= 
-\left(\frac{\partial H}{\partial y^i}+ p_i^\mu\,\frac{\partial H}{\partial s^\mu}\right) \,,
$$
together with a last group of equations which are identities when the above ones are taken into account.
If $\bm{\psi}(x)=(x^\mu,y^i(x),p^\mu_i(x),s^\mu(x))$
is an integral section of ${\bf X}_{\cal H}$ and $\nabla_{\cal H}$, 
then it is a solution to the equations
\eqref{sect2H} or \eqref{sect1H}, which read
$$
\frac{\partial s^\mu}{\partial x^\mu} = \left(p_i^\mu\,\frac{\partial H}{\partial p^\mu_i}-H\right)\circ\bm{\psi}\,,\qquad 
\frac{\partial y^i}{\partial x^\mu}= \frac{\partial H}{\partial p^\mu_i}\circ\bm{\psi} \,,\qquad
\frac{\partial p^\mu_i}{\partial x^\mu} = 
-\left(\frac{\partial H}{\partial y^i}+ p_i^\mu\,\frac{\partial H}{\partial s^\mu}\right)\circ\bm{\psi} \,.
$$
As in the Lagrangian case,
these equations match those obtained in the 
$k$-contact and $k$-cocontact formulations of non-conservative field theories \cite{GGMRR-2019,GRR-2022,Ri-2022}.

\begin{remark}
{\rm
For multicontact Lagrangian systems $({\cal P},\Theta_\L,\omega)$
and their
associated multicontact Hamiltonian systems
$({\cal P}^*,\Theta_{\cal H},\omega)$,
since ${\cal F}L$ is a diffeomorphism,
the solutions to the Lagrangian field equations are in one-to-one correspondence
to those of the Hamilton--de Donder--Weyl field equations.
} \end{remark}

\subsection{The singular case}

For singular Lagrangians, the existence of
an associated Hamiltonian formalism is not assured, in general,
unless some minimal regularity conditions are assumed.
Then, as it is usual for singular Lagrangian field theories, we introduce the notion of almost-regular Lagrangian.

\begin{definition}
A singular Lagrangian $L\in\Cinfty({\cal P})$ is \textbf{almost-regular} if
\begin{enumerate}[{\rm (1)}]
\item
${\cal P}_0^*:= {\cal F}L({\cal P})$
is a submanifold of ${\cal P}^*$,
\item
${\cal F}L$ is a submersion onto its image,
\item
the fibers ${\cal F}L^{-1}({\rm p})$ are connected submanifolds of ${\cal P}$, for every ${\rm p}\in {\cal P}_0^*$.
\end{enumerate}
\end{definition}

In order to construct a (pre)multicontact structure on ${\cal P}^*_0$
we follow a procedure similar to the one described in Remark \ref{rem0}.
Like in the regular case, we have that
$L\in\Cinfty({\cal P})$ is an almost-regular Lagrangian on ${\cal P}$
if, and only if, 
$L_s\in\Cinfty(J^1\pi)$ is an almost-regular Lagrangian on $J^1 \pi$.
Therefore, following the patterns of the multisymplectic formalism for almost-regular Hamiltonian field theories
(see Section \ref{nmft})
we have the submanifolds $P_0:=FL_s(J^1\pi)$
and $\widetilde P_0:=\widetilde{FL_s}(J^1\pi)$
(see the Diagram \eqref{lastdiag}).
Then, taking $\widetilde{\bf h}:=\widetilde{\mathfrak{p}}^{-1}$,
we define the Hamilton--Cartan $m$-form
$\Theta^0_{\bf h}=(\widetilde\jmath_0\circ\widetilde{\bf h})^*\widetilde\Theta\in\df^m(P_0)$,
and, taking the Hamiltonian
function $H_0\in\Cinfty(P_0)$
such that
$E_\mathcal{L}=FL_{s0}^{\ *}\,H_0$,
the local expression of this form coincides with the one given in \eqref{Thetacero}.
Now, as a consequence of the definition of the
Legendre map ${\cal F}\L$ (Definition \ref{Legmap}),
we have that ${\cal P}_0^*=P_0\times\Lambda^{m-1}(\Tan^*M)$,
and so the diagram depicting the Hamiltonian formalism is
$$
\xymatrix{
&\ &  \  &{\cal P}_0^*=P_0\times_M\Lambda^{m-1}(\Tan^*M)  \ar[rrd]_{\upsilon}\ar[lld]^{\varrho_0}\ar[ddd]_{\widetilde\tau_0}\ &  \   &
\\
&P_0 \ar[ddrr]^{\bar{\kappa}_0^1}\ar[d]^{\kappa_0^1}\ & \ & \ & \ &\ar[ddll]_{\tau_o}\Lambda^{m-1}(\Tan^*M) 
\\
&E\ar[drr]^{\pi}\ & \ & \ & \ &  
 \\
&\ & \ &M \ & \ & 
}
$$
Finally, we construct the form
$$
\Theta^0_{\cal H}=-\varrho_0^{\,*}\Theta_{\bf h}^0+\d\overline S \in\df^m({\cal P}_0^*)\,,
$$
whose local expression is
$$
\Theta_{\cal H}^0=
\jmath_0^{\,*}(-p_i^\nu\d y^i\wedge\d^{m-1}x_\mu)+H_0\,\d^mx+\d\overline S \,.
$$
Note that, since $\Theta_{\L_s}=FL_{s0}^{\ *}\,\Theta_{\bf h}^0$, we have that $\Theta_\L=\mathcal{F}L_0^{\ *}\,\Theta_{\cal H}^0$.

\begin{proposition}
The pair $(\Theta_{\cal H}^0,\omega)$ is a premulticontact 
structure if, and only if,
$(\Theta_\L,\omega)$ is a premulticontact structure.
\end{proposition}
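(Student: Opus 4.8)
The plan is to run everything through the one structural identity at hand: by construction (and as noted just before the statement) $\Theta_\L={\cal F}L_0^{\,*}\,\Theta_{\cal H}^0$, while $\omega={\cal F}L_0^{\,*}\omega$ because both forms are pullbacks of $\omega_{_M}$ and ${\cal F}L_0$ intertwines the projections onto $M$; by almost-regularity ${\cal F}L_0\colon{\cal P}\to{\cal P}_0^*$ is a surjective submersion. I would fix ${\rm p}\in{\cal P}$, set ${\rm q}={\cal F}L_0({\rm p})$ and $\varphi=\Tan_{\rm p}{\cal F}L_0\colon\Tan_{\rm p}{\cal P}\to\Tan_{\rm q}{\cal P}_0^*$, which is surjective, and reduce the whole statement to pointwise linear algebra. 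The two facts I would use repeatedly are the contraction--pullback identity $\inn(v)\varphi^*\beta=\varphi^*\inn(\varphi v)\beta$ and the fact that, $\varphi$ being surjective, the pullback $\varphi^*$ on exterior forms is injective. Note also $\ker\varphi\subset\ker\omega$, since $\inn(w)\omega=\varphi^*\inn(\varphi w)\omega_{\rm q}=0$ for $w\in\ker\varphi$.

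From these I would show that each distribution entering Definition \ref{multicontactbundle} on ${\cal P}$ is the $\varphi$-preimage of its counterpart on ${\cal P}_0^*$. Indeed $\inn(v)\Theta_\L=\varphi^*\inn(\varphi v)\Theta_{\cal H}^0$ together with injectivity of $\varphi^*$ gives $\ker\Theta_\L=\varphi^{-1}(\ker\Theta_{\cal H}^0)$, and the same computation with $\d\Theta_\L=\varphi^*\d\Theta_{\cal H}^0$ and with $\omega$ yields $\ker\d\Theta_\L=\varphi^{-1}(\ker\d\Theta_{\cal H}^0)$ and $\ker\omega=\varphi^{-1}(\ker\omega)$. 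For the Reeb distribution I would use that ${\cal A}^m(\ker\omega)=\langle\omega\rangle$ on both manifolds and that $\varphi^*\omega_{\rm q}=\omega_{\rm p}$: since $\inn(v)\d\Theta_\L=\varphi^*(\inn(\varphi v)\d\Theta_{\cal H}^0)$, this $m$-form lies in $\langle\omega_{\rm p}\rangle$ iff $\inn(\varphi v)\d\Theta_{\cal H}^0\in\langle\omega_{\rm q}\rangle$, again by injectivity of $\varphi^*$; hence ${\cal D}^{\mathfrak{R}}_\L=\varphi^{-1}({\cal D}^{\mathfrak{R}}_{\cal H})$ and likewise ${\cal C}_\L=\varphi^{-1}({\cal C}_{\cal H})$. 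Because $\ker\varphi$ is contained in all these preimages and has constant dimension $d=\dim{\cal P}-\dim{\cal P}_0^*$, the ranks obey $\rk{\cal D}^{\mathfrak{R}}_\L=\rk{\cal D}^{\mathfrak{R}}_{\cal H}+d$ and $\rk{\cal C}_\L=\rk{\cal C}_{\cal H}+d$ (and $\rk\ker\Theta_\L=\rk\ker\Theta_{\cal H}^0+d$, so constancy of rank of the forms transfers as well). Thus conditions (\ref{prerankReebbundle}) and (\ref{prerankcarbundle}) of Definition \ref{multicontactbundle} hold on ${\cal P}$, with parameter $k_\L=k_{\cal H}+d$, exactly when they hold on ${\cal P}_0^*$; closedness of $\omega$ is immediate as $\omega=\tau^*\omega_{_M}$ on both bundles.

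The main obstacle is the surjectivity condition (\ref{preReebCompbundle}), that $R\mapsto\inn(R)\Theta$ maps onto ${\cal A}^{m-1}(\ker\omega)$. Here I would first note that, since ${\cal F}L_0$ preserves the coordinates $x^\mu$, one has $\varphi^*\d x^\mu=\d x^\mu$ and hence $\varphi^*\d^{m-1}x_\mu=\d^{m-1}x_\mu$, so $\varphi^*$ restricts to a linear isomorphism ${\cal A}^{m-1}_{\rm q}(\ker\omega)\to{\cal A}^{m-1}_{\rm p}(\ker\omega)$, both spaces being $m$-dimensional and spanned by the $\d^{m-1}x_\mu$. The pointwise contraction maps $\widetilde\Theta^\L_{\rm p}\colon{\cal D}^{\mathfrak{R}}_{\L,{\rm p}}\to{\cal A}^{m-1}_{\rm p}(\ker\omega)$ and $\widetilde\Theta^{\cal H}_{\rm q}\colon{\cal D}^{\mathfrak{R}}_{{\cal H},{\rm q}}\to{\cal A}^{m-1}_{\rm q}(\ker\omega)$ then fit into the commuting relation $\widetilde\Theta^\L_{\rm p}=\varphi^*\circ\widetilde\Theta^{\cal H}_{\rm q}\circ\varphi$, which is once more just $\inn(v)\Theta_\L=\varphi^*\inn(\varphi v)\Theta_{\cal H}^0$. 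Since $\varphi$ maps ${\cal D}^{\mathfrak{R}}_{\L,{\rm p}}$ onto ${\cal D}^{\mathfrak{R}}_{{\cal H},{\rm q}}$ (surjectivity of $\varphi$ plus ${\cal D}^{\mathfrak{R}}_\L=\varphi^{-1}({\cal D}^{\mathfrak{R}}_{\cal H})$) and $\varphi^*$ is an isomorphism on the target spaces, I would conclude $\Ima\widetilde\Theta^\L_{\rm p}=\varphi^*(\Ima\widetilde\Theta^{\cal H}_{\rm q})$, so $\widetilde\Theta^\L_{\rm p}$ is onto ${\cal A}^{m-1}_{\rm p}(\ker\omega)$ precisely when $\widetilde\Theta^{\cal H}_{\rm q}$ is onto ${\cal A}^{m-1}_{\rm q}(\ker\omega)$. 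This gives condition (\ref{preReebCompbundle}) in both directions and, combined with the previous paragraph, yields the claimed equivalence.
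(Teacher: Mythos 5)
Your proposal is correct and rests on the same pillars as the paper's own proof: the pullback relations $\Theta_\L={\cal F}L_0^{\,*}\,\Theta_{\cal H}^0$ and $\omega={\cal F}L_0^{\,*}\,\omega$, the fact that ${\cal F}L_0$ is a surjective submersion, and the injectivity of the induced pullback, which together yield the kernel correspondences and the rank relations $\rk{\cal C}_\L=\rk{\cal C}_{\cal H}+\rk(\ker{\cal F}L_{0*})$ and $\rk{\cal D}^{\mathfrak{R}_\L}=\rk{\cal D}^{\mathfrak{R}_{\cal H}}+\rk(\ker{\cal F}L_{0*})$. The differences are in execution and in completeness. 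The paper argues with vector fields, choosing for each $Z$ on ${\cal P}_0^*$ a lift $Y$ with ${\cal F}L_{0*}Y=Z$, whereas you work pointwise with the tangent map $\varphi=\Tan_{\rm p}{\cal F}L_0$ and reduce everything to linear algebra; this is a minor stylistic gain (no need to discuss existence of global or smooth lifts), but the underlying identity $\inn(v)\varphi^*\beta=\varphi^*\inn(\varphi v)\beta$ plus injectivity of $\varphi^*$ is exactly what the paper uses. More substantially, you explicitly verify condition (3) of Definition \ref{multicontactbundle} --- that $R\mapsto\inn(R)\Theta$ maps the Reeb distribution onto ${\cal A}^{m-1}(\ker\omega)$ --- by observing that $\varphi^*$ restricts to an isomorphism between the $m$-dimensional spaces ${\cal A}^{m-1}_{\rm q}(\ker\omega)$ and ${\cal A}^{m-1}_{\rm p}(\ker\omega)$ (both spanned by the $\d^{m-1}x_\mu$), and that $\varphi$ maps ${\cal D}^{\mathfrak{R}}_\L$ onto ${\cal D}^{\mathfrak{R}}_{\cal H}$, so the images of the two contraction maps correspond under $\varphi^*$. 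The paper's proof only establishes the correspondence of the characteristic and Reeb distributions together with their ranks, i.e.\ conditions (1) and (2), and leaves this third, surjectivity-type condition implicit; your treatment is therefore the more complete of the two, with no gaps.
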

\begin{proof}
Since ${\cal F}L_0$ is a submersion, for every $Z\in\vf({\cal P}_0^*)$ there exist
$Y\in\vf(J^1\pi)$ such that ${\cal F}L_{0*}Y=Z$.
Therefore, as $\Theta_\L={\cal F}L_0^{\ *}\,\Theta_{\cal H}^0$,
taking into account that ${\cal F}L_0$ is a submersion, we have that 
$$
0=\inn(Y)\Theta_\L=\inn(Y)({\cal F}L_0^{\ *}\,\Theta_{\cal H}^0)={\cal F}L_0^{\ *}[\inn(Z)\Theta_{\cal H}^0]
\ \Longleftrightarrow \ \inn(Z)\Theta_{\cal H}^0=0 \,,
$$
and the same holds for $\d\Theta_\L={\cal F}L_0^{\ *}\,\d\Theta_{\cal H}^0$ and
$\omega={\cal F}L_0^{\ *}\,\omega$.
Hence
$Z\in\Gamma(\ker\omega\cap\ker\Theta_{\cal H}^0\cap\ker\d\Theta_{\cal H}^0)\equiv\Gamma({\cal C}_{\cal H})$ 
if, and only if, $Y\in\Gamma(\ker\omega\cap\ker\Theta_\L\cap\ker\d\Theta_\L)\equiv\Gamma({\cal C}_L)$.
Note that 
$\rk{\cal C}_\L=\rk{\cal C}_{\cal H}+\rk(\ker\,{\cal F}\L_{0*})$

In the same way we can prove that, if $R_{\cal H}\in\vf({\cal P}_0^*)$ and 
$R_\L\in\vf(J^1\pi)$ is such that ${\cal F}L_{0*}R_\L=R_{\cal H}$, then
$R_{\cal H}\in\mathfrak{R}_{\cal H}$ is a Reeb vector field for $(\Theta_{\cal H}^0,\omega)$ if, and only if,
$R_\L\in\mathfrak{R}_\L$ is a Reeb vector field for $(\Theta_\L,\omega)$.
As a consequence  
$\rk{\cal D}^{\mathfrak{R}_\L}=\rk{\cal D}^{\mathfrak{R}_{\cal H}}+\rk(\ker\,{\cal F}\L_{0*})$.
\end{proof}

\begin{definition}
The triple $({\cal P}_0^*,\Theta_{\cal H}^0,\omega)$ 
is the \textbf{premulticontact Hamiltonian system}
associated to the premulticontact Lagrangian system $({\cal P},\Theta_\L,\omega)$.
\end{definition}

\begin{remark}
{\rm
In the premulticontact case,
for the premulticontact Hamiltonian system $({\cal P}_0^*,\Theta_{\cal H}^0,\omega)$, 
the field equations for sections $\psi\colon M\to{\cal P}_0^*$, 
multivector fields ${\bf X}_{{\cal H}_0}\in\vf^m({\cal P}_0^*)$,
and Ehresmann connections $\nabla_{{\cal H}_0}$ on ${\cal P}_0^*$
are like \eqref{sect2H}, \eqref{sect1H}, \eqref{vfH}, and \eqref{EcH},
with $\Theta_{\cal H}^0$ instead of $\Theta_{\cal H}$.
In general, these equations are not compatible on ${\cal P}_0^*$ 
and the usual constraint algorithm must be implemented in order to find the final constraint submanifold ${\cal P}_f^*\hookrightarrow{\cal P}_0^*$
(if it exists) where there are integrable
distributions associated with the solutions to the field equations, which are tangent to ${\cal P}_f^*$.
}\end{remark}

\section{Variational formulation}
\label{vaf}

There have been several attempts to generalize the Herglotz variational principle to field theories \cite{Geor-2003}. 
In \cite{LPAF2018} a variational principle is presented for Lagrangians with closed action dependence. 
Later, in \cite{GasMas2022}, a principle based on Lagrange multipliers has been used to derive equations of Gravity with dissipation. 
The problem has been studied in more detail in \cite{GLMR-2022}. 
There, it is shown how the variational principle based on Lagrange multipliers requires that the Lagrangian has closed action dependence, namely the dissipation form is closed. 
This condition is also required to develop higher-order Lagrangian field theories. 
The variational principle for the general case is more involved. In this section, we will see how  the multicontact formalism is related to the Herglotz variational principle for fields when the dissipation form is closed (for further details on the Herglotz variational principle for fields theories see \cite{GLMR-2022}).

Consider a (pre)multicontact Lagrangian system $({\cal P},\Theta_\Lag,\omega)$.
Let $\Gamma_{\mathrm{hol}}(\tau)$ be the set of holonomic sections $\bm{\psi}\colon M\to {\cal P}$. We define the following action functional
\begin{equation}
\begin{array}{rcl}
{\mathbb A}_{\cal P} \colon \Gamma_{\mathrm{hol}}(\tau) & \longrightarrow & \Real \\
\bm{\psi} & \longmapsto & \displaystyle \int_M \bm{\psi}^*\d\overline{S}\,,
\end{array}
\label{Lagact}
\end{equation}
where the convergence of the integral is assumed.

\begin{definition}
[Herglotz Principle for Fields: Lagrangian case]
\label{dfn:HerglotzLag}
The {\rm Herglotz variational problem} for the (pre)multicontact Lagrangian system $({\cal P},\Theta_\Lag,\omega)$
consists in finding sections $\bm{\psi}\in\Gamma_{\mathrm{hol}}(\tau)$
satisfying that $\bm{\psi}^*\Theta_\Lag=0$ 
and that are critical sections of the functional ${\mathbb A}_{\cal P}$ with respect to the variations
of $\bm{\psi}$ given by $\bm{\psi}_s = \sigma_s \circ\bm{\psi}$,
where $\left\{ \sigma_s \right\}$ is a local one-parameter group of
any compact-supported $\tau$-vertical vector field $Z$ in ${\cal P}$; that is, it verifies that
\begin{equation*}
\bm{\psi}^*\Theta_{\cal L}=0 \,, \qquad
\left.\frac{\d}{\d s}\right|_{s=0}\int_M \bm{\psi}_s^*\d\overline{S} = 0 \,.
\end{equation*}
\end{definition}

\begin{proposition}
\label{prop:variational}
Let $({\cal P},\Theta_\Lag,\omega)$ be a closed (pre)multicontact Lagrangian system, namely $\sigma_{\Theta_\Lag}$ is closed. 
A holonomic section $\bm{\psi}\colon M\to {\cal P}$ is a solution to the (pre)multicontact Lagrangian equations
\eqref{ELmcontact} or \eqref{ELmcontact2} if, and only if, it is a solution to the Herglotz Principle \ref{dfn:HerglotzLag}.
\end{proposition}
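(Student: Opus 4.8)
The plan is to establish both implications at once by isolating the single non-trivial equivalence. Both the Lagrangian field equations in the form \eqref{ELmcontact2} and the Herglotz problem of Definition \ref{dfn:HerglotzLag} impose, word for word, the first condition $\bm{\psi}^*\Theta_\Lag=0$; so I would fix a holonomic section $\bm{\psi}$ satisfying it and reduce the statement to showing that the criticality condition $\frac{\d}{\d s}\big|_{s=0}\int_M\bm{\psi}_s^*\d\overline S=0$ (over all admissible variations) is equivalent to the second condition $\bm{\psi}^*\inn(Y)\bd\Theta_\Lag=0$ for every $Y\in\vf({\cal P})$. Since \eqref{ELmcontact} and \eqref{ELmcontact2} are equivalent (Proposition \ref{sectequiv} transcribed to the Lagrangian setting), proving this equivalence finishes both directions. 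A useful preliminary observation, namely equation \eqref{SLag}, is that on the constraint surface one has $\bm{\psi}^*\d\overline S=\L\circ\bm{\psi}$, so that $\mathbb A_{\cal P}$ restricted there is the genuine Herglotz action $\int_M(L\circ\bm{\psi})\,\omega$. This also dictates that the admissible variations must be tangent to the constraint surface $\{\bm{\psi}^*\Theta_\Lag=0\}$, i.e. the contact component $\delta s^\mu$ of the vertical generator $Z$ is slaved to the free field component $\delta y^i$.

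For the computation I would work in natural coordinates, where $\Theta_\Lag$ is given by \eqref{thetacoor1}, and write the admissible variation as the prolongation of a compact-supported $\pi$-vertical field $\delta y^i$, with $\delta y^i_\mu=\partial_\mu\delta y^i$ by holonomy and with $\delta s^\mu$ determined by linearising the constraint $\partial_\mu s^\mu=L$, namely $\partial_\mu(\delta s^\mu)=\frac{\partial L}{\partial y^i}\delta y^i+\frac{\partial L}{\partial y^i_\mu}\partial_\mu\delta y^i+\frac{\partial L}{\partial s^\mu}\delta s^\mu$. The key device is the hypothesis that $(\Theta_\Lag,\omega)$ be closed, i.e. that $\sigma_{\Theta_\Lag}=-\frac{\partial L}{\partial s^\mu}\d x^\mu$ be closed: locally it admits a primitive $f$ with $\partial_\mu f=-\frac{\partial L}{\partial s^\mu}$, equivalently $\bd=e^{-f}\d(e^f\,\cdot\,)$ is a genuine differential. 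Multiplying the linearised constraint by $e^f$ collapses its left-hand side to the exact divergence $\partial_\mu(e^f\delta s^\mu)$; a single integration by parts (the $\delta y^i$-boundary term killed by compact support, while $\partial_\mu(e^f\,\partial L/\partial y^i_\mu)$ produces the dissipation term $-e^f\frac{\partial L}{\partial s^\mu}\frac{\partial L}{\partial y^i_\mu}$) then rewrites the first variation as $\int_M e^f\big[\frac{\partial L}{\partial y^i}+\frac{\partial L}{\partial s^\mu}\frac{\partial L}{\partial y^i_\mu}-\partial_\mu\frac{\partial L}{\partial y^i_\mu}\big]\delta y^i\,\d^m x$, up to the boundary term $\oint_{\partial M}e^f\delta s^\mu\,\d^{m-1}x_\mu$.

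Since $e^f>0$ and $\delta y^i$ ranges over all compact-supported fields, the fundamental lemma of the calculus of variations shows that criticality is equivalent to the vanishing of the bracket, which is exactly the Herglotz--Euler--Lagrange equation \eqref{ELeqs1}; together with the already-imposed constraint \eqref{ELeqs2} this is the coordinate form of \eqref{ELmcontact2}, giving both directions. I expect the main obstacle to lie in the correct treatment of the contact-variable variation: one must justify that the admissible variations are precisely those preserving the constraint to first order (otherwise, since $\d\overline S$ depends only on the $s^\mu$, the free variation of $\mathbb A_{\cal P}$ is an interior-supported total divergence, vanishes identically, and trivialises the problem), and one must control the boundary term $\oint_{\partial M}e^f\delta s^\mu\,\d^{m-1}x_\mu$ via the natural Herglotz boundary conditions. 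The cleanest way to make this rigorous and simultaneously global --- and to cover the premulticontact (singular) case, where the extra compatibility relations \eqref{2eq} also enter --- is to repackage the computation intrinsically through an identity of the form $\bm{\psi}^*\Lie_Z\d\overline S\equiv-\bm{\psi}^*\inn(Z)\bd\Theta_\Lag$ modulo $\bd$-exact terms, the closedness of $\sigma_{\Theta_\Lag}$ (so that $\bd^2=0$) being exactly what guarantees that those terms integrate to zero.
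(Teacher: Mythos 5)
Your integrating factor $e^{f}$ (with $\d f=\sigma_{\Theta_\Lag}$, cf.\ \eqref{sigmaL}) is exactly the Lagrange multiplier $\lambda$ of the paper's proof, so you have identified the right algebraic device; but the variational bookkeeping around it---which is the actual content of this proposition---does not go through as you set it up. The paper does \emph{not} vary inside the constraint set: following \cite{GLMR-2022,GasMas2022}, it recasts the Herglotz problem as criticality of the augmented functional $\int_M\bm{\psi}_s^*\left(\d\overline{S}+\lambda\,\Theta_\Lag\right)$ under \emph{free} compact-supported $\tau$-vertical variations, obtains $\bm{\psi}^*\left(-\d\lambda\wedge\inn(Z)\Theta_\Lag+\lambda\,\inn(Z)\d\Theta_\Lag\right)=0$ by Stokes' theorem, determines the multiplier by testing on Reeb vector fields (giving $\d\lambda=\lambda\,\sigma_{\Theta_\Lag}$, whose integrability is precisely where closedness enters), and substitutes back to get $\lambda\,\bm{\psi}^*\inn(Z)\bd\Theta_\Lag=0$. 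Your alternative---slaving $\delta s^\mu$ to $\delta y^i$ through the linearised constraint---breaks on the point you flag but never resolve. If the slaved $\delta s^\mu$ is required to be compactly supported, then $\delta\mathbb{A}_{\cal P}=\int_M\partial_\mu(\delta s^\mu)\,\d^mx=0$ holds \emph{identically} for every admissible variation, so every holonomic section with $\bm{\psi}^*\Theta_\Lag=0$ is ``critical'' and the implication (critical $\Rightarrow$ field equations) collapses: your identity $\oint_{\partial U}e^f\delta s^\mu\,\d^{m-1}x_\mu=\int_U e^f\left[\tfrac{\partial L}{\partial y^i}+\tfrac{\partial L}{\partial s^\mu}\tfrac{\partial L}{\partial y^i_\mu}-\partial_\mu\tfrac{\partial L}{\partial y^i_\mu}\right]\delta y^i\,\d^mx$ then encodes the Euler--Lagrange bracket in the \emph{existence} of admissible variations (a compactly supported solution of $\partial_\mu(e^f\delta s^\mu)=g$ exists only if $\int_M g\,\d^mx=0$), not in criticality, which is a different statement from the proposition. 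If instead $\delta s^\mu$ is allowed to be non-compact, the genuine first variation is the \emph{unweighted} boundary term $\oint\delta s^\mu\,\d^{m-1}x_\mu$, while your identity controls the $e^f$-weighted one; for $m\geq 2$ these genuinely differ (unlike in mechanics, where the relevant boundary is a single point and the positive factor $e^{f(T)}$ is harmless), and ``natural Herglotz boundary conditions'' is a placeholder, not an argument.

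The intrinsic repair you propose rests on a false lemma. Closedness of $\sigma_{\Theta_\Lag}$ gives $\bd^2=0$, but it does \emph{not} make integrals of $\bd$-exact forms vanish: for compactly supported $\alpha$, Stokes gives $\int_M\bd\alpha=\int_M\sigma_{\Theta_\Lag}\wedge\alpha$, which is nonzero in general; what does vanish is the weighted integral $\int_M e^f\,\bd\alpha=\int_M\d(e^f\alpha)=0$. This conflation is the same weighted-versus-unweighted mismatch as above, resurfacing in intrinsic form. To close the gap, argue as the paper does: impose $0=\int_U\bm{\psi}^*\left(-\d\lambda\wedge\inn(Z)\Theta_\Lag+\lambda\,\inn(Z)\d\Theta_\Lag\right)$ for all compact-supported vertical $Z$, pass to the pointwise identity, use $\inn(R)\d\Theta_\Lag=\sigma_{\Theta_\Lag}\wedge\inn(R)\Theta_\Lag$ (Proposition \ref{sigma}) together with Proposition \ref{reeblemma} to conclude $\sigma_{\Theta_\Lag}=\d\lambda/\lambda$, and then the residual content of criticality is exactly $\bm{\psi}^*\inn(Z)\bd\Theta_\Lag=0$ for all vertical $Z$, which together with $\bm{\psi}^*\Theta_\Lag=0$ is \eqref{ELmcontact2}.
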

\begin{proof}
Following the ideas of \cite{GLMR-2022,GasMas2022}, we consider the Herglotz principle as a constraint variational principle. 
Then, $\bm{\psi}$ is a solution to the Herglotz principle if, and only if, there exists a function $\lambda\in \Cinfty(M)$, the Lagrange multiplier, such that $\bm{\psi}$ is critical for the functional
\begin{equation*}
\int_M \bm{\psi}_s^*\left(\d\overline{S}+\lambda \Theta_{\cal{L}} \right) \,.
\end{equation*}
Let $Z$ be a compact-supported $\tau$-vertical vector field on $\cal{P}$,
and $U \subset M$ an open set such that $\partial U$ is a $(m-1)$-dimensional
manifold and $\tau({\rm supp}(Z)) \subset U$. The corresponding variation is given by $\bm{\psi}_s=\sigma_s\circ\bm{\psi}$,
where $\left\{ \sigma_s \right\}$ is a local one-parameter group of
 $Z$. Then, as a consequence of Stoke's theorem and the assumptions made
on the supports of the vertical vector fields, we have that
\begin{align*}
    0 &= \left.\frac{\d}{\d s}\right|_{s=0}\int_M \bm{\psi}_s^*\left(\d\overline{S}+\lambda \Theta_{\cal{L}} \right) = \int_U \bm{\psi}^*\left(\Lie_Z\d \overline{S}+\lambda \Lie_Z\Theta_{\cal{L}} \right)
    \\
    &= \int_U \bm{\psi}^*\left(\d\inn(Z)\d\overline{S}+\lambda \d\inn(Z)\Theta_{\cal{L}} +\lambda \inn(Z)\d\Theta_{\cal{L}}\right)
    \\
    &= \int_U \bm{\psi}^*\left(-\d \lambda\wedge \inn(Z)\Theta_{\cal{L}}+\lambda \inn(Z)\d\Theta_{\cal{L}}\right)+\int_{\partial U} \bm{\psi}^*\left(\inn(Z)\d\overline{S}+ \lambda \inn(Z)\Theta_{\cal{L}}\right)
    \\
    &= \int_U \bm{\psi}^*\left(-\d \lambda\wedge \inn(Z)\Theta_{\cal{L}}+\lambda \inn(Z)\d\Theta_{\cal{L}}\right)\,.
\end{align*}
Thus, we conclude that $\bm{\psi}$ is a solution to the Herglotz variational problem if, and only if,
\begin{equation}\label{eq:variational1}
\bm{\psi}^*\Theta_{\cal{L}}=0\,,\qquad  
\bm{\psi}^*\left(-\d \lambda\wedge \inn(Z)\Theta_{\cal{L}}+\lambda \inn(Z)\d\Theta_{\cal{L}}\right)=0\,,    
\end{equation}
for every compact-supported $\tau$-vertical vector field $Z\in\vf({\cal P})$. 
Since the compact-supported vector fields locally generate the $\Cinfty({\cal P})$-module of vector fields in ${\cal P}$, it follows that the last equality holds for every $\tau$-vertical vector field $Z\in\vf({\cal P})$. 
In particular, if $Z$ is a Reeb vector field $R$, then
$$
0=\bm{\psi}^*\left(-\d \lambda \wedge\inn(R)\Theta_{\cal{L}}+\lambda \inn(R)\d\Theta_{\cal{L}}\right)=
\bm{\psi}^*\left(-\d \lambda \wedge\inn(R)\Theta_{\cal{L}}+\lambda \sigma_{\Theta_{\cal{L}}}\wedge\inn(R)\Theta_{\cal{L}}\right)\,.
$$
Taking $\lambda\neq0$ we have that $\displaystyle \sigma_{\Theta_{\cal{L}}}=\frac{1}{\lambda}\d\lambda$ and,
in particular, $\d\sigma_{\Theta_{\cal{L}}}=0$, as it is required in the hypotheses. Substituting the value of $\lambda$ in \eqref{eq:variational1} we get
\begin{align*}
 0&=\bm{\psi}^*\left(-\d \lambda \wedge\inn(Z)\Theta_{\cal{L}}+\lambda \inn(Z)\d\Theta_{\cal{L}}\right)=
 \lambda\bm{\psi}^*\left(-\sigma_{\Theta_{\cal{L}}} \wedge\inn(Z)\Theta_{\cal{L}}+ \inn(Z)\d\Theta_{\cal{L}}\right)
 \\
 &=\lambda\bm{\psi}^*\left(\inn(Z)(\sigma_{\Theta_{\cal{L}}} \wedge\Theta_{\cal{L}})+\lambda \inn(Z)\d\Theta_{\cal{L}}\right)=  
 \lambda\bm{\psi}^* \inn(Z)\bar{\d}\Theta_{\cal{L}}\,,
\end{align*}
for every $\tau$-vertical vector field $Z\in\vf({\cal P})$.
\end{proof}

In the same way we can state an analogous variational principle for the Hamiltonian case.
In this case, if $({\cal P}^*,\Theta_{\cal H},\omega)$ is a multicontact Hamiltonian system, we have to introduce the functional
\begin{equation*}
\begin{array}{rcl}
{\mathbb A}_{{\cal P}^*} \colon \Gamma(\widetilde\tau) & \longrightarrow & \Real \\
\bm{\psi} & \longmapsto & \displaystyle \int_M \bm{\psi}^*\d\overline{S} \,,
\end{array}
\end{equation*}
where the convergence of the integral is assumed.

\begin{definition}
[Herglotz Principle for Fields: Hamiltonian case]
\label{dfn:HerglotzLag2}
The {\rm Herglotz variational problem} for the multicontact Hamiltonian system $({\cal P}^*,\Theta_{\cal H},\omega)$
consists in finding sections $\bm{\psi}\in\Gamma(\widetilde\tau)$
satisfying that $\bm{\psi}^*\Theta_{\cal H}=0$ 
and that are critical sections of the functional ${\mathbb A}_{{\cal P}^*}$ with respect to the variations
of $\bm{\psi}$ given by $\bm{\psi}_s = \sigma_s \circ\bm{\psi}$,
where $\left\{ \sigma_s \right\}$ is a local one-parameter group of
any compact-supported $\widetilde\tau$-vertical vector field $Z$ in ${\cal P}$; that is, it verifies that
\begin{equation*}
\bm{\psi}^*\Theta_{\cal H}=0 \,, \qquad
\left.\frac{\d}{\d s}\right|_{s=0}\int_M \bm{\psi}_s^*\d\overline{S} = 0 \,.
\end{equation*}
\end{definition}

Following the same ideas as in Proposition \ref{prop:variational} we can prove the following result.

\begin{proposition}\label{prop:variational1}
Let $({\cal P}^*,\Theta_{\cal H},\omega)$ be a closed (pre)multicontact Hamiltonian system. 
A section $\bm{\psi}\colon M\to {\cal P}^*$ is a solution to the multicontact Hamilton--de Donder--Weyl equations \eqref{sect2H} or \eqref{sect1H} if, and only if, it is a solution to the Herglotz Principle \ref{dfn:HerglotzLag2}.
\end{proposition}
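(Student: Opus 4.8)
The plan is to adapt the argument of Proposition \ref{prop:variational} essentially verbatim, replacing ${\cal P}$, $\Theta_\L$, $\sigma_{\Theta_\Lag}$ and ``$\tau$-vertical'' throughout by ${\cal P}^*$, $\Theta_{\cal H}$, $\sigma_{\cal H}$ and ``$\widetilde\tau$-vertical''. First I would recast the Herglotz problem of Definition \ref{dfn:HerglotzLag2} as a constrained variational problem: a section $\bm{\psi}\in\Gamma(\widetilde\tau)$ is a solution precisely when $\bm{\psi}^*\Theta_{\cal H}=0$ and there is a Lagrange multiplier $\lambda\in\Cinfty(M)$ making $\bm{\psi}$ critical for the unconstrained functional $\int_M\bm{\psi}_s^*(\d\overline S+\lambda\,\Theta_{\cal H})$. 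Then, for a compact-supported $\widetilde\tau$-vertical field $Z$ with flow $\{\sigma_s\}$ and an open set $U\subset M$ with $\widetilde\tau(\mathrm{supp}\,Z)\subset U$, I would compute the first variation using Cartan's formula $\Lie_Z=\d\,\inn(Z)+\inn(Z)\,\d$, the exactness of $\d\overline S$, and Stokes' theorem; the boundary integral over $\partial U$ vanishes by the support hypothesis, leaving
\begin{equation*}
0=\int_U\bm{\psi}^*\bigl(-\d\lambda\wedge\inn(Z)\Theta_{\cal H}+\lambda\,\inn(Z)\d\Theta_{\cal H}\bigr).
\end{equation*}
Since $Z$ is arbitrary and compact-supported $\widetilde\tau$-vertical fields locally generate all $\widetilde\tau$-vertical fields, criticality is equivalent to $\bm{\psi}^*\Theta_{\cal H}=0$ together with the pointwise identity $\bm{\psi}^*\bigl(-\d\lambda\wedge\inn(Z)\Theta_{\cal H}+\lambda\,\inn(Z)\d\Theta_{\cal H}\bigr)=0$ for every $\widetilde\tau$-vertical $Z$.

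Next I would pin down the multiplier by testing this identity against a Reeb field $R\in\mathfrak{R}_{\cal H}$ (here $(R_{\cal H})_\mu=\partial/\partial s^\mu$, so $\inn((R_{\cal H})_\mu)\Theta_{\cal H}=\d^{m-1}x_\mu$). By the defining property of the dissipation form (Proposition \ref{sigma}), $\inn(R)\d\Theta_{\cal H}=\sigma_{\cal H}\wedge\inn(R)\Theta_{\cal H}$, so the identity becomes $\bm{\psi}^*\bigl((-\d\lambda+\lambda\,\sigma_{\cal H})\wedge\inn(R)\Theta_{\cal H}\bigr)=0$, forcing (for $\lambda\neq0$) the relation $\sigma_{\cal H}=\tfrac1\lambda\,\d\lambda$; note this requires $\d\sigma_{\cal H}=0$, which is exactly the closedness hypothesis on the system. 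Substituting $\d\lambda=\lambda\,\sigma_{\cal H}$ back and using that $\sigma_{\cal H}$ is $\widetilde\tau$-semibasic (so $\inn(Z)\sigma_{\cal H}=0$ for vertical $Z$, whence $-\sigma_{\cal H}\wedge\inn(Z)\Theta_{\cal H}=\inn(Z)(\sigma_{\cal H}\wedge\Theta_{\cal H})$), the second equation collapses, by Definition \ref{bard}, to
\begin{equation*}
0=\lambda\,\bm{\psi}^*\,\inn(Z)\bigl(\d\Theta_{\cal H}+\sigma_{\cal H}\wedge\Theta_{\cal H}\bigr)=\lambda\,\bm{\psi}^*\,\inn(Z)\bd\Theta_{\cal H},
\end{equation*}
valid for every $\widetilde\tau$-vertical $Z$.

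Finally I would conclude with Proposition \ref{sectequiv}, which holds for any (pre)multicontact system and hence applies to $({\cal P}^*,\Theta_{\cal H},\omega)$: the conjunction of $\bm{\psi}^*\Theta_{\cal H}=0$ and $\bm{\psi}^*\inn(Z)\bd\Theta_{\cal H}=0$ is equivalent to the field equations for sections \eqref{sect2H} (equivalently \eqref{sect1H}), because the horizontal components of the second equation follow identically from its vertical components. I expect the only delicate point to be the multiplier step: one must verify that a single Reeb field already determines $\lambda$ through $\sigma_{\cal H}=\d\lambda/\lambda$ and that the resulting integrability condition $\d\sigma_{\cal H}=0$ coincides with the ``closed'' assumption. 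The remainder is a formal transcription of the Lagrangian computation, slightly simpler here since no holonomy constraint on $\bm{\psi}$ is imposed.
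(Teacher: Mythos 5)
Your proposal is correct and is exactly what the paper intends: the paper's own ``proof'' of Proposition \ref{prop:variational1} is literally the one-line remark that it follows the same ideas as Proposition \ref{prop:variational}, and your transcription (Lagrange multiplier, Stokes, pinning down $\lambda$ via a Reeb field using $\inn(R)\d\Theta_{\cal H}=\sigma_{\cal H}\wedge\inn(R)\Theta_{\cal H}$, then concluding with $\bd$ and Proposition \ref{sectequiv}) fills in those details faithfully, including the correct observations that $\sigma_{\cal H}$ is $\widetilde\tau$-semibasic and that no holonomy condition is needed here.
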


Finally, this result can be generalized to a variational (pre)multicontact system $(P,\Theta,\omega)$ in general. 
Thus, in an open set $U\subset P$ where the adapted coordinates given in Proposition \ref{prop-adapted-coord} are defined, 
we can introduce the form $\overline{S}=s^\mu\,\d^{m-1} x_\mu$ and define the action functional
\begin{equation*}
\begin{array}{rcl}
{\mathbb A}_P \colon \Gamma(\tau) & \longrightarrow & \Real \\
\bm{\psi} & \longmapsto & \displaystyle\int_U \bm{\psi}^*\d\overline{S}=\displaystyle \int_U \bm{\psi}^*(\d s^\mu\wedge\d^{m-1} x_\mu)\,.
\end{array}
\end{equation*}
The corresponding principle is the following.

\begin{definition}
[Herglotz Principle for Fields: General case]
\label{dfn:HerglotzVar}
The {\rm Herglotz variational problem} for the variational (pre)multicontact system $(P,\Theta,\omega)$
consists in finding sections $\bm{\psi}\colon \tau(U)\subset M\to P$
satisfying that $\bm{\psi}^*\Theta=0$ 
and that are critical sections of the functional ${\mathbb A}_P$ with respect to the variations
of $\bm{\psi}$ given by $\bm{\psi}_s = \sigma_s \circ\bm{\psi}$,
where $\left\{ \sigma_s \right\}$ is a local one-parameter group of
any compact-supported $\tau$-vertical vector field $Z$ in $U$.
\end{definition}

Using the same argument as in Proposition \ref{prop:variational}, we have the following result.

\begin{proposition}
Let $(P,\Theta,\omega)$ be a closed variational (pre)multicontact system. A section $\bm{\psi}\colon \tau(U)\subset M\to P$ is a solution to the (pre)multicontact field equations \eqref{sect2} or \eqref{sect1} if, and only if, it is a solution to the Herglotz Principle \ref{dfn:HerglotzVar}.
\end{proposition}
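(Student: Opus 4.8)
The plan is to follow verbatim the argument of Proposition~\ref{prop:variational}, observing that the computation carried out there never exploits the Lagrangian origin of $\Theta_{\cal L}$ but only structural facts available for any variational (pre)multicontact system. First I would work in an adapted chart $U$ provided by Theorem~\ref{prop-adapted-coord}, where $\overline{S}=s^\mu\,\d^{m-1}x_\mu$ and $\d\overline{S}$ is exactly the last term of $\Theta$. Then I would recast the Herglotz problem of Definition~\ref{dfn:HerglotzVar} as a constrained variational problem: a section $\bm{\psi}$ is critical for ${\mathbb A}_P$ subject to the constraint $\bm{\psi}^*\Theta=0$ if, and only if, there exists a Lagrange multiplier $\lambda\in\Cinfty(M)$ such that $\bm{\psi}$ is a critical section of the unconstrained functional $\int_U\bm{\psi}_s^*(\d\overline{S}+\lambda\,\Theta)$ with respect to variations generated by compact-supported $\tau$-vertical vector fields $Z\in\vf(P)$.

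Next I would compute the first variation as in the cited proof. Using Cartan's formula $\Lie_Z=\d\,\inn(Z)+\inn(Z)\,\d$, the vanishing $\d\,\inn(Z)\d\overline{S}=0$ on the exact piece, and Stokes' theorem together with the vanishing of the boundary integral (which follows from the compact support of $Z$ and the hypotheses on $\partial U$), the stationarity condition collapses to
\begin{equation*}
\int_U\bm{\psi}^*\left(-\d\lambda\wedge\inn(Z)\Theta+\lambda\,\inn(Z)\d\Theta\right)=0
\end{equation*}
for every such $Z$. Since the compact-supported $\tau$-vertical vector fields locally generate the $\Cinfty(P)$-module of $\tau$-vertical vector fields, the integrand vanishes pointwise for every $\tau$-vertical $Z\in\vf(P)$.

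Then I would specialize $Z$ to a Reeb vector field $R\in\mathfrak{R}$ and invoke Proposition~\ref{sigma}, which gives $\inn(R)\d\Theta=\sigma_{\Theta}\wedge\inn(R)\Theta$. This forces $\sigma_{\Theta}=\frac{1}{\lambda}\,\d\lambda$ (taking $\lambda\neq0$), so that $\d\sigma_{\Theta}=0$, consistently with the closedness assumption built into the hypothesis. Substituting this value back and recalling from Definition~\ref{bard} that $\bd\Theta=\d\Theta+\sigma_{\Theta}\wedge\Theta$, the stationarity condition becomes $\bm{\psi}^*\inn(Z)\bd\Theta=0$ for every $\tau$-vertical $Z$, together with the constraint $\bm{\psi}^*\Theta=0$. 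Because every step is an equivalence, the biconditional in the statement is obtained at once.

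Finally, I would close the argument by appealing to Proposition~\ref{sectequiv}: the two conditions just obtained---$\bm{\psi}^*\Theta=0$ and $\bm{\psi}^*\inn(Z)\bd\Theta=0$ for $\tau$-vertical $Z$---are precisely the $\tau$-vertical components of the field equations~\eqref{sect1}, and the remaining ($\tau$-transverse) components hold identically once these do, exactly as recorded in the proof of Proposition~\ref{sectequiv}. Hence \eqref{sect1} (equivalently \eqref{sect2}) holds if, and only if, $\bm{\psi}$ solves the Herglotz problem. I expect the only genuinely delicate point to be checking that the Lagrange-multiplier reformulation remains valid without the Lagrangian structure, and that the closedness of $\sigma_{\Theta}$ is exactly what makes $\lambda$ a well-defined nonvanishing function; everything else is a transcription of the computation in Proposition~\ref{prop:variational}, now performed in a general adapted chart, with the nonuniqueness of the Reeb vector fields in the premulticontact case being harmless since $\sigma_{\Theta}$ is intrinsic by Proposition~\ref{sigma}.
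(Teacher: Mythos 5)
Your proposal is correct and is essentially the paper's own proof: the paper disposes of this proposition with the remark that it follows by ``the same argument as in Proposition~\ref{prop:variational}'', which is exactly the Lagrange-multiplier computation you transcribe, including the specialization to Reeb vector fields, the identification $\sigma_{\Theta}=\frac{1}{\lambda}\,\d\lambda$, and the recovery of $\bm{\psi}^*\inn(Z)\bd\Theta=0$ for $\tau$-vertical $Z$. Your closing appeal to Proposition~\ref{sectequiv} (to see that the $\tau$-transverse contractions hold identically once the vertical ones do) makes explicit a step the paper leaves implicit, but it is the same route, not a different one.
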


\section{Examples}
\label{ex}

In this section we give some examples of multicontact field theories. In the first one, we recover the notion of cocontact mechanical system for time-dependent non-conservative systems as a particular case of a multicontact system. 
The second examples describes the Hamiltonian formalism for a vibrating string with time-dependent damping. In the last example, we develop the Lagrangian formalism for the Maxwell's equations with damping. 
In what follows we mainly use the field equations expressed for multivector fields.

\subsection{Time-dependent contact mechanical systems}
\label{cocontactsection}

In a recent paper \cite{LGGMR-2022}, a new geometrical framework, called {\sl cocontact structure}, 
has been introduced in order to develop a geometric formulation for time-dependent contact mechanical systems, 
both in the Hamiltonian and Lagrangian settings. 
In this section we see that this cocontact formulation is just a particular case of the multicontact setting introduced in the present paper.

Consider a cocontact Hamiltonian system $(M,\tau,\eta,H)$ where $(M,\tau,\eta)$ is a cocontact manifold of dimension $2n+2$ and $H\colon M\to\R$ is a Hamiltonian function. Recall that on every cocontact manifold there exist local charts $(t, q^i, p_i,s)$ around every point in $M$ such that
$$ \tau = \d t\,,\quad\eta = \d s - p_i\d q^i\,, $$
and the Reeb vector fields read 
$R_t = \tparder{}{t}$ and $R_s = \tparder{}{s}$. 
The cocontact Hamiltonian equations for a vector field $X\in\X(M)$ are
\eqref{hamilton-cocontact-eqs}
and the solution to these equations is called the {\sl cocontact Hamiltonian vector field}, which is denoted $X_H$. Its local expression in Darboux coordinates is
\begin{equation}
\label{eq:cocontact-vector-field}
X_H = \parder{}{t} + \parder{H}{p_i}\parder{}{q^i} - \left(\parder{H}{q^i} + p_i\parder{H}{s}\right)\parder{}{p_i} + \left(p_i\parder{H}{p_i} - H\right)\parder{}{s}\,.
\end{equation}

Every cocontact structure $(\tau,\eta)$ on a manifold $M$ along with a Hamiltonian function $H$ allow us to define a multicontact $1$-form $\Theta\in\df^1(M)$ given by
$$ \Theta = H\tau + \eta\,. $$
In fact, we can take the $1$-form $\omega = \tau$. In this case, we have
$$ \ker\omega = \ker\tau = \left\langle \parder{}{q^i},\parder{}{p_i},\parder{}{s} \right\rangle\,,\qquad
\mathfrak{R} = \left\langle\dparder{}{s}\right\rangle \,.
$$
The conditions stated in Definition \ref{premulticontactdef} hold obviously for this structure taking into account that $k = 0$, $m = 1$, and $N = 2n+1$, and that $\mathcal{A}^0(\ker\omega) = \Cinfty(M)$.
In addition, it is easy to check that $\sigma_{\Theta} = \dparder{H}{s}\d t$. In coordinates, we have
\begin{gather*}
\d\Theta = \parder{H}{q^i}\d q^i\wedge\d t + \parder{H}{p_i}\d p_i\wedge \d t + \parder{H}{s}\d s\wedge\d t + \d q^i\wedge\d p_i \,,\\
\bd\Theta = \left(\parder{H}{q^i} + p_i\parder{H}{s}\right)\d q^i\wedge\d t + \parder{H}{p_i}\d p_i\wedge \d t + \d q^i\wedge\d p_i \,.
\end{gather*}
Consider now a vector field $X\in\X^1(M)$ with local expression
$$ 
X = f\parder{}{t} + F^i\parder{}{q^i} + G_i\parder{}{p_i} + g\parder{}{s}\,.
$$
Imposing equations \eqref{vf}, we obtain
\begin{align*}
    f &= 1\,,\\
    g &= p_iF^i - H\,,\\
    G_i &= -\left( \parder{H}{q^i} + p_i\parder{H}{s} \right)\,,\\
    F^i &= \parder{H}{p_i}\,,\\
    0 &=\left(\parder{H}{q^i} + p_i\parder{H}{s}\right) F^i + G_i\parder{H}{p_i} \,,
\end{align*}
where the last equation holds identically when the above ones are taken into account.
Thus, the local expression of the vector field $X$ is
$$ X = \parder{}{t} + \parder{H}{p_i}\parder{}{q^i} - \left(\parder{H}{q^i} + p_i\parder{H}{s}\right)\parder{}{p_i} + \left(p_i\parder{H}{p_i} - H\right)\parder{}{s}\,. $$
This local expression coincides with  \eqref{eq:cocontact-vector-field}. 
Thus, we have checked that time-dependent contact mechanics is a particular case of the multicontact setting introduced in the present work.

Conversely, given a multicontact structure $(\Theta,\omega)$ with $m = 1$, consider the $1$-forms $\tau = \omega$ and $\eta = \Theta - H\omega$. Taking into account that $\omega$ is closed, it is easy to check that $\tau\wedge\eta\wedge(\d\eta)^n$ is a volume form on $M$ and thus $(\tau,\eta)$ define a cocontact structure on $M$.

\subsection{A vibrating string with time-dependent damping}

In this second example we are going to study the Hamiltonian formulation of a damped vibrating string in the multicontact setting.

It is well known that the vibration of a string can be described by a function $u=u(t,x)$, where $t$ denotes the time, $x$ denotes the position on the string and $u$ is the distance of a point in the string from its equilibrium position. This system can be described by the Hamiltonian function
$$ H = \frac{1}{2\rho}(p^t)^2 - \frac{1}{2\tau}(p^x)^2\,, $$
where $p^t$ and $p^x$ are the momenta with respect to the coordinates $t$ and $x$ respectively. In \cite{GGMRR-2019}
the authors show that we can modify this Hamiltonian function and, using the $k$-contact formalism for field theories with dissipation, model a vibrating string with linear damping. The modified Hamiltonian is
$$ {\cal H} = H - \gamma s^t\,, $$
where $\gamma\in\R$ is the damping constant and $s^t$ is one of the dissipation variables introduced in the $k$-contact formalism.

In this example, we consider a time-dependent damping and hence, $\gamma = \gamma(t)$ is a function of time. 
This situation can be described using the fiber bundle $\tau\colon{\cal P}\to M$, where $M$ is an orientable manifold with coordinates $(t,x)$ 
and volume form $\omega=\d t\wedge\d x$, and ${\cal P}$ has adapted coordinates $(t,x,u,p^t,p^x,s^t,s^x)$. 
Consider the form
$$ 
\Theta_{\cal H} = H\d t\wedge\d x + \eta^t\wedge\d x - \eta^x\wedge\d t\in\df^2({\cal P})\,, 
$$
where $\eta^t = \d s^t - p^t\d u$ and $\eta^x = \d s^x - p^x\d u$. Hence,
\begin{align*}
\Theta_{\cal H} &= H\d t\wedge\d x + \d s^t\wedge\d x - p^t\d u\wedge\d x - \d s^x\wedge\d t + p^x\d u\wedge\d t\,, \\
\d\Theta_{\cal H} &= \d H\wedge\d t\wedge\d x - \d p^t\wedge\d u\wedge\d x + \d p^x\wedge\d u\wedge\d t\\
&= \frac{p^t}{\rho}\d p^t\wedge\d t\wedge\d x - \frac{p^x}{\tau}\d p^x\wedge\d t\wedge\d x + \gamma(t)\d s^t\wedge\d t\wedge\d x - \d p^t\wedge\d u\wedge\d x + \d p^x\wedge\d u\wedge\d t\,.
\end{align*}
The $1$-form $\sigma_{\Theta}$ defined in Proposition \ref{sigma} is 
$\sigma_{\Theta} = \gamma(t)\d t$, and then
\begin{align*}
\bd\Theta_{\cal H} &= \d\Theta_{\cal H} + \sigma_{\Theta}\wedge\Theta_{\cal H} \\
&= \frac{p^t}{\rho}\d p^t\wedge\d t\wedge\d x - \frac{p^x}{\tau}\d p^x\wedge\d t\wedge\d x - \d p^t\wedge\d u\wedge\d x + \d p^x\wedge\d u\wedge\d t - p^t\gamma(t)\d t\wedge\d u\wedge\d x\,.
\end{align*}

Let $\bfX\in\X^2({\cal P})$ be a locally decomposable, integrable and transverse 2-multivector field. Then, $\bfX = X_1\wedge X_2$ where
\begin{align*}
X_1 &= f_1\parder{}{t} + F_1\parder{}{u} + G_1^1\parder{}{p^t} + G_1^2\parder{}{p^x} + g_1^1\parder{}{s^t} + g_1^2\parder{}{s^x}\,,\\
X_2 &= f_2\parder{}{x} + F_2\parder{}{u} + G_2^1\parder{}{p^t} + G_2^2\parder{}{p^x} + g_2^1\parder{}{s^t} + g_2^2\parder{}{s^x}\,.
\end{align*}
Now,
\begin{equation}\label{eq:vibrating-string-cond-0}
0 =\inn(\bfX)\Theta_{\cal H}
=\inn(X_1\wedge X_2)\Theta_{\cal H}=\inn(X_2)\inn(X_1)\Theta
= f_1f_2 H - f_1F_2p^x + f_1g_2^2 - f_2F_1p^t + f_2g_1^1\,.
\end{equation}
On the other hand,
\begin{align*}
    0 =&\, \inn(\bfX)\bd\Theta_{\cal H} 
    =\, \inn(X_1\wedge X_2)\bd\Theta_{\cal H}
    =\, \inn(X_2)\inn(X_1)\bd\Theta_{\cal H}
\\
    =&\, \left(-f_2F_1p^t\gamma(t) - F_1G_2^2 - f_2G_1^1\frac{p^t}{\rho} + f_2G_1^2\frac{p^x}{\tau} + F_2G_1^2\right)\d t\\
    &+ \left(-f_1F^2p^t\gamma(t) - f_1G_2^1\frac{p^t}{\rho} + f_1G_2^2\frac{p^x}{\tau} + F_1G_2^1 - F_2G_1^1\right)\d x \\
    &+ \left(f_1f_2p^t\gamma(t) + f_1G_2^2 + f_2G_1^1\right)\d u \\
    &+ \left(f_1f_2\frac{p^t}{\rho} - f_2F_1\right)\d p^t + \left(-f_1f_2\frac{p^x}{\tau} - f_1F_2\right)\d p^x \,.
\end{align*}
The transversality condition $\inn({\bf X})\omega=1$ implies $f_1=f_2=1$, and hence we get the conditions
\begin{align}
F_1 &= \frac{p^t}{\rho}\,,
\label{eq:vibrating-string-cond-1}\\
F_2 &= -\frac{p^x}{\tau}\,,
\label{eq:vibrating-string-cond-2}\\
G_1^1 + G_2^2 &= -p^t\gamma(t)\,.
\label{eq:vibrating-string-cond-3}
\end{align}
Note that the terms in $\d t$ and $\d x$ also give equation \eqref{eq:vibrating-string-cond-3}.
Now, taking into account conditions \eqref{eq:vibrating-string-cond-1} and \eqref{eq:vibrating-string-cond-2}, equation \eqref{eq:vibrating-string-cond-0} yields
$$
g_1^1 + g_2^2 = \frac{(p^t)^2}{\rho} - \frac{(p^x)^2}{\tau} - H= \frac{(p^t)^2}{\rho} - \frac{(p^x)^2}{\tau} - \gamma(t)s^t\,.
$$
Let $\psi = (t, x, u, p^t, p^x, s^t, s^x)$ be an integral section of $\bfX$. Then, we have that
$$ 
\parder{^2u}{t^2} - \frac{\tau}{\rho}\parder{^2u}{x^2} + \gamma(t)\parder{u}{t} = 0\,, 
$$
which is the equation of a vibrating string with a time-dependent damping.

\subsection{Maxwell's equations}

This last example is devoted to study the Lagrangian formulation of Maxwell's equations with charges and currents with a non-conservative term using the multicontact formulation.

A non-conservative version of Maxwell's equations in the context of contact geometry was first derived in \cite{LPAF2018} from variational principles. 
Subsequently, it has been formalized using $k$-contact geometry in \cite{GasMar21,GRR-2022}. 
Here we describe this theory using multicontact geometry. For the physical consequences of the derived equations and its applications to electromagnetism in materials, see \cite{GasMar21,LPAF2018}.

Let $M$ be a $4$-dimensional manifold representing the spacetime, $P \rightarrow M$ the principal bundle with structure group $U(1)$ and $\pi:C\rightarrow M$ the associated bundle of connections (see \cite{CasMu} for more details). Following the multicontact Lagrangian formalism
developed in Section \ref{mlf}, we define $\mathcal{P}=J^1\pi\times_M\Lambda^{m-1}(\Tan^*M) $, 
with local coordinates $(x^\mu, A_\mu,A_{\mu,\nu},s^\mu)$ such that $\omega=\d x^0\wedge \d x^1\wedge \d x^2\wedge\d x^3$. 
The electromagnetic Lagrangian with a linear dissipation term is
\[
L =  -\frac{1}{4\mu_0}g^{\alpha\mu}g^{\beta\nu}F_{\mu\nu}F_{\alpha\beta}-A_\alpha J^\alpha-\gamma_\alpha s^\alpha\,,
\]
where $F_{\mu\nu}=A_{\nu,\mu}-A_{\mu,\nu}$ is the electromagnetic tensor field, $J^\alpha$ and $\gamma_\alpha\in \Cinfty(M)$ are smooth functions (for $0\leq\alpha\leq 3$), $g^{\mu\nu}$ is a metric on $M$ with signature $(+,-,-,-)$ and $\mu_0$ is a constant \cite{GasMar21}. 
In contrast to \cite{GasMar21,LPAF2018}, in the multicontact framework we can assume that $g^{\mu\nu}$, $J^\alpha$ and $\gamma_\alpha$ are generic functions on the spacetime coordinates.

The Lagrangian energy is
\[
 E_\mathcal{L} = A_{\mu,\alpha}\frac{\partial L}{\partial A_{\mu,\alpha}}-L = \frac{1}{\mu_0}g^{\mu\nu}g^{\alpha\beta}A_{\mu,\alpha}F_{\nu\beta}+\frac{1}{4\mu_0}g^{\mu\nu}g^{\alpha\beta}F_{\beta\nu}F_{\alpha\mu}+A_\alpha J^\alpha+\gamma_\alpha s^\alpha\,,
\]
and the Lagrangian form is
\[
\Theta_\mathcal{L} =E_{\mathcal{L}}\wedge \d ^4x+\d s^\mu\wedge\d ^3x_\mu+\frac{1}{\mu_0}g^{\alpha\beta}g^{\mu\nu}F_{\beta\nu}\d A_\alpha\wedge\d ^3x_\mu\,.
\]
Then, $(\mathcal{P},\Theta_\mathcal{L},\omega)$ is a premulticontact Lagrangian system. Indeed,
$$
   \mathcal{C}=\left<\frac{\partial}{\partial A_{\mu,\nu}}+\frac{\partial}{\partial A_{\nu,\mu}}\right>_{\mu,\nu=0,1,2,3}
   \,,\qquad
    \mathcal{D}^\mathfrak{R}=\left<\frac{\partial}{\partial A_{\mu,\nu}}+\frac{\partial}{\partial A_{\nu,\mu}},\frac{\partial}{\partial s^\mu}\right>_{\mu,\nu=0,1,2,3}\,, 
$$
and $\displaystyle \inn\left(\frac{\partial}{\partial s^\mu}\right)\Theta_{\mathcal{L}}=\d^3x_\mu$. The dissipation form is $\sigma_{\Theta_\mathcal{L}}=\gamma_\mu \d x^\mu$, which is closed as long as $\displaystyle\frac{\partial \gamma_\mu}{\partial x^\nu}=\frac{\partial \gamma_\nu}{\partial x^\mu}$. As it is studied in \cite{GasMar21}, 
for some special values of $\gamma_\mu$ it can be shown that the energy of the system is dissipated. 
In general, we can only say that it is not conserved. The Lagrangian and the premulticontact structure are invariant under a change of gauge. 
Nevertheless, the study of the symmetries of multicontact systems goes beyond the scope of this work.

For a multivector field with local expression
$$
{\bf X}_\L= \bigwedge_{\rho=1}^m
\Big(\derpar{}{x^\rho}+X_{\mu,\rho}\frac{\displaystyle\partial}{\displaystyle
\partial A_\mu}+X_{\mu\nu,\rho}\frac{\displaystyle\partial}{\displaystyle
\partial A_{\mu,\nu}}+X_\rho^\nu\,\frac{\partial}{\partial s^\nu}\Big)\,,
$$
 the field equations \eqref{vf} lead to
\begin{equation}
    \begin{dcases}
    \ X^\mu_\mu = L
    \,,\\
    \ X_{\nu,\mu}-X_{\mu,\nu} = A_{\nu,\mu}-A_{\mu,\nu}\,,\\
    \ \mu_0 J^\mu =g^{\nu\sigma}g^{\mu\tau}\left(X_{\tau\sigma,\nu}-X_{\sigma\tau,\nu}+\gamma_\nu X_{\tau,\sigma}-\gamma_\nu X_{\sigma,\tau}\right) \,. 
    \label{ContactEOM}
    \end{dcases}
\end{equation}
We recover only a part of the holonomy, as it is expected in premulticontact Lagrangian systems. 
Imposing that ${\bf X}_\L$ is holonomic and denoting $X_{\mu\nu,\rho}-X_{\nu\mu,\rho}=F_{\nu\mu,\rho}$, the last equation of \eqref{ContactEOM} is 
$$\mu_0 J^\mu =g^{\nu\sigma}g^{\mu\tau}\left(F_{\sigma\tau,\rho}+\gamma_\nu F_{\sigma\tau}\right)\,.$$

In order to recover an expression involving the electric and magnetic fields, we consider the case $\displaystyle g^{\mu\nu} = \frac{1}{\sqrt{1+\chi_m}}\text{diag}\Big((1+\chi_e)(1+\chi_m),-1,-1,-1 \Big)$, where $\chi_e$ and $\chi_m$ are the electric and magnetic susceptibilities respectively. Then, letting $\displaystyle\gamma^\mu = \left(\frac{\gamma}{c},\pmb{\gamma}\right)$ and $\displaystyle J^\mu = \left(c\rho,\textbf{j}\right)$, the last equation of \eqref{ContactEOM} reads in vector notation as
\begin{align*}
\frac{\rho}{\epsilon_0}&=(1+\chi_e)\big(\nabla\cdot\textbf{E}+\pmb{\gamma}\cdot\textbf{E}\big)\,,
    \\
\textbf{j}&=\-(1+\chi_e)\epsilon_0\left(\frac{\partial\textbf{E}}{\partial t}+\gamma\textbf{E}\right)+\frac{1}{1+\chi_m}\left(\nabla\times\frac{\textbf{B}}{\mu_0}+\pmb{\gamma}\times\frac{\textbf{B}}{\mu_0}\right) \,,
\end{align*}
which are the {\sl contact Maxwell's equations}. 
These equations are the so-called ``second pair of Maxwell's equations'' and correspond to the {\sl Gauss law (for electric fields)} and the {\sl Amp\`ere--Maxwell law} for linear materials (or also for vacuum) when $\gamma_\nu = 0$. 
The ``first pair of Maxwell equations''; i.e., the  {\sl Gauss law (for magnetic fields)} and the {\sl Faraday--Henry--Lenz law}
are the same as in the vacuum case, 
since these laws are just stating that the curvature of the connection, whose coefficients are the electromagnetic tensor field $F$, is closed.

\section{Conclusions and outlook}

In this paper we have introduced and studied (pre)multicontact structures
as a generalization of (pre)cocontact and (pre)multisymplectic structures.

We have stated the properties defining {\sl (pre)multicontact manifolds} in general. 
The bundle structure of these kinds of manifolds has been analyzed, also proving a Darboux-type theorem to find charts of adapted coordinates. 
This is useful, in first place, to study these manifolds from an abstract point of view and, at the same time, they are the setting for describing non-conservative first-order classical field theories.
To do the latter, we need to restrict the type of (pre)multicontact structures involved.
This is achieved by introducing an additional condition that ensures that the {\sc pde}s associated with the structure 
(which are the field equations in the non-conservative field theories) are of variational type.
We have characterized the field equations using different geometrical tools (sections, multivector fields and Ehresmann connections) that are useful in different contexts.
Then, the Lagrangian and the Hamiltonian formalisms of non-conservative field theories have been developed for the regular and the almost-regular situations.
Finally, the extension of the Herglotz variational Principle has been studied in the Lagrangian, Hamiltonian and general cases, in order to derive the (pre)multicontact field equations variationally.

As illustrative examples, from this framework we have recovered the {\sl cocontact formulation} describing non-conservative time-dependent mechanics, 
we have analyzed the $1$-dimensional wave equation with time-dependent damping, and Maxwell's electromagnetism with damping terms.

The present work opens many new lines of research.
Some of the possible topics are:
\begin{itemize}
\item
To do an accurate analysis of the case of singular systems and the corresponding Lagrangian and Hamiltonian constraint algorithms,
characterizing geometrically the constraint submanifolds 
and studying the equivalence between the solutions to the Lagrangian and the Hamiltonian field equations.
\item 
To develop a Hamilton--Jacobi equation for non-conservative field theories similar to that for contact Hamiltonian systems \cite{LLLR-2022,LLM-2021} and multisymplectic field theories \cite{LMM-2009}.
\item
To study the reduction of multicontact systems in the presence of a group of symmetries, introducing a good definition of momentum map.
For multisymplectic reduction see, for instance, \cite{Bl-2021,EMR-2018} and references therein.
\item
To study the different types of submanifolds of a multicontact manifold and to find an analogue to the coisotropic reduction.
\item
To study the formalism in Cauchy data space in the multicontact setting.
\item
To establish a bracket similar to that of multisymplectic theories, which in this case will not satisfy Leibniz's rule-like properties, as in the case of contact Hamiltonian systems.
Related to this, it would be interesting to study the relationship
between (pre)multicontact structures and Jacobi bundles.
\item
To provide a geometric framework for new gravitational theories to describe astrophysical observations \cite{GasMas2022,Lazo2022}.
\item
It is expected that the multicontact framework can be used to formulate continuum mechanics equations \cite{AH-94}.
\end{itemize}

\section*{Acknowledgments}

We acknowledge financial support of the 
{\sl Ministerio de Ciencia, Innovaci\'on y Universidades} (Spain), projects PGC2018-098265-B-C33,
PID2019-106715GB-C21, and PID2021-125515NB-C21, 
the ICMAT Severo Ochoa project CEX2019-000904-S, and grant EIN2020-112197, funded by AEI/10.13039/501100011033 and European Union NextGenerationEU/PRTR,
and the financial support for research groups AGRUPS-2022 of the Universitat Polit\`ecnica de Catalunya (UPC).
X.~Rivas acknowledges financial support of the Novee Idee 2B-POB II, project PSP: 501-D111-20-2004310 funded by the ``Inicjatywa Doskonałości-Uczelnia Badawcza'' (IDUB) program.
We thank the referees for their helpful comments.

\bibliographystyle{abbrv}
\addcontentsline{toc}{section}{References}


\begin{thebibliography}{99}

{\small

\bibitem{AM-78}
R. Abraham, J.~Marsden,
{\it Foundations of mechanics} ($2nd$ ed.),
Benjamin--Cummings, Redwood City CA, 1987.

\bibitem{Ald1980}
V. Aldaya, J.A. de Azc\'arraga,
``Geometric formulation of classical mechanics and field theory'',
{\sl Riv. Nuovo Cim.} {\bf 3}(1)
(1980) 1--66. (\url{https://doi.org/10.1007/BF02906204}).

\bibitem{ACGL-2018}
V. Apostolov, D. Calderbank, P. Gauduchon, E. Legendre, 
``Toric contact geometry in arbitrary codimension'',
{\sl Int. Math. Res. Not. IMRN} {\bf 2020}(8) (2018) 2436--2467.
(\url{https://doi.org/10.1093/imrn/rny021}).

\bibitem{Ar}
{\rm V.I. Arnold},
{\it Mathematical methods of classical mechanics},
Graduate Texts in Mathematics {\bf 60}. Springer-Verlag, New York, 1989.
(\url{https://doi.org/10.1007/978-1-4757-2063-1}).

\bibitem{BH-2005}
G. Bande, A. Hadjar, 
``Contact pairs'', 
{\sl Tohoku Math. J.} {\bf 57}(2) (2005) 247--260.
(\url{https://doi.org/10.2748/tmj/1119888338}).

\bibitem{Banyaga2016}
A.~Banyaga, D.~F. Houenou,
\newblock {\em {A brief introduction to symplectic and contact manifolds}},
\newblock World Scientific, Singapore, 2016.

\bibitem{Bl-2021}
C. Blacker, ``Reduction of multisymplectic manifolds",
Lett. Math. Phys. {\bf 111} (2021) 64. (\url{https://doi.org/10.1007/s11005-021-01408-y}).

\bibitem{Bo-96}
P. Bolle, 
``Une condition de contact pour les sous-vari\'et\'es coısotropes d’une vari\'et\'e symplectique'',
{\sl C. R. Math. Acad. Sci. S\'er. 1} {\bf 1} (1996) 83--86.

\bibitem{Bravetti2017}
A.~Bravetti,
\newblock {``Contact Hamiltonian dynamics: the concept and its use''},
\newblock {\sl Entropy} {\bf 19}(10) (2017) 535.
(\url{https://doi.org/10.3390/e19100535}).

\bibitem{BCT-2017}
A.~Bravetti, H.~Cruz, D.~Tapias,
\newblock {``Contact Hamiltonian mechanics''},
\newblock {\sl Ann. Phys. (N.Y.)} {\bf 376} (2017) 17--39.
(\url{https://doi.org/10.1016/j.aop.2016.11.003}).

\bibitem{BLMP-2020}
A. Bravetti, M. de Le\'on, J.C. Marrero, E. Padr\'on,
``Invariant measures for contact Hamiltonian systems: symplectic sandwiches with contact bread'',
{\sl J. Phys. A: Math. Gen.} {\bf 53}(45)
(2020) 455205.
(\url{https://doi.org/10.1088/1751-8121/abbaaa}).

\bibitem{CCI-91}
J.F. Cari\~nena, M. Crampin,  L.A. Ibort,
``On the multisymplectic formalism for first order field theories'',
{\sl Diff. Geom. Appl.} {\bf 1}(4) (1991) 345--374.
(\url{https://doi.org/10.1016/0926-2245(91)90013-Y}).

\bibitem{CG-2019}
J.~Cari{\~{n}}ena, P.~Guha,
\newblock ``Nonstandard Hamiltonian structures of the Li\'enard equation
and contact geometry'',
\newblock {\sl Int. J. Geom. Meth. Mod. Phys.} {\bf 16}(supp 01) (2019) 1940001.
(\url{https://doi.org/10.1142/S0219887819400012}).

\bibitem{Ca96a}
F. Cantrijn, L.A. Ibort, M. de Le\'on,
``Hamiltonian structures on multisymplectic manifolds'',
{\sl Rend. Sem. Mat. Univ. Pol. Torino} \textbf{54}(3) (1996) 225--236.
(\url{https://www.researchgate.net/publication/233917958}).

\bibitem{CIL99}
F. Cantrijn, L.A. Ibort, M. de Le\'on,
``On the geometry of multisymplectic manifolds'',
{\sl J. Austral. Math. Soc. (Series A)} \textbf{66}(3) (1999) 303--330.
(\url{https://doi.org/10.1017/S1446788700036636}).

\bibitem{Ca-22}
E. Cartan,
{\it Leçons sur les invariants integraux}, Hermann, Paris, 1922.

\bibitem{CasMu} 
M. Castrill\'on L\'opez, J. Mu\~noz Masqu\'e, ``The geometry of the bundle of connections'', {\sl Math. Z.} {\bf 236}(4) (2001) 797--811. 
(\url{https://doi.org/10.1007/PL00004852}).

\bibitem{CCM-2018}
F.~M. Ciaglia, H.~Cruz, G.~Marmo,
\newblock {``Contact manifolds and dissipation, classical and quantum''},
\newblock {\sl Ann. Phys. (N.Y.)} {\bf 398} (2018) 159--179.
(\url{https://doi.org/10.1016/j.aop.2018.09.012}).
 
 \bibitem{Almeida-2018}
U.N. de Almeida,
{\sl Contact Anosov actions with smooth invariant bundles} {\bf 59} (2018), Ph D. thesis,
Universidade de S{\~{a}}o Paulo.

\bibitem{dD-1935}
T. de Donder, 
\emph{Th\'eorie Invariantive du Calcul des Variations}.
Gauthier-Villars, Paris, 1935.

\bibitem{LGGMR-2022}
M.~de~Le{\'{o}}n, J. Gaset, X. Gr\'acia, M.C. Mu\~noz-Lecanda, X. Rivas,
``Time-dependent contact mechanics'',
{\sl Monatsh. Math.} (2022).
(\url{https://doi.org/10.1007/s00605-022-01767-1}).

\bibitem{LGL-2021}
M. de Le\'on, J. Gaset, M La\'inz-Valc\'azar, 
``Inverse problem and equivalent contact systems'', 
{\sl J. Geom. Phys.} {\bf  176} (2022) 104500. 
(\url{https://doi.org/10.1016/j.geomphys.2022.104500}).

\bibitem{LGLMR-2021}
M.~de~Le{\'{o}}n, J. Gaset, M.~La\'inz-Valc{\'{a}}zar, M.C. Mu\~noz-Lecanda, N. Rom\'an-Roy,
``Higher-order contact mechanics'',
{\sl Ann. Phys.} {\bf 425} (2021) 168396.
(\url{https://doi.org/10.1016/j.aop.2021.168396}).

\bibitem{LGMMR-2020}
M.~de~Le{\'{o}}n, J. Gaset, M.~La\'inz-Valc{\'{a}}zar, X. Rivas, N. Rom\'an-Roy,
``Unified Lagrangian--Hamiltonian formalism for contact systems'',
\newblock {\em Fortsch. Phys.} {\bf 68}(8) (2020) 2000045.
(\url{https://doi.org/10.1002/prop.202000045}). 

\bibitem{DeLeon2019}
M.~de~Le\'on, M.~La\'\i nz-Valc\'azar,
``Singular Lagrangians and precontact Hamiltonian Systems'',
{\sl Int. J. Geom. Meth. Mod. Phys.},  {\bf 16}(10) (2019) 1950158.
(\url{https://doi.org/10.1142/S0219887819501585}).

\bibitem{LLLR-2022}
M.~de León, M.~Laínz, A.~López-Gordón, X.~Rivas,
``Hamilton--Jacobi theory and integrability for autonomous and non-autonomous contact systems'', arXiv: 2208.07436 [math.DG] (2022).

\bibitem{LLM-2021}
 M. de Le\'on, M. La\'inz, A. Mu\~niz-Brea,
``The Hamilton--Jacobi Theory for Contact Hamiltonian Systems'',
{\sl Mathematics} {\bf 9}(16) (2021) 1993. (\url{https://doi.org/10.3390/math9161993}).

\bibitem{LLMP-1999}
M. de Le\'on, B. L\'opez, J. C. Marrero, E. Padr\'on,
``Lichnerowicz--Jacobi cohomology and homology of Jacobi manifolds: modular class and duality'', arXiv: 9910079 [math.DG] (1999).

\bibitem{LMM-96}
{\rm M. de Le\'on, J. Mar\'\i n-Solano, J.C. Marrero},
``A Geometrical approach to Classical Field Theories:  A constraint algorithm for singular theories'',
{\sl Proc.  New Develops. Dif. Geom.},
L. Tamassi, J. Szenthe eds., Kluwer Acad. Press, (1996) 291--312.
(\url{https://doi.org/10.1007/978-94-009-0149-0\_22}).

\bibitem{LMS-2003}
M. de Le\'on, D. Mart\'in de Diego, A. Santamar\'ia-Merino, 
``Tulczyjew Triples and Lagrangian Submanifolds in Classical Field Theories'',
in {\it Applied Differential Geometry and Mechanics}, 189, 
Academia Press: Ghent, Belgium, 2003.

\bibitem{LMS-2004}
M. de Le\'on, D. Mart\'in de Diego, A. Santamar\'ia-Merino, 
``Symmetries in classical field theory'',
{\sl Int. J. Geom. Meth. Mod. Phys.} {\bf 1}(5) (2004) 651--710.
(\url{https://doi.org/10.1142/S0219887804000290}).

\bibitem{LR-85}
M. de Le\'on, P.R. Rodrigues,
{\it Generalized classical mechanics and field theory. A geometrical approach of Lagrangian and Hamiltonian formalisms involving higher order derivatives},
{\sl North-Holland Math. Studies} {\bf 112} 
{\sl Notes Pure Math.} {\bf 102}. 
North-Holland Publishing Co., Amsterdam, 1985.

\bibitem{LMM-2009}
M. de Le\'on, J.C. Marrero, D. Mart\'in de Diego,
``A geometric Hamilton--Jacobi theory for classical field theories'',
in {\sl Variations, Geometry and Physics} 129--140. Nova Sci.Publ., New York, 2009.

\bibitem{LeSaVi2016}
M.~de~Le\'on, M.~Salgado, S.~Vilari\~no,
\emph{Methods of Differential Geometry in Classical Field Theories: $k$-Symplectic and $k$-Cosymplectic Approaches},
World Scientific, Hackensack, 2016.
(\url{http://doi.org/10.1142/9693}).


\bibitem{LR-2022}
J. de Lucas, X. Rivas, 
``Contact Lie systems: theory and applications'', arXiv: 2207.04038 [math-ph] (2022).

\bibitem{EMR-96}
A. Echeverr\'\i a-Enr\'\i quez, M.C. Mu\~noz-Lecanda, N. Rom\'an-Roy,
``Geometry of Lagrangian first-order classical field theories'',
{\sl Fortsch. Phys.} {\bf 44} (1996) 235--280.
(\url{https://doi.org/10.1002/prop.2190440304}).

\bibitem{art:Echeverria_Munoz_Roman98}
A. Echeverr\'\i a-Enr\'\i quez, M.C. Mu\~noz-Lecanda, N. Rom\'an-Roy,
``Multivector fields and connections: Setting Lagrangian equations in field theories'',
{\sl J. Math. Phys.} \textbf{39}(9) (1998) 4578--4603. 
(\url{https://doi.org/10.1063/1.532525}).

\bibitem{EMR-99b}
A. Echeverr\'\i a-Enr\'\i quez, M.C. Mu\~noz-Lecanda, N.
 Rom\'an-Roy, 
``Multivector Field Formulation of Hamiltonian
 Field Theories: Equations and Symmetries'',
 {\sl J. Phys. A: Math. Gen.} {\bf 32}(48) (1999) 8461--8484.
 (\url{https://doi.org/10.1088/0305-4470/32/48/309}).

\bibitem{EMR-00b}
A. Echeverr\'ia-Enr\'iquez, M.C. Mu\~noz-Lecanda, N. Rom\'an-Roy,
``Geometry of multisymplectic Hamiltonian first-order field theories'',
{\sl J. Math. Phys.} {\bf 41}(11)  (2000) 7402--7444.
 (\url{https://doi.org/10.1063/1.1308075}).

\bibitem{EMR-2018}
A. Echeverr\'ia-Enr\'iquez, M.C. Mu\~noz-Lecanda, N. Rom\'an-Roy,
``Remarks on multisymplectic reduction'',
{\sl Rep. Math. Phys.} {\bf 81}(3)  (2018) 415--424.
(\url{https://doi.org/10.1016/S0034-4877(18)30057-0}).

\bibitem{Eh1}
C. Ehresmann,
``Espaces fibr\'es associé\'es'', {\sl C. R. Acad. Sci. Paris} {\bf 213} (1941) 762--764.

\bibitem{Eh2}
C. Ehresmann,
``Sur les espaces fibr\'es associ\'es \`a une vari\'et\'e diffé\'erentiable'', {\sl C. R. Acad. Sci. Paris} {\bf 216} (1943) 628--630. 

\bibitem{Eh3}
C. Ehresmann, J. Feldbau,
``Sur les propri\'et\'es d'homotopie des espaces fibr\'es'',
{\sl C. R. Acad. Sci. Paris} {\bf 212} (1941) 945--948.

\bibitem{Fi2022}
D. Finamore,
``Contact foliations and generalised Weinstein conjectures'', 
arXiv:2202.07622 [math.SG] (2022).

\bibitem{Gc-73}
P.L. Garc\'ia,
``The Poincar\'e--Cartan invariant in the calculus of variations'',
{\sl Symp. Math.} {\bf 14} (1973) 219--246.

\bibitem{GGMRR-2019}
J.~{Gaset}, X.~{Gr\`acia}, M.~{Mu\~noz-Lecanda}, X.~{Rivas},
  N.~{Rom\'an-Roy},
``A contact geometry framework for field theories with dissipation'',
\newblock {\sl Ann. Phys.} {\bf 414} (2020) 168092.
(\url{https://doi.org/10.1016/j.aop.2020.168092}).

\bibitem{GGMRR-2019b}
J.~{Gaset}, X.~{Gr\`acia}, M.~{Mu\~noz-Lecanda}, X.~{Rivas},  N.~{Rom\'an-Roy},
``New contributions to the Hamiltonian and Lagrangian contact formalisms for dissipative mechanical systems and their symmetries'',
{\sl Int. J. Geom. Meth. Mod. Phys.} {\bf 17}(6) (2020) 2050090.
(\url{https://doi.org/10.1142/S0219887820500905}).

\bibitem{GGMRR-2020}
J.~{Gaset}, X.~{Gr\`acia}, M.~{Mu\~noz-Lecanda}, X.~{Rivas}, N.~{Rom\'an-Roy},
``A $k$-contact Lagrangian formulation for nonconservative field theories'',
{\sl Rep. Math. Phys.} {\bf 87}(3) (2021) 347--368.
(\url{https://doi.org/10.1016/S0034-4877(21)00041-0}).

\bibitem{GLMR-2022}
J. Gaset, M. La\'inz, A. Mas, X. Rivas,
``The Herglotz principle for field theories'', 
arXiv:2211.17058 [math-ph] (2022).

\bibitem{GasMar21}
J. Gaset, A. Mar\'in-Salvador, 
``Application of Herglotz's variational principle to electromagnetic systems with dissipation'', 
{\sl Int. J. Geom. Meth. Mod. Phys.} {\bf 19}(supp 10) (2022) 2250156. (\url{https://doi.org/10.1142/S0219887822501560}).

\bibitem{GasMas2022}
J. Gaset, A. Mas, 
``A variational derivation of the field equations of an action-dependent Einstein-Hilbert Lagrangian'', arXiv:2206.13227 [gr-qc] (2022).

\bibitem{GMM-2022}
F. Gay-Balmaz, J.C. Marrero, N. Mart\'inez,
``A new canonnical affine bracket formulationn of Hamiltonian classical field theories of first-order'',
arXiv:2209.08736 [math-ph] (2022).

\bibitem{Geiges2008}
H.~Geiges,
\newblock {\em {An Introduction to Contact Topology}},
\newblock Cambridge University Press, Cambridge, 2008.

\bibitem{Geor-2003}
B. Georgieva, R. Gunther, T. Bodurov,
``Generalized variational principle of Herglotz for several independent variables. First Noether-type theorem'',
{\sl J. Math. Phys.} {\bf 44}, 3911 (2003).
 (\url{https://doi: 10.1063/1.1597419}).

\bibitem{GMS-97}
G. Giachetta, L. Mangiarotti, G. Sardanashvily,
{\it New Lagrangian and Hamiltonian methods in field theory},
 World Scientif\/ic Publishing Co., Inc., River Edge, NJ, 1997.
 (\url{https://doi.org/https://doi.org/10.1142/2199}).

\bibitem{Go-69}
C. Godbillon,
{\it G\'eom\'etrie diff\'erentielle et m\'ecanique analytique},
Hermann, Paris, 1969.

\bibitem{GS-73}
H. Goldschmidt, S. Sternberg,
``The Hamilton--Cartan formalism in the calculus of variations'',
{\sl Ann. Inst. Fourier Grenoble} {\bf 23}(1) (1973) 203--267.

\bibitem{Go-91}
M.J. Gotay, 
``A multisymplectic framework for classical field theory and the calculus of variations. I. Covariant Hamiltonian formalism'', 
in {\it Mechanics, analysis and geometry: 200 years after Lagrange}, 203--235, 
{\sl North-Holland Delta Ser.}, North-Holland, Amsterdam, 1991.

\bibitem{GG-2022}
K. Grabowska, J. Grabowski,
``A novel approach to contact Hamiltonians and contact Hamilton--Jacobi theory'',
arXiv:2207.04484 [math.SG] (2022).

\bibitem{GG-2022b}
K. Grabowska, J. Grabowski,
``Contact geometric mechanics: the Tulczyjew triples'',
arXiv:2209.03154 [math.SG] (2022).

\bibitem{GRR-2022}
X.~{Gr\`acia}, X.~{Rivas}, N.~{Rom\'an-Roy}.
``Skinner--Rusk formalism for $k$-contact systems'',
{\sl J. Geom. Phys.} {\bf 172} (2022) 104429.
(\url{https://doi.org/10.1016/j.geomphys.2021.104429}).

\bibitem{HK-04}
F. H\'elein, J. Kouneiher,
``Covariant Hamiltonian formalism for the calculus of variations with several variables: Lepage--Dedecker versus De Donder--Weyl'',
{\sl Adv. Theor. Math. Phys.} {\bf 8} (2004) 565--601.

\bibitem{He-1930}
G. Herglotz, ``Ber\"uhrungstransformationen'', Lectures at the University of Gottingen, 1930.

\bibitem{Her-1985}
G. Herglotz, 
{\em Vorlesungen \"uber die Mechanik der Kontinua}. 
Teubner-Archiv zur Mathematik~{\bf 3};
Teubner, Leipzig, 1985.

\bibitem{IEMR-2012}
L.A. Ibort, A. Echeverr\'ia-Enr\'iquez, M.C Mu\~noz-Lecanda, N. Rom\'an-Roy,
``Invariant forms and automorphisms of locally homogeneous multisymplectic manifolds'',
{\sl J. Geom. Mech.} {\bf 4}(4) (2012) 397--419.
(\url{https://doi.org/10.3934/jgm.2012.4.397}).

\bibitem{Ka98}
I.V. Kanatchikov, 
``Canonical Structure of Classical Field Theory in the
Polymomentum Phase Space'',
{\sl Rep. Math. Phys.} {\bf 4}(1) (1998) 49--90.
(\url{https://doi.org/10.1016/S0034-4877(98)80182-1}).

\bibitem{Kij1973}
J. Kijowski, 
``A finite-dimensional canonical formalism in the classical field theory'',
{\sl Comm. Math. Phys.} {\bf 30} (1973) 99--128.
(\url{https://doi.org/10.1007/BF01645975}).

\bibitem{Kij1979}
J. Kijowski, W.M. Tulczyjew, 
{\it Symplectic Framework for Field Theories},
{\sl Lect. Notes Phys.} {\bf 170}, Springer-Verlag, Berlin-New York, 1979.
(\url{https://doi.org/10.1007/3-540-09538-1}).

\bibitem{Kholodenko2013}
A.~L. Kholodenko,
\newblock {\em {Applications of Contact Geometry and Topology in Physics}},
\newblock World Scientific, Singapore, 2013.

\bibitem{Kru2002}
O. Krupkova,
``Hamiltonian field theory'',
{\sl J. Geom. Phys.} {\bf 43}(2--3) (2002) 93--132
(\url{https://doi.org/10.1016/S0393-0440(01)00087-0}).

\bibitem{Lainz2018}
M.~La\'inz-Valc{\'{a}}zar, M.~de~Le{\'{o}}n,
\newblock {``Contact Hamiltonian systems''},
\newblock {\sl J. Math. Phys.} {\bf 60}(10) (2019) 102902.
(\url{https://doi.org/10.1063/1.5096475}).

\bibitem{LPAF2018}
M.J. Lazo, J. Paiva, J.T.S. Amaral, G.S.F. Frederico,
``An action principle for action-dependent Lagrangians: Toward an action principle to non-conservative systems'',
{\sl J. Math. Phys.} {\bf 59}(3) (2018) 032902.
(\url{https://doi.org/10.1063/1.5019936}).

\bibitem{Le-42}
T.H.J. Lepage,  
``Champs stationnaires, champs g\'eod\'esiques et formes int\'egrables'',
{\sl Acad. Roy. Belgique. Bull. Cl. Sci.}  (5) 28 (1942) 73--92, 247--265.

\bibitem{LM-87}
P.~Libermann, C.~Marle.
{\it Symplectic Geometry and Analytical Mechanics}.
D. Reidel, Dordrecht, 1987.
(\url{https://doi.org/10.1007/978-94-009-3807-6}).

\bibitem{LIU2018}
Q.~Liu, P.J. Torres, C.~Wang,
\newblock ``Contact Hamiltonian dynamics: variational principles, invariants,
  completeness and periodic behavior'',
{\sl Ann. Phys.} {\bf 395} (2018) 26--44.
(\url{https://doi.org/10.1016/j.aop.2018.04.035}).

\bibitem{AH-94}
J.E. Marsden, T.J.R. Hughes,
{\it Mathematical foundations of elasticity},
Dover Pub. Inc., New York, 1994.

\bibitem{MS-98}
J.E. Marsden, S. Shkoller,
``Multisymplectic geometry, covariant Hamiltonians and nonlinear {\sc pde}s'',
{\sl Comm. Math. Phys.} {\bf 199}(2) (1998) 351--395.
(\url{https://doi.org/10.1007/s00220005050}).

\bibitem{Ma-88}
G. Martin,
``A Darboux theorem for multisymplectic manifolds'',
{\sl Lett. Math. Phys.} {\bf 16}(2) (1988) 133--138.
(\url{https://doi.org/10.1007/BF00402020}).

\bibitem{Mo-2008}
B. Montano, 
``Integral submanifolds of $r$-contact manifolds'', 
{\sl Demonstr. Math.} {\bf 41}(1) (2008) 189--202.
(\url{https://doi.org/10.1515/dema-2013-0054}).

\bibitem{book:Narasimhan}
R. {Narasimhan}, 
\emph{Analysis on Real and Complex Manifolds} ($3rd$ ed.), Elsevier, North Holland, 1985.

\bibitem{Pau2002}
C. Paufler, H. R\"omer,
``Geometry of Hamiltonian $n$-vector fields in multisymplectic field theory'',
{\sl J. Geom. Phys} {\bf 44}(1) (2002)
52--69.
(\url{https://doi.org/10.1016/S0393-0440(02)00031-1}).

\bibitem{Lazo2022}
J. Paiva, M.J. Lazo, V.T. Zanchin,
``Generalized nonconservative gravitational field equations from Herglotz action principle'',
{\sl Phys. Rev. D} {\bf 105}(12) (2022)
124023.
(\url{https://doi.org/10.1103/PhysRevD.105.124023}).

\bibitem{Ri-2022}
X. Rivas, 
``Nonautonomous $k$-contact field theories'', arXiv: 2210.09166 [math-ph] (2022).

\bibitem{RiTo-2022}
X. Rivas, D. Torres,
``Lagrangian--Hamiltonian formalism for cocontact systems'', {\sl J. Geom. Mech.} {\bf 15}(1) (2022) 1--26. (\url{https://doi.org/10.3934/jgm.2023001}).

\bibitem{art:Roman09}
N. Rom\'{a}n-Roy,
``Multisymplectic Lagrangian and Hamiltonian formalisms of classical field theories'',
\textsl{Symm. Integ. Geom. Meth. Appl. (SIGMA)} 
\textbf{5} (2009) 100. 
 (\url{https://doi.org/10.3842/SIGMA.2009.100}).

\bibitem{book:Saunders89}
D.J. {Saunders}, \emph{The geometry of jet bundles}, London Math.
  Soc., {\sl Lect. Notes Ser.} {\bf 142}, Cambridge Univ. Press,
  Cambridge, New York, 1989.
 (\url{https://doi.org/10.1017/CBO9780511526411}).

\bibitem{TV-2008}
A. Tomassini, L. Vezzoni, 
``Contact Calabi--Yau manifolds and special Legendrian submanifolds'',
{\sl Osaka J. Math.} {\bf 45}(1) (2008) 127--147.

\bibitem{Vi-2015}
L. Vitagliano,
``$L_\infty$-algebras from multicontact geometry'',
{\sl Diff. Geom. Appl.} {\bf 59} (2015) 147--165.
(\url{https://doi.org/10.1016/j.difgeo.2015.01.006}).

\bibitem{We-35}
H. Weyl, 
``Geodesic fields in the calculus of variation for multiple integrals'', 
{\sl Ann. Math.} {\bf 36}(3) (1935) 607--629.
(\url{https://doi.org/10.2307/1968645}).

}
\end{thebibliography}

\end{document}